\newtheorem{defi}{Definition}
\newtheorem{theo}{Theorem}
\newcommand{\la}{\lambda}
\newcommand{\li}{\lambda_i}
\begin{document}
\title{Two-Channel Critically-Sampled Graph Filter Banks\\With Spectral Domain Sampling}
\author{Akie~Sakiyama, Kana~Watanabe, Yuichi~Tanaka, and~Antonio~Ortega
\thanks{This work was partially funded by JST PRESTO under grant JPMJPR1656, JSPS KAKENHI under Grant JP16J04362, NSF under grants CCF-1410009 and CCF-1527874, and Institute of Global Innovation Research in Tokyo University of Agriculture and Technology.}
\thanks{\emph{A. Sakiyama and K. Watanabe contributed equally to this work.}}
\thanks{A. Sakiyama, K. Watanabe, and Y. Tanaka are with the Graduate School of BASE, Tokyo University of Agriculture and Technology, Koganei, Tokyo, 184-8588 Japan. Y. Tanaka is also with PRESTO, Japan Science and Technology Agency, Kawaguchi, Saitama, 332-0012, Japan (email: nbkn@msp-lab.org; sakiyama@msp-lab.org; ytnk@cc.tuat.ac.jp).}
\thanks{A. Ortega is with the University of Southern California, Los Angeles, CA 90089 USA (email: antonio.ortega@sipi.usc.edu)  and also a visiting researcher at the Institute of Global Innovation Research, Tokyo University of Agriculture of Technology.}
}

\markboth{}{}
\maketitle

\begin{abstract}
We propose two-channel critically-sampled filter banks for signals on undirected graphs that utilize spectral domain sampling. Unlike conventional approaches based on vertex domain sampling, our transforms have the following desirable properties: 1) perfect reconstruction regardless of the characteristics of the underlying graphs and graph variation operators and 2) a symmetric structure; i.e., both analysis and synthesis filter banks are built using similar building blocks. Along with the structure of the filter banks, this paper also proves the general criterion for perfect reconstruction and theoretically shows that the vertex and spectral domain sampling coincide for a special case.
The effectiveness of our approach is evaluated by comparing its performance in nonlinear approximation and denoising with various conventional graph transforms.
\end{abstract}

\begin{IEEEkeywords}
Graph signal processing, spectral graph wavelet, spectral graph filter bank, spectral domain sampling
\end{IEEEkeywords}

\IEEEpeerreviewmaketitle

\section{Introduction}
\subsection{Motivation}
Graph signal processing focuses on \textit{graph signals}, discrete signals defined on the vertices of a graph \cite{Shuman2013, Sandry2013, Ortega2018}. Graph signals can represent a broad range of irregularly structured data, such as signals on brain, sensor, social, and traffic networks, point cloud attributes, and images/videos. Developing sparse representations for these signals by using appropriate bases or frames is important, because these signals are often high-dimensional. Promising applications for such sparse representations of graph signals include feature extraction \cite{Leonar2013,Miller2015,Shahid2016}, denoising \cite{Onuki2016,Pang2017,Yamamo2016, Cheung2018}, compression \cite{Shuman2016,Hu2015,Liu2017,Cheung2018}, and others in many different areas \cite{Zhang2014,Ono2015,Higash2016,Bronst2017,Segarr2016,Rustam2013}.

As is the case with classical signal processing, multiscale transforms or dictionaries are important tools for achieving sparse representations of graph signals. Sampling strategies are crucial for controlling the level of redundancy in the graph signal in a multiscale signal representation. Undecimated transforms \cite{Hammon2011,Sakiya2016a,Shuman2015} require a significant storage overhead for the transformed coefficients. Other approaches, which can achieve different trade-offs in terms of redundancy, performance, computation cost, and storage, are oversampled \cite{Tay2015,Sakiya2014a,Tanaka2014a,Sakiya2016a}, critically sampled (CS) \cite{Narang2012,Narang2013,Tay2017,Tay2017a,Trembl2016,Jin2017,Sakiya2016a,Ekamba2015,Teke2016b} and undersampled transforms \cite{Sakiya2016b}.

For time domain signals, downsampling by a factor of two followed by upsampling by a factor of two corresponds to replacing every other sample by zero. In the frequency domain, it is well known that the resulting signal has two components; the original frequency content of the signal and an aliasing term (a modulated version of the original spectrum) \cite{Vaidya1993, Oppenh2009, Vetter2014}. In other words, we can perform the sampling in the frequency (DFT) domain that yields the same signal as the downsampled-then-upsampled signal in the time domain. This can be done by making the shifted replicas of the original spectrum with the period $2\pi$.

Sampling of graph signals can also be intuitively defined in the vertex domain \cite{Pesens2008, Chen2015, Wang2015, Anis2016, Tsitsv2016}. Downsampling-then-upsampling in the vertex domain sampling means replacing some of the values on the graph vertices by zero. However, in contrast to the classical case, the resulting signal in the graph frequency domain generally has a spectrum that cannot be separated into main and aliasing components even when the signal is bandlimited. This is the main difference between sampling in the time domain and that in the vertex domain. Sampling in the graph frequency domain \cite{Tanaka2018} has  been recently proposed as an extension to graph signal sampling of frequency domain sampling developed in classical signal processing.

The main contribution of this paper is the design of CS graph filter banks (GFBs) using sampling in the graph frequency domain \cite{Tanaka2018}. This paper is a significantly extended version of our preliminary study \cite{Watana2018}, which first introduced this idea. With respect to \cite{Watana2018} we have added theoretical results with proofs as well as 
much more comprehensive experimental results. The main advantages of our proposed approach are\footnote{See Section \ref{sec:related} for a more detailed discussion of differences between the proposed CS GFBs and existing designs.}:
\begin{itemize}
 \item Perfect reconstruction is guaranteed for {\em any graph and for any variation operator} as long as the operator is diagonalizable and has real eigenvalues.
 \item The (frequency domain) sampling that leads to perfect reconstruction is {\em unique}, while the analysis and synthesis operations have the same complexity and a {\em matrix inversion is not required to compute the reconstruction operator}.
\end{itemize}
Moreover, we show that the GFBs obtained in the vertex domain and those obtained using spectral domain sampling are identical in some special cases. We also assess their performance through experiments on denoising and nonlinear approximation.

The rest of the paper is organized as follows. We review related work in Section~\ref{sec:related}. Sampling methods in the vertex and spectral domains are introduced in Section \ref{sec:II}. Section \ref{sec:III} reviews the conventional CS GFBs and graph wavelet transforms (GWTs). The proposed CS GFBs are presented in Section \ref{sec:IV} along with the octave-band structure and polyphase representation. The relationship between the vertex and spectral domain sampling approaches is studied in Section \ref{sec:V}. Section \ref{sec:VI} presents a few potential applications of the proposed CS GFBs, together with comparisons with the conventional methods.
Finally, Section \ref{sec:VII} is the conclusion.

\subsection{Notation}
A graph $\mathcal{G}=( \mathcal{V},\mathcal{E} )$ consists of a set of edges $\mathcal{E}$ and vertices $\mathcal{V}$, where the number of vertices is $N = |\mathcal{V}|$. We consider undirected graphs without self-loops and nonnegative edge weights. A graph signal is a function $f: \mathcal{V} \rightarrow \mathbb{R}$, and it can be represented in vector form $\mathbf{f} \in \mathbb{R}^N$, whose $n$th sample $f[n]$ is regarded as a signal value on the $n$th vertex of the graph.

$\mathbf{A} \in \mathbb{R}^{N \times N}$ is an adjacency matrix of the graph whose $(i,j)$th-element $a_{ij}$ represents the weight of the edge between the $i$th and $j$th vertices. $\mathbf{D} \in \mathbb{R}^{N \times N}$ is a diagonal degree matrix whose elements are defined as $d_{ii} = \sum_{j} a_{ij}$. The combinatorial and symmetric normalized graph Laplacians are defined as $\mathbf{L} = \mathbf{D}- \mathbf{A}$ and $\bm{\mathcal{L}} = \mathbf{D}^{-1/2}\mathbf{L}\mathbf{D}^{-1/2}$, respectively. Since a graph Laplacian is a real symmetric matrix, the eigendecomposition of $\mathbf{L}$ (or $\bm{\mathcal{L}}$) can always be represented as $\mathbf{L} = \mathbf{U \Lambda U}^{\top}$, where $\mathbf{U} = \left[ \mathbf{u}_{0},\mathbf{u}_{1},\ldots ,\mathbf{u}_{N-1} \right]$ is an eigenvector matrix, $\mathbf{\Lambda} = \text{diag} \left( \lambda_0, \lambda_1, \ldots , \lambda_{N-1} \right)$ is an eigenvalue matrix having eigenvalues $\lambda_{i} $ ($i = 0,1, \cdots , N-1$) of $\mathbf{L}$ as diagonal elements, and $\cdot^{\top}$ represents the transpose of a matrix.

For a symmetric normalized graph Laplacian, its eigenvalues are bounded in $\lambda_i \in [0,2]$. In addition, the maximum eigenvalue becomes $\lambda_{\max} = 2$ and the eigenvalues are distributed symmetrically with respect to $\lambda = 1$ only for the bipartite case \cite{Chung1997}.

The graph Fourier transform (GFT) is defined as 
\begin{equation}
\widetilde{f}[i] = \langle \mathbf{u}_i, \mathbf{f}  \rangle = \sum^{N-1}_{n=0} u_i[n]f[n],
\end{equation}
while the other definitions of the GFT, such as those in \cite{Deri2017, Giraul2018}, can be used as long as the GFT matrix is nonsingular.

\begin{table*}[tp]
\caption{List of Critically-Sampled Graph Wavelets and Filter Banks. Properties Are Described at the Bottom of the Table.}\label{tb:list_gwt}
\centering
\begin{threeparttable}
\begin{tabular}{l||c|c|c|c|c|c|c|c}
\hline
GFBs / Properties& Analysis\tnote{1} & Synthesis\tnote{2} & Filter\tnote{3} & Graphs\tnote{4}  &VO\tnote{5} &Orth.\tnote{6} & Comp.\tnote{7} & PR\tnote{8} \\\hline
GraphQMF \cite{Narang2012}& Filt. $\rightarrow$ VS  & VS $\rightarrow$ Filt.& S & Bipartite&SNL& O& & \checkmark\\
GraphBior \cite{Narang2013}&Filt. $\rightarrow$ VS & VS $\rightarrow$ Filt.&S & Bipartite&SNL& B& \checkmark &\checkmark\\
GraphFC \cite{Sakiya2016a}&Filt. $\rightarrow$ VS & VS $\rightarrow$ Filt.&S &Bipartite&SNL& O, B& & \checkmark\\
Nearorth \cite{Tay2017a}&Filt. $\rightarrow$ VS & VS $\rightarrow$ Filt. &S &Bipartite&SNL& B& \checkmark & \checkmark\\
$\Omega$-structure \cite{Teke2016b}&Filt. $\rightarrow$ VS & VS $\rightarrow$ Filt.& S &$\Omega$-structure\tnote{9} & Adjacency &B& & \checkmark\\
Generalized Spline \cite{Ekamba2015} & Filt. $\rightarrow$ VS & Interpolation & S & Any & NAD & N/A & \checkmark & \checkmark\\
Lifting \cite{Narang2009} & Filt. $\rightarrow$ VS & VS $\rightarrow$ Filt. & V  & Bipartite & Any & B & \checkmark & \checkmark\\
Wavelets on balanced tree \cite{Gavish2010} & Filt. $\rightarrow$ VS & VS $\rightarrow$ Filt. & V & Tree & Any & B & \checkmark & \checkmark\\
SubGFB \cite{Trembl2016}&Filt. $\rightarrow$ VS & VS $\rightarrow$ Filt.&V & Any& LoS & B &\checkmark & \checkmark\\
Qualified sampling \cite{Chen2015} & Filt. $\rightarrow$ VS & Interpolation & S & Any & Any & N/A & & \checkmark\\
Uniqueness set \cite{Jin2017}&Filt. $\rightarrow$ VS & Interpolation &S &Any&Any &N/A& & \checkmark\\
Sampling set selection \cite{Anis2017} & Filt. $\rightarrow$ VS & VS $\rightarrow$ Filt. & S & Any & SNL & O, B & \checkmark & *\tnote{10}\\
\textbf{GraphSS (Proposed)}& Filt. $\rightarrow$ SS & SS $\rightarrow$ Filt. &S & Any&Any  &O, B& & \checkmark\\\hline
\end{tabular}
\begin{tablenotes}
\item[1] Building blocks for the analysis transform. Filt. $\rightarrow$ VS: Filtering then vertex domain downsampling. Filt. $\rightarrow$ SS: Filtering then spectral domain downsampling.
\item[2] Building blocks for the synthesis transform. VS $\rightarrow$ Filt.: Vertex domain upsampling then filtering. SS $\rightarrow$ Filt.: Spectral domain upsampling then filtering. Interpolation: Interpolation operator that cannot be separated into VS $\rightarrow$ Filt.
\item[3] Domain for filter design. S: Graph frequency domain. V: Vertex domain.
\item[4] Applicable graphs.
\item[5] Applicable variation operators. SNL: symmetric normalized graph Laplacian. NAD: normalized adjacency matrix. LoS: Laplacian on subgraphs.
\item[6] Orthogonality. O: Orthogonal. B: Biorthogonal.
\item[7] Compact support.
\item[8] Perfect reconstruction property.
\item[9] The constraint on the $\Omega$-structure is described in \cite{Teke2016b} that includes $M$-block cyclic graphs. Precisely, this constraint can be relaxed by using a \emph{similarity transformation} \cite{Teke2016b}. However, it changes the graph Fourier bases and requires additional computations.
\item[10] Perfect reconstruction is possible only for bipartite graphs.
\end{tablenotes}
\end{threeparttable}
\end{table*}%

\section{Related Work}
\label{sec:related}
Several CS GFBs/GWTs using vertex domain sampling have been proposed for signals on bipartite graphs. They can be used on non-bipartite graphs by dividing the original graph into several bipartite graphs and then using a ``multidimensional" decomposition. Filter design methods for this class of CS GWT include: GraphQMF \cite{Narang2012}, which utilizes quadrature mirror filters; GraphBior \cite{Narang2013} a biorthogonal and polynomial filter solution with spectral factorization; a frequency conversion method (GraphFC) \cite{Sakiya2016a} that transforms time domain filters into graph spectral filters; near-orthogonal polynomial filter design methods (Nearorth) proposed in \cite{Tay2017,Tay2017a}. Oversampled graph filter banks were introduced in \cite{Tanaka2014a,Sakiya2014a} as an extension of CS GWTs for bipartite graphs.

The above methods are for designing filters in the graph frequency domain. There are also CS graph filter banks whose filters are designed in the vertex domain. For example, a lifting-based transform \cite{Narang2009} divides the original graph into even and odd-indexed vertices and performs vertex domain filtering. The subgraph-based biorthogonal filter bank (SubGFB) \cite{Trembl2016} decomposes the original graph into several partitions. Wavelets on a balanced tree \cite{Gavish2010} provide CS perfect reconstruction transforms using vertex domain filtering. There is a CS graph filter bank for a specific class of graphs, called $\Omega$-structures \cite{Teke2016b}\footnote{For more general graphs, we need to redesign bases for a new graph Fourier transform.}. However, all of these methods require simplifying the graph, i.e., eliminating some of the edges in the original graph, in order to ensure critical sampling and invertibility.

CS graph filter banks can also be designed with careful vertex domain sampling. An $M$-channel CS graph filter bank \cite{Jin2017} was designed that selects sampled vertices for each subband in order to satisfy the \textit{uniqueness set} condition. In the context of sampling theory of graph signals, bandlimiting the input graph signal followed by vertex domain sampling has been proposed as a graph filter bank \cite{Chen2015}. However, such approaches have several limitations. First, they have to select an appropriate sampling set for perfect reconstruction. In other words, arbitrarily selected sampling sets do not generally lead to a perfect reconstruction transform. Second, the sampling set is not unique; different sampling sets significantly affect the overall performance of the graph transforms in applications. Third, many approaches are perfect reconstruction only if ideal filters are used in the analysis transform. That means there is no flexibility in the design of the filter. However, non-ideal filters are sometimes preferred when the eigenvalue distribution of the variation operator is irregular (described in Section \ref{sec:idealvsnonideal}). Fourth, they need to calculate the reconstruction operator for the synthesis side that requires a matrix inversion \cite{Jin2017, Chen2015}. While the spline-based graph wavelet \cite{Ekamba2015} is CS and guarantees perfect reconstruction, with a relatively flexible downsampling pattern, it still requires a matrix inversion for the synthesis transform. In fact, a perfect reconstruction transform with \emph{polynomial} analysis filters can only be obtained if the synthesis transform is the matrix inverse of the analysis transform matrix, with the sole exception of the bipartite case \cite{Anis2017}. This leads to two complex computations: 1) computing the matrix inverse and 2) multiplying the frequency domain representation by this (dense) matrix to obtain the reconstructed signal (instead of using polynomial filters for reconstruction).

The performance of CS GWTs with vertex domain sampling varies according to the graph reduction method used. Graph coloring \cite{Narang2012,Aspval1984,Narang2013,Harary1977}, Kron reduction \cite{Dorfle2013,Shuman2016}, maximum spanning trees \cite{Nguyen2015}, weighted max-cut \cite{Narang2010}, and graph coarsening using algebraic distance \cite{Ron2011} are examples of the various graph reduction methods.

The properties of the existing and proposed GFBs are summarized in Table \ref{tb:list_gwt}. It should be emphasized that all of the existing approaches have limitations on their design, e.g., eligible graphs/variation operators, sampling set guaranteeing perfect reconstruction, or filter design. Our approach overcomes the limitations by employing a novel sampling in the graph frequency domain, and it is the only approach that has all of the following features: i) spectral domain filtering, ii) orthogonality, iii) perfect reconstruction, and iv) applicability to any graph.

\begin{figure*}[tp]
\centering
\subfigure[][(I) Original graph signal. 
  (II) Vertex domain downsampling (GD1). 
  (III) Spectral domain downsampling (GD2).]
  {\includegraphics[width = 0.45\linewidth]{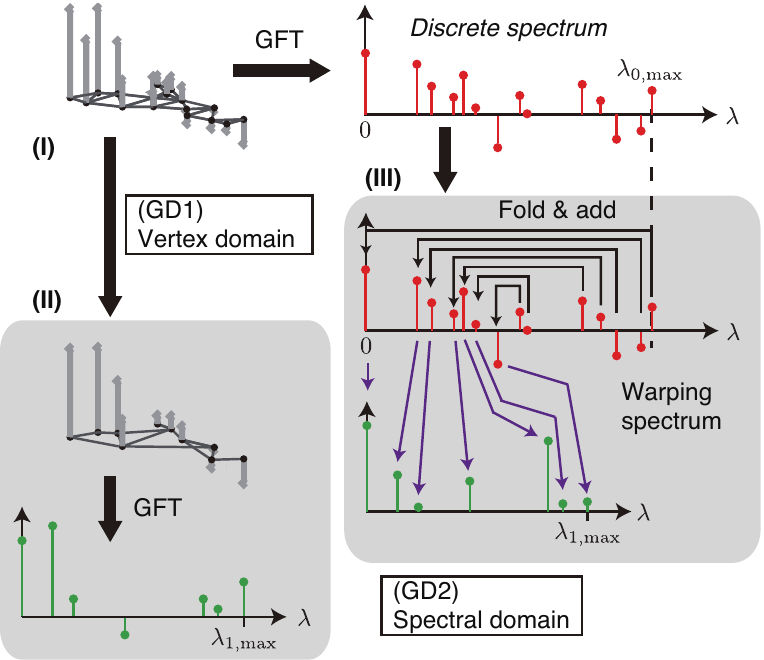}}\quad
\subfigure[][(I) Original graph signal. 
  (II) Vertex domain upsampling (GU1). 
  (III) Spectral domain upsampling (GU2).]
 {\includegraphics[width = 0.45\linewidth]{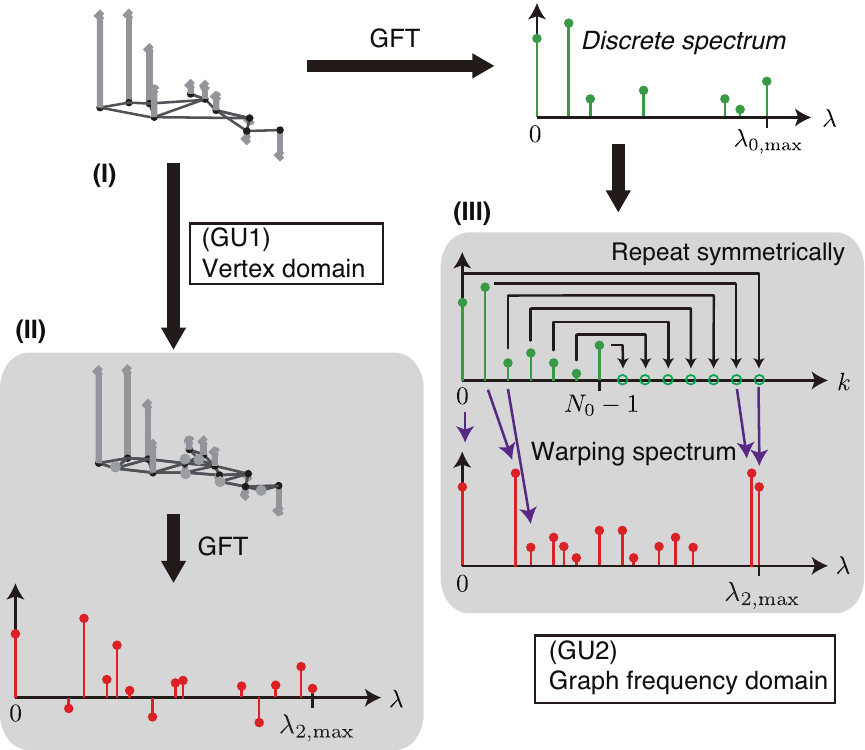}}
\caption{Sampling of graph signals. (a) Downsampling. (b) Upsampling.}
\label{fig:GD2}
\end{figure*}

\section{Sampling of Graph Signals}\label{sec:II}
This section describes the sampling methods of graph signals in the vertex and graph frequency domains.

\subsection{Sampling in Vertex Domain}
The conventional and widely used method for sampling graph signals in the vertex domain \cite{Pesens2008, Wang2015, Anis2016, Tsitsv2016} is defined as follows:
\begin{defi}\label{def:GD1}
(Downsampling of graph signals in vertex domain). Let $\mathcal{G}_0=(\mathcal{V}_0, \mathcal{E}_0)$ and $\mathcal{G}_1=(\mathcal{V}_1, \mathcal{E}_1)$ be the original and reduced-size graphs, respectively, where every vertex in $\mathcal{G}_1$ has a one-to-one correspondence to one of the vertices in $\mathcal{G}_0$. The original signal is $\mathbf{f} \in \mathbb{R}^{|\mathcal{V}_0|}$. In the vertex domain, downsampling of $\mathbf{f}$ to $\mathbf{f}_d \in \mathbb{R}^{|\mathcal{V}_1|}$ is defined as follows.
\end{defi}
\begin{itemize}
 \setlength{\leftskip}{2.7ex}
 \item[(GD1)] Keeping samples in $\mathcal{V}_1$.
 \begin{equation}
 \label{eqn:gd1}
 f_d [n] = f[n'] \quad \text{if} \; v_{0,n'} \in \mathcal{V}_0 \text{ corresponds to } v_{1,n} \in \mathcal{V}_1.
 \end{equation}
\end{itemize}
This is illustrated in Fig. \ref{fig:GD2}(II).

\begin{defi}
(Upsampling of graph signals in vertex domain). $\mathcal{G}_0$ and $\mathcal{G}_1$ are the same as in Definition \ref{def:GD1}. The original signal at this time is $\mathbf{f} \in \mathbb{R}^{|\mathcal{V}_1|}$, and its sample is associated with $\mathcal{G}_1$. Upsampling in the vertex domain, i.e., mapping from $\mathbf{f}$ to $\mathbf{f}_u \in \mathbb{R}^{|\mathcal{V}_0|}$, is defined as follows.
\end{defi}
\begin{itemize}
 \setlength{\leftskip}{2.7ex}
 \item[(GU1)] Placing samples on $\mathcal{V}_1$ into the corresponding vertices in $\mathcal{G}_0$.
 \begin{equation}\label{eqn:gu1}
 f_u [n] =
 \begin{cases}
 f[n'] & \text{if} \; v_{n'} \in \mathcal{V}_1 \text{ corresponds to } v_n \in \mathcal{V}_0 \\
 0  & \text{otherwise.}
 \end{cases}
\end{equation}
\end{itemize}
This is illustrated in Fig. \ref{fig:GD2}(II).

\subsection{Sampling in Graph Frequency Domain}\label{subsec:spectraldomainsamp}
Next, we describe sampling of graph signals defined in the graph frequency domain \cite{Tanaka2018}. Note that there are a number of slightly different definitions in the literature. Please refer to \cite{Tanaka2018} for other definitions besides the ones used here.

\begin{defi}
(Downsampling of graph signals in graph frequency domain). Let $\mathbf{L}_0 \in \mathbb{R}^{N \times N}$ and $\mathbf{L}_1 \in \mathbb{R}^{N/2 \times N/2}$ respectively be graph Laplacians for the original and reduced-size graphs, respectively, and assume that their eigendecompositions are given as $\mathbf{L}_0 = \mathbf{U}_0 \mathbf{\Lambda}_0 \mathbf{U}^{\top}_{0}$ and $\mathbf{L}_1 = \mathbf{U}_1 \mathbf{\Lambda} _1 \mathbf{U}^{\top}_{1}$, where $\mathbf{\Lambda}_\ell = \text{\emph{diag}}(\lambda_{\ell,0}, \lambda_{\ell,1}, \ldots , \lambda_{\ell, \max})$. The downsampled graph signal in the graph frequency domain $\widetilde{\mathbf{f}}_d \in \mathbb{R}^{N/2}$ is defined as follows.
\end{defi}
\begin{itemize}
 \setlength{\leftskip}{2.7ex}
 \item[(GD2)] $\widetilde{\mathbf{f}}$, i.e., the signal in the graph frequency domain, is evenly divided by $2$. Then the second portion is flipped and summed with the first one.
 \begin{equation}
\widetilde{f}_d[i] = \widetilde{f}[i]+\widetilde{f}[N-i-1],
 \end{equation}
where $i=0, \ldots , N/2-1$. The above equation is easily represented in matrix form:
 \begin{equation}
 \mathbf{f}_d = \mathbf{U}_1 \widetilde{\mathbf{S}}_d \mathbf{U}^{\top}_{0} \mathbf{f},
 \end{equation}
where $\widetilde{\mathbf{S}}_d = \left[ \mathbf{I}_{N/2}\ \mathbf{J}_{N/2} \right]$, in which $\mathbf{I}$ and $\mathbf{J}$ are the identity and counter-identity matrices, respectively.
This downsampling strategy is illustrated in Fig. \ref{fig:GD2}(III).
\end{itemize}

\begin{defi}
(Upsampling of graph signals in graph frequency domain). Let $\mathbf{L}_{0} \in \mathbb{R}^{N \times N}$ and $\mathbf{L}_2 \in \mathbb{R}^{2N \times 2N}$ be the graph Laplacians for the original and increased-size graphs, respectively. The upsampled graph signal in the graph frequency domain $\widetilde{\mathbf{f}}_u \in \mathbb{R}^{2N}$ is defined as follows.
\end{defi}
\begin{itemize}
 \setlength{\leftskip}{2.7ex}
 \item[(GU2)] Repeating the original and flipped spectra alternatively.
 \begin{equation}
 \widetilde{f}_u [i] =
 \begin{cases}
 \widetilde{f}[i] & i= 0,\ldots , N-1 \\
 \widetilde{f}[2N-i-1] & i=N, \ldots , 2N-1.
 \end{cases}
 \end{equation}
The above equation is easily represented in matrix form:
 \begin{equation}
 \mathbf{f}_u = \mathbf{U}_2 \widetilde{\mathbf{S}}_u \mathbf{U}^{\top}_{0} \mathbf{f},
 \end{equation}
where $\widetilde{\mathbf{S}}_u = \left[ \mathbf{I}_{N}\ \mathbf{J}_{N} \right]^{\top}$ and $\mathbf{U}_2$ is the eigenvector matrix of $\mathbf{L}_2$.
 \end{itemize}
This upsampling strategy is illustrated in Fig. \ref{fig:GD2}(III).

It is worth noting that, when sampling a signal in the graph frequency domain, in general we do not have a simple expression for the corresponding signal in the vertex domain, with the exception of bipartite graphs (see Section \ref{sec:V}).

\begin{figure}[tp]
\centering
\subfigure[][CS GFB with vertex domain sampling.]
  {\includegraphics[width = \linewidth]{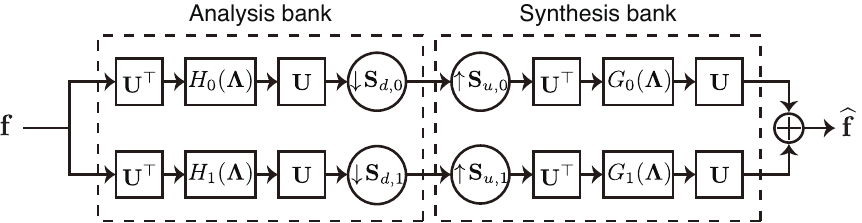}}\\
\subfigure[][CS GFB with spectral domain sampling.]
 {\includegraphics[width = \linewidth]{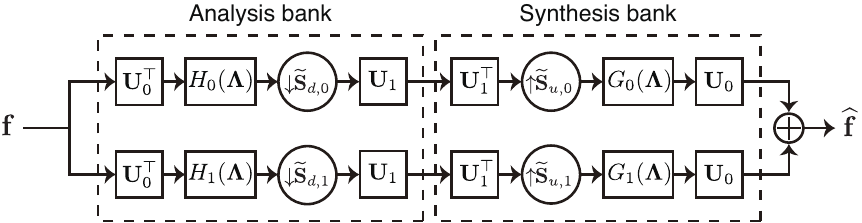}}
\caption{Two-channel CS GFBs.}
\label{fig:CSSGWT}
\end{figure}

\section{Two-Channel CS GFBs with Vertex Domain Sampling}\label{sec:III}

\subsection{Framework and Perfect Reconstruction Condition}
The most popular CS GFBs are designed for bipartite graphs \cite{Narang2012,Narang2013}. They are perfect reconstruction if the underlying graph is bipartite and the variation operator is a symmetric normalized graph Laplacian or normalized random walk graph Laplacian. Non-bipartite graphs should be simplified to bipartite ones before transformation by the CS GFBs to guarantee the perfect reconstruction condition.

Fig. \ref{fig:CSSGWT}(a) illustrates the entire transformation for one bipartite graph. In the figure, $\mathbf{H}_k := \mathbf{U}H_k(\mathbf{\Lambda})\mathbf{U}^{\top}$ is the $k$th filter in the analysis filter bank and $\mathbf{G}_k := \mathbf{U}G_k(\mathbf{\Lambda})\mathbf{U}^{\top}$ is the $k$th filter in the synthesis filter bank, in which
\begin{equation}\label{eqn:graphspectralfilters}
\begin{split}
H_k(\mathbf{\Lambda}) &= \text{diag} (H_k(\lambda_0), H_k(\lambda_1) , \ldots , H_k(\lambda_{N-1})) \\
G_k(\mathbf{\Lambda}) &= \text{diag} (G_k(\lambda_0), G_k(\lambda_1) , \ldots , G_k(\lambda_{N-1})).
\end{split}
\end{equation} 

Let $\mathcal{B}=( \mathcal{L}, \mathcal{H},\mathcal{E})$ be a bipartite graph only having edges between vertex sets $\mathcal{L}$ and $\mathcal{H}$. The number of samples in each channel is determined on the basis of the graph-coloring result. Down- and upsampling in the vertex domain for $\mathcal{B}$, represented in \eqref{eqn:gd1} and \eqref{eqn:gu1}, is defined in matrix notation as follows:
\begin{equation}\label{eq:samp2}
\begin{split}
&\mathbf{S}_{d,0} =\mathbf{I}_{\mathcal{L}}\in \{ 0,1 \}^{|\mathcal{L}| \times N}, \quad 
\mathbf{S}_{u,0} = \mathbf{S}^{\top}_{d,0} \\
&\mathbf{S}_{d,1} = \mathbf{I}_{\mathcal{H}} \in \{ 0,1 \}^{|\mathcal{H}| \times N}, \quad 
\mathbf{S}_{u,1} = \mathbf{S}^{\top}_{d,1},
\end{split}
\end{equation}
where $\mathbf{I}_{\mathcal{L}}$ and $\mathbf{I}_{\mathcal{H}}$ are submatrices of $\mathbf{I}_N$ whose rows correspond to the indices of $\mathcal{L}$ and $\mathcal{H}$, respectively. 
That is, the sampled signal can be represented as $\mathbf{f}_{d,0} = \mathbf{S}_{d,0}\mathbf{f}$ and so on.
The two-channel GFB shown in Fig.  \ref{fig:CSSGWT}(a) is designed to satisfy the following perfect reconstruction condition.
\begin{equation}\label{eq:t}
\mathbf{T}_v= \mathbf{G}_0 \mathbf{S}_{u,0}\mathbf{S}_{d,0} \mathbf{H}_0 
+ \mathbf{G}_1 \mathbf{S}_{u,1}\mathbf{S}_{d,1} \mathbf{H}_1 
= c^{2}\mathbf{I}_N,
\end{equation}
where $c \in \mathbb{R}$. \eqref{eq:t} is further represented as the condition for spectral graph filters as follows.
\begin{align}
G_0(\lambda)H_0(\lambda) + G_1(\lambda)H_1(\lambda) &= c^{2}\label{eq:pr2-1}\\
G_0(\lambda)H_0(2-\lambda) - G_1(\lambda)H_1(2-\lambda) &= 0.\label{eq:pr2-2}
\end{align}

\subsection{Filter Design}
Several CS GFBs that satisfy the above perfect reconstruction condition have been proposed, together with some filter designs \cite{Sakiya2016a,Narang2012,Narang2013,Sakiya2014a,Tanaka2014a}.

\subsubsection{GraphQMF \cite{Narang2012}}
GraphQMF is an orthogonal solution designed from one spectral kernel $H_0(\lambda)$. The remaining filters are defined as follows.
\begin{equation}
\begin{split}
&H_1(\lambda) = H_0(2-\lambda) \\
&G_0(\lambda) = H_0(\lambda) \\
&G_1(\lambda) = H_1(\lambda) = H_0(2-\lambda).
\end{split}
\end{equation}
$H_{0}(\lambda)$ has to satisfy the following condition to ensure perfect reconstruction:
\begin{equation}
H^{2}_{0}(\lambda) + H^{2}_{0}(2-\lambda) = c^2.
\end{equation}
Fig. \ref{fig:CSSGWTs_existing}(a) shows its graph spectral characteristics with the Meyer wavelet kernel \cite{Narang2012}.

\subsubsection{GraphBior \cite{Narang2013}}
GraphBior, which is a biorthogonal CS GWT, is designed to satisfy
\begin{equation}
H_1(\lambda) = G_0 (2-\lambda), \; G_1(\lambda) = H_0(2-\lambda).
\end{equation}
This leads to
\begin{equation}
H_0(\lambda)G_0(\lambda) + H_0(2-\lambda)G_0(2-\lambda) = 2.
\end{equation}
A low-pass half-band product filter $P(\lambda) = H_0(\lambda)G_0(\lambda)$ is designed first; then $H_0(\lambda)$ and $G_0(\lambda)$ are obtained via spectral factorization similar to the Cohen-Daubechies-Feauveau (CDF) biorthogonal wavelet transform in classical signal processing \cite{Cohen1992}. The analysis filter characteristics are shown in Fig. \ref{fig:CSSGWTs_existing}(b). 

\subsubsection{GraphFC \cite{Sakiya2016a}}
A method has been proposed for converting time domain filters $H(\omega)$ into graph spectral filters $H(\la)$ through a frequency mapping from $\omega \in [ 0, \pi ]$ to $\lambda \in [0, \lambda_{\max}]$ \cite{Sakiya2016a}. In this approach, the perfect reconstruction condition \eqref{eq:pr2-1} and \eqref{eq:pr2-2} is always satisfied as long as the set of time domain filters are perfect reconstruction (in the time domain). The analysis filter characteristics based on the CDF 9/7 filters are shown in Fig. \ref{fig:CSSGWTs_existing}(c).


\begin{figure}[tp]
\centering
\subfigure[][GraphQMF]{\includegraphics[width = 0.32\linewidth]{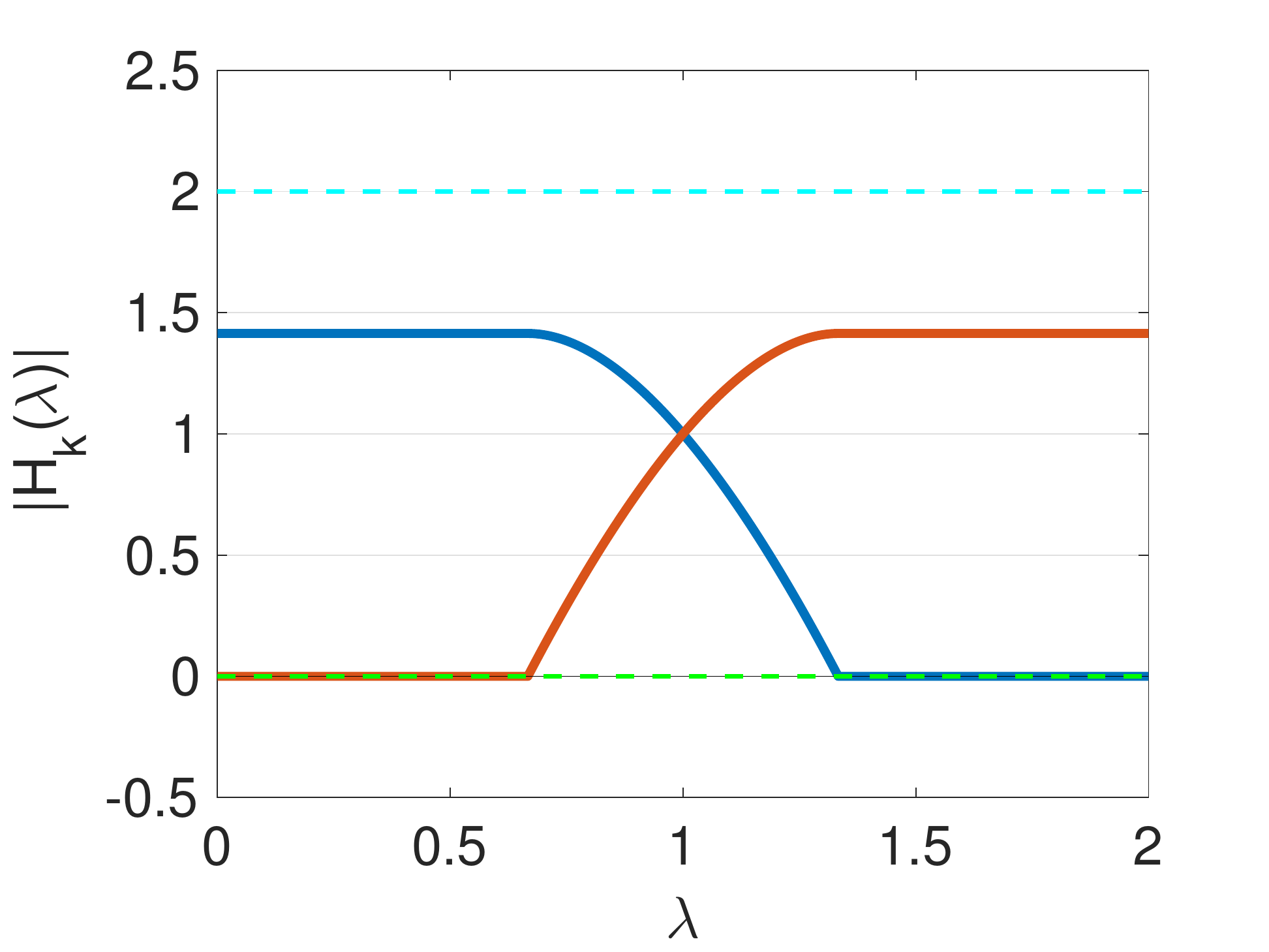}}\ 
\subfigure[][GraphBior]{\includegraphics[width = 0.32\linewidth]{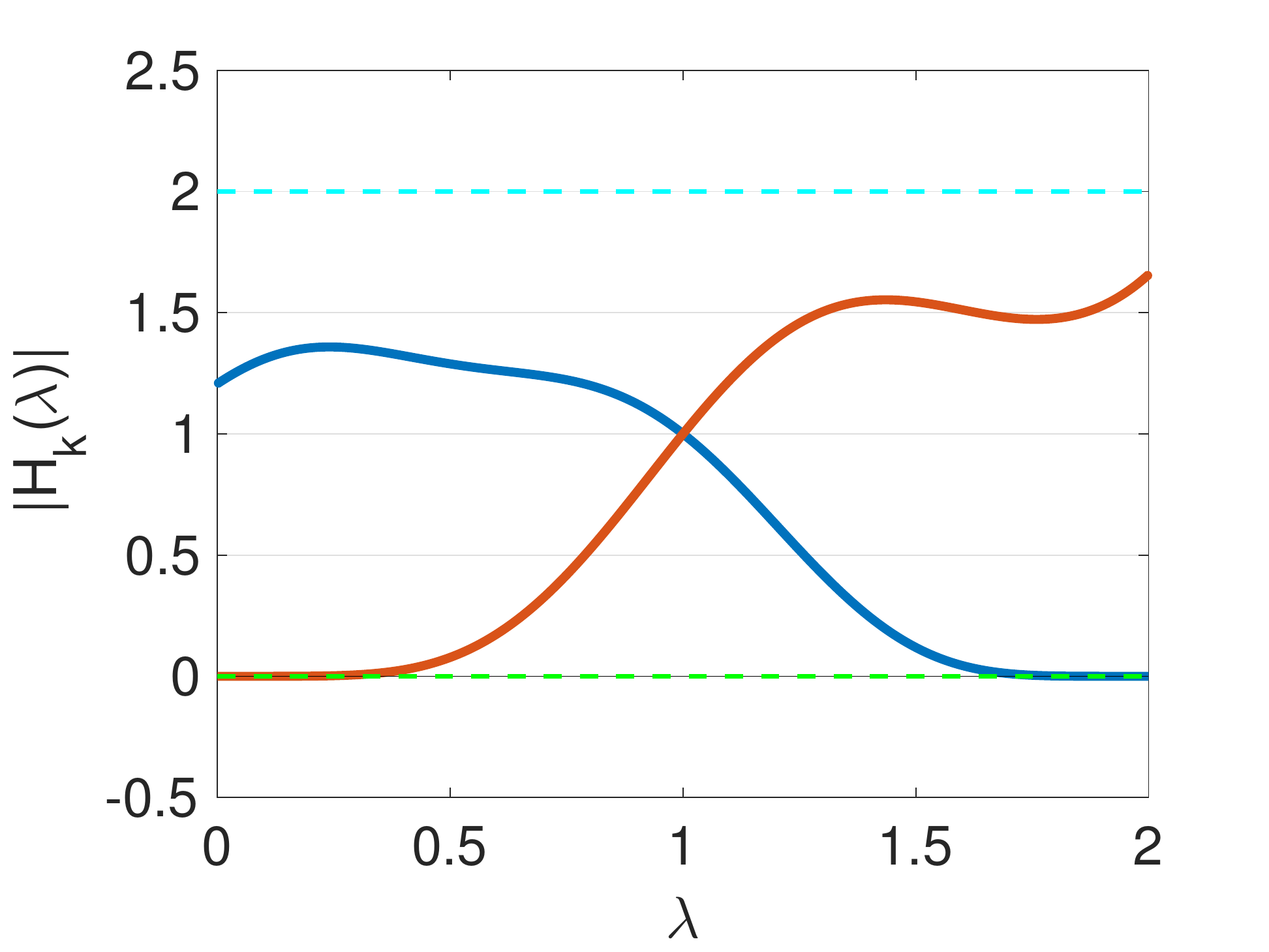}}\ 
\subfigure[][GraphFC]{\includegraphics[width = 0.32\linewidth]{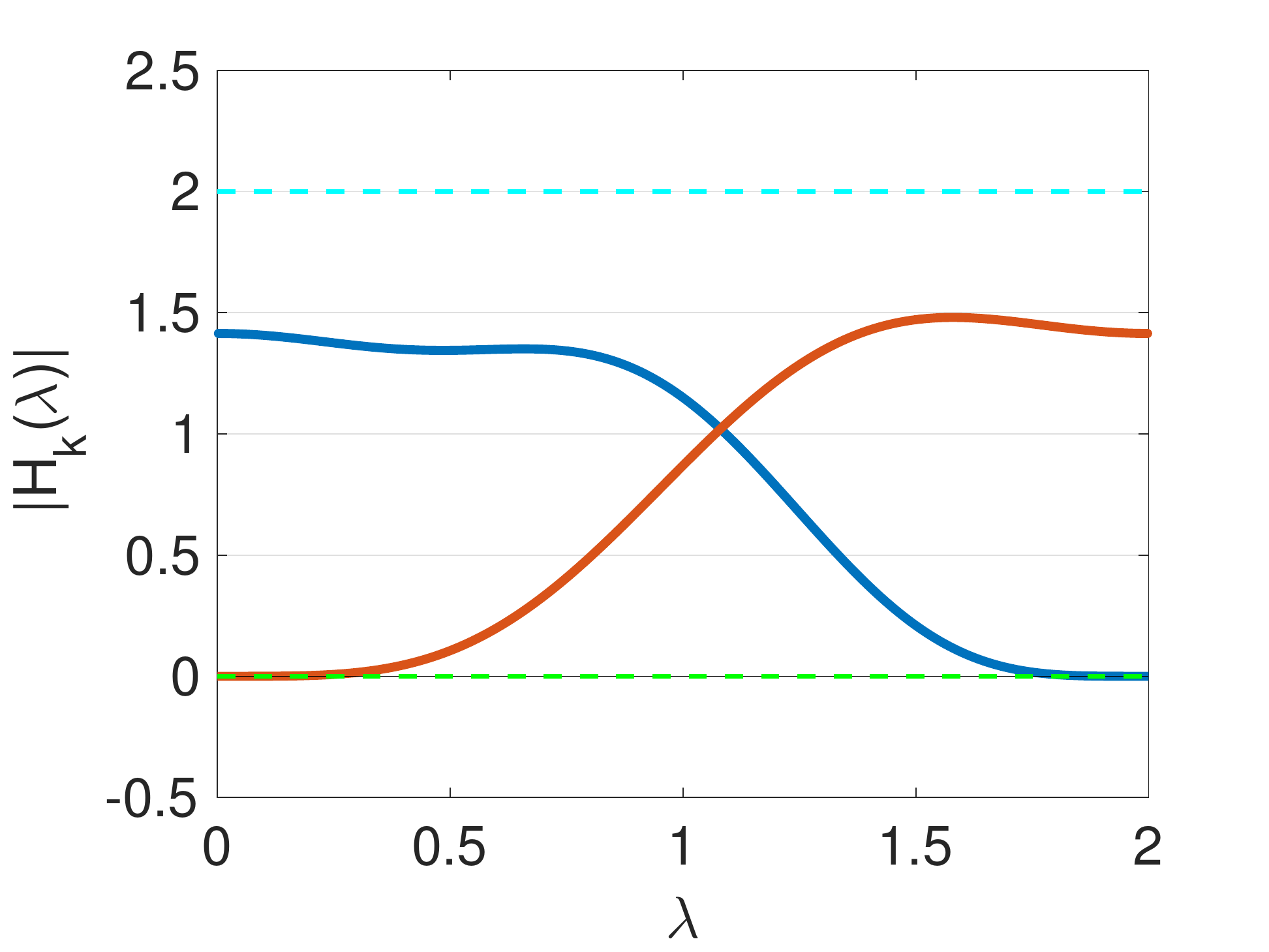}}
\caption{Existing CS GWTs. Light blue and yellow-green dashed lines represent \eqref{eq:pr2-1} and \eqref{eq:pr2-2}, respectively.}
\label{fig:CSSGWTs_existing}
\end{figure}


\section{Two-Channel CS GFBs with Spectral Domain Sampling}\label{sec:IV}
Here, we describe the framework and perfect reconstruction condition of the proposed CS GFBs using spectral domain sampling. We also present their octave-band decomposition and polyphase representation, along with filter design methods.

\subsection{Framework and Perfect Reconstruction Condition}
The framework of the proposed transform is shown in Fig. \ref{fig:CSSGWT}(b), where $H_k(\bm{\Lambda})$ and $G_k(\bm{\Lambda})$ are the same as \eqref{eqn:graphspectralfilters}. It seems to be similar to the existing CS GFBs, but it differs in that all operations are performed in the graph frequency domain.

The sampling matrices are defined as follows:
\begin{equation}\label{eq:samp}
\begin{split}
&\widetilde{\mathbf{S}}_{d,0} = \left[ \mathbf{I}_{N/2} \;\; \mathbf{J}_{N/2} \right], \;\;\; 
\widetilde{\mathbf{S}}_{u,0} = \widetilde{\mathbf{S}}^{\top}_{d,0} \\
&\widetilde{\mathbf{S}}_{d,1} = \left[ \mathbf{I}_{N/2} \;\; -\mathbf{J}_{N/2} \right], \;\;\; 
\widetilde{\mathbf{S}}_{u,1} = \widetilde{\mathbf{S}}^{\top}_{d,1}.
\end{split}
\end{equation}
The down- and upsampling operations for the low-pass branch are, respectively, the same as (GD2) and (GU2) introduced in Section \ref{subsec:spectraldomainsamp}, whereas those in the highpass branch are modulated versions of (GD2) and (GU2).

By using this structure, the following theorem gives the perfect reconstruction condition.

\begin{theo}\label{thm:prcond}
For any variation operator having orthogonal GFT matrices\footnote{Its extension to the invertible or unitary case is trivial.}, the two-channel CS GFB with spectral domain sampling shown in Fig. \ref{fig:CSSGWT}(b) is a perfect reconstruction transform if the graph spectral responses of the filters satisfy the following relationship for all $i$:
\begin{align}
G_0(\lambda_i)H_0(\lambda_i) + G_1(\lambda_i)H_1(\lambda_i) &= c^{2}\label{eq:pr1-1}\\
G_0(\lambda_i)H_0(\lambda_{N-i-1}) - G_1(\lambda_i)H_1(\lambda_{N-i-1}) &= 0.\label{eq:pr1-2}
\end{align}
\end{theo}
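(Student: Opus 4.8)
The plan is to pass the whole analysis--synthesis cascade of Fig.~\ref{fig:CSSGWT}(b) through the GFT of the original graph $\mathcal{G}_0$ and reduce the operator identity $\mathbf{T}=c^{2}\mathbf{I}_N$ to an entrywise condition on the diagonal filter responses. First I would write channel $k$ explicitly: filtering is $\mathbf{U}_0 H_k(\mathbf{\Lambda}_0)\mathbf{U}_0^{\top}$, spectral downsampling is $\mathbf{U}_1\widetilde{\mathbf{S}}_{d,k}\mathbf{U}_0^{\top}$, and the synthesis-side operations are the adjoints $\mathbf{U}_0\widetilde{\mathbf{S}}_{u,k}\mathbf{U}_1^{\top}$ and $\mathbf{U}_0 G_k(\mathbf{\Lambda}_0)\mathbf{U}_0^{\top}$. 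Because all GFT matrices are orthogonal, every adjacent pair $\mathbf{U}_0^{\top}\mathbf{U}_0$ and $\mathbf{U}_1^{\top}\mathbf{U}_1$ collapses to the identity; in particular the eigenvector matrix $\mathbf{U}_1$ of the intermediate (half-size) graph cancels entirely, which is exactly the symmetric-structure feature that distinguishes this construction from the vertex-domain one. Summing over $k=0,1$ yields $\mathbf{T}=\mathbf{U}_0\bigl(\sum_{k} G_k(\mathbf{\Lambda}_0)\widetilde{\mathbf{S}}_{u,k}\widetilde{\mathbf{S}}_{d,k}H_k(\mathbf{\Lambda}_0)\bigr)\mathbf{U}_0^{\top}$, so since $\mathbf{U}_0$ is orthogonal, $\mathbf{T}=c^{2}\mathbf{I}_N$ holds iff the bracketed $N\times N$ matrix equals $c^{2}\mathbf{I}_N$.

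Next I would compute $\widetilde{\mathbf{S}}_{u,k}\widetilde{\mathbf{S}}_{d,k}$ from \eqref{eq:samp}. Using $\mathbf{J}_{N/2}^{2}=\mathbf{I}_{N/2}$ one obtains, in $2\times 2$ blocks of size $N/2$, $\widetilde{\mathbf{S}}_{u,0}\widetilde{\mathbf{S}}_{d,0}=\left[\begin{smallmatrix}\mathbf{I}&\mathbf{J}\\\mathbf{J}&\mathbf{I}\end{smallmatrix}\right]$ and $\widetilde{\mathbf{S}}_{u,1}\widetilde{\mathbf{S}}_{d,1}=\left[\begin{smallmatrix}\mathbf{I}&-\mathbf{J}\\-\mathbf{J}&\mathbf{I}\end{smallmatrix}\right]$; equivalently, entrywise, $(\widetilde{\mathbf{S}}_{u,k}\widetilde{\mathbf{S}}_{d,k})_{pq}=\delta_{q,p}+(-1)^{k}\delta_{q,N-1-p}$. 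The off-diagonal ($\mathbf{J}$-type) part is the reflection that pairs index $i$ with $N-1-i$, and this is precisely what produces the $\lambda_{N-i-1}$ arguments appearing in \eqref{eq:pr1-2}.

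Then I would evaluate the bracketed matrix entrywise. Since $H_k(\mathbf{\Lambda}_0)$ and $G_k(\mathbf{\Lambda}_0)$ are diagonal, the $(p,q)$ entry of $G_k(\mathbf{\Lambda}_0)\widetilde{\mathbf{S}}_{u,k}\widetilde{\mathbf{S}}_{d,k}H_k(\mathbf{\Lambda}_0)$ is $G_k(\lambda_p)H_k(\lambda_q)(\widetilde{\mathbf{S}}_{u,k}\widetilde{\mathbf{S}}_{d,k})_{pq}$. Summing over $k$ and splitting the two Kronecker deltas, the matrix is diagonal on $q=p$ with value $G_0(\lambda_p)H_0(\lambda_p)+G_1(\lambda_p)H_1(\lambda_p)$, anti-diagonal on $q=N-1-p$ with value $G_0(\lambda_p)H_0(\lambda_{N-1-p})-G_1(\lambda_p)H_1(\lambda_{N-1-p})$, and zero otherwise; note that $p\neq N-1-p$ since $N$ is even, so the two cases never collide. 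Forcing this matrix to equal $c^{2}\mathbf{I}_N$ is exactly conditions \eqref{eq:pr1-1}--\eqref{eq:pr1-2} for every $i$, which establishes the theorem (and the argument in fact shows the condition is also necessary).

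The only delicate point, and the step I would treat most carefully, is the index bookkeeping in the flip blocks: tracking which eigenvalue index is reflected to $N-1-i$ through each $\mathbf{J}_{N/2}$ block, and verifying that for even $N$ the diagonal and anti-diagonal conditions remain separate. Everything else is routine: orthogonality cancellations, a one-line block multiplication, and reading off the entries. I do not expect the unitary/invertible extension mentioned in the footnote to need anything beyond replacing $\mathbf{U}_0^{\top}$ by $\mathbf{U}_0^{-1}$ in the last cancellation.
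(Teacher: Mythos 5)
Your proposal is correct and follows essentially the same route as the paper's proof: reduce everything to the transfer matrix in the GFT domain using orthogonality of $\mathbf{U}_0$ (and cancellation of the intermediate eigenvector matrices), compute $\widetilde{\mathbf{S}}_{u,k}\widetilde{\mathbf{S}}_{d,k}=\mathbf{I}_N\pm\mathbf{J}_N$, and identify the diagonal and counter-diagonal parts with \eqref{eq:pr1-1} and \eqref{eq:pr1-2}. The only cosmetic difference is that you read off the conditions entrywise via Kronecker deltas while the paper writes the counter-diagonal term as $\bigl(G_0(\mathbf{\Lambda})H_0(\mathbf{\Lambda}')-G_1(\mathbf{\Lambda})H_1(\mathbf{\Lambda}')\bigr)\mathbf{J}_N$; your added observations (that $p\neq N-1-p$ for even $N$ and that the condition is also necessary) are correct but not part of the paper's argument.
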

\begin{proof}
The output signal $\widehat{\mathbf{f}}$ in Fig.  \ref{fig:CSSGWT}(b) is represented as
\begin{equation}
\begin{split}
\widehat{\mathbf{f}} = &\mathbf{U}_0 G_0(\mathbf{\Lambda}) \widetilde{\mathbf{S}}_{u,0}\widetilde{\mathbf{S}}_{d,0} H_0(\mathbf{\Lambda})\mathbf{U}^{\top}_0 \mathbf{f} \\
& + \mathbf{U}_0 G_1(\mathbf{\Lambda}) \widetilde{\mathbf{S}}_{u,1}\widetilde{\mathbf{S}}_{d,1} H_1(\mathbf{\Lambda})\mathbf{U}^{\top}_0 \mathbf{f}.
\end{split}
\end{equation}
Since $\mathbf{U}_0$ is an orthogonal matrix, if the transfer matrix
\begin{equation}\label{eq:ts}
\widetilde{\mathbf{T}}_s = G_0(\mathbf{\Lambda}) \widetilde{\mathbf{S}}_{u,0}\widetilde{\mathbf{S}}_{d,0} H_0(\mathbf{\Lambda}) + G_1(\mathbf{\Lambda}) \widetilde{\mathbf{S}}_{u,1}\widetilde{\mathbf{S}}_{d,1} H_1(\mathbf{\Lambda})
\end{equation}
is the identity matrix, the whole transform becomes a perfect reconstruction system.
By substituting \eqref{eq:samp} into \eqref{eq:ts}, $\widetilde{\mathbf{T}}_s$ becomes
\begin{equation}
\begin{split}
\widetilde{\mathbf{T}}_s &= G_0(\mathbf{\Lambda})
\begin{bmatrix}
\mathbf{I}_{N/2}\\
\mathbf{J}_{N/2}
\end{bmatrix}
\begin{bmatrix}
\mathbf{I}_{N/2}&\mathbf{J}_{N/2}
\end{bmatrix}
H_0(\mathbf{\Lambda}) \\
&\quad+
G_1(\mathbf{\Lambda})
\begin{bmatrix}
\mathbf{I}_{N/2}\\
-\mathbf{J}_{N/2}
\end{bmatrix}
\begin{bmatrix}
\mathbf{I}_{N/2}&-\mathbf{J}_{N/2}
\end{bmatrix}
H_1(\mathbf{\Lambda}) \\
&=
G_0(\mathbf{\Lambda}) \left( \mathbf{I}_N + \mathbf{J}_N \right)H_0(\mathbf{\Lambda})+G_1(\mathbf{\Lambda}) \left( \mathbf{I}_N - \mathbf{J}_N \right)H_1(\mathbf{\Lambda})\\
&=
\left( G_0(\mathbf{\Lambda})H_0(\mathbf{\Lambda}) + G_1(\mathbf{\Lambda}) H_1(\mathbf{\Lambda}) \right)\\
&\quad+\left(G_0(\mathbf{\Lambda}){H}_0(\mathbf{\Lambda}') - G_1(\mathbf{\Lambda}) {H}_1(\mathbf{\Lambda}') \right)\mathbf{J}_N,
\end{split}
\end{equation}
where $\mathbf{\Lambda}' = \text{diag}(\lambda_{N-1} , \ldots , \lambda_0)$.
If the filters satisfy \eqref{eq:pr1-1} and \eqref{eq:pr1-2}, we have
\begin{align}
G_0(\mathbf{\Lambda})H_0(\mathbf{\Lambda}) + G_1(\mathbf{\Lambda}) H_1(\mathbf{\Lambda}) &= c^{2}\mathbf{I}_N\\
G_0(\mathbf{\Lambda}){H}_0(\mathbf{\Lambda}') - G_1(\mathbf{\Lambda}) {H}_1(\mathbf{\Lambda}') &=\mathbf{0}_N.
\end{align}
This leads to $\widetilde{\mathbf{T}}_s = c^{2} \mathbf{I}_N$.
\end{proof}
In \eqref{eq:pr1-1} and \eqref{eq:pr1-2}, we have $2N$ constraints for a perfect reconstruction transform. They can be satisfied when we use filter coefficients from two-channel classical wavelet transforms. We describe the filter design method in Section \ref{subsec:designmethods}.

\begin{figure*}[tp]
\centering
\subfigure[][]{\includegraphics[scale=0.65]{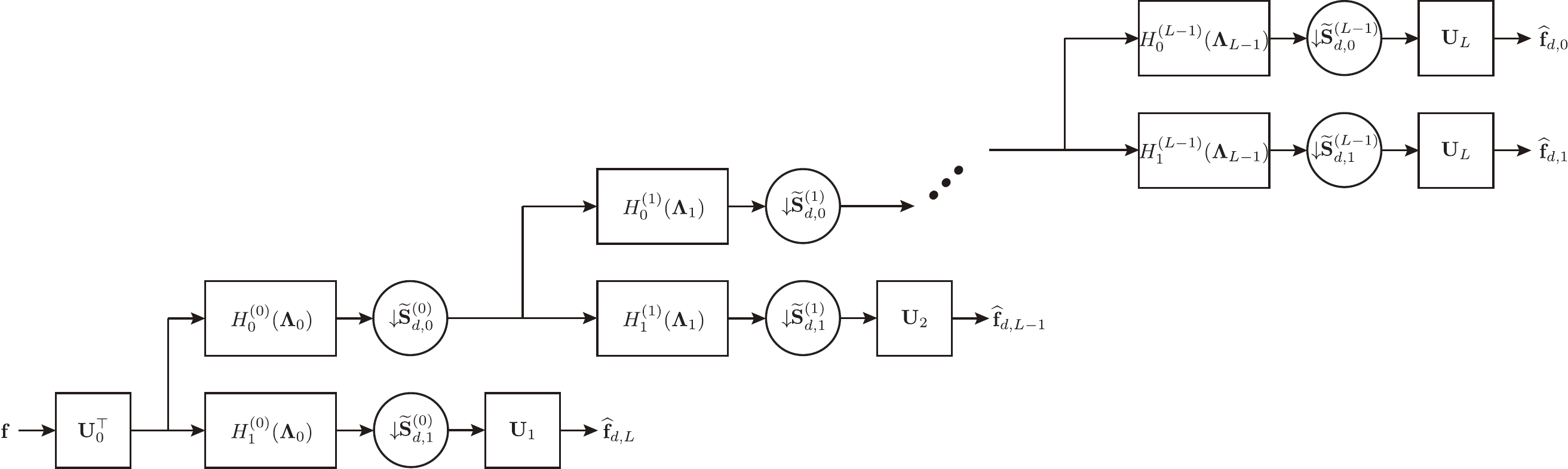}}\\
\subfigure[][]{\includegraphics[scale=0.65]{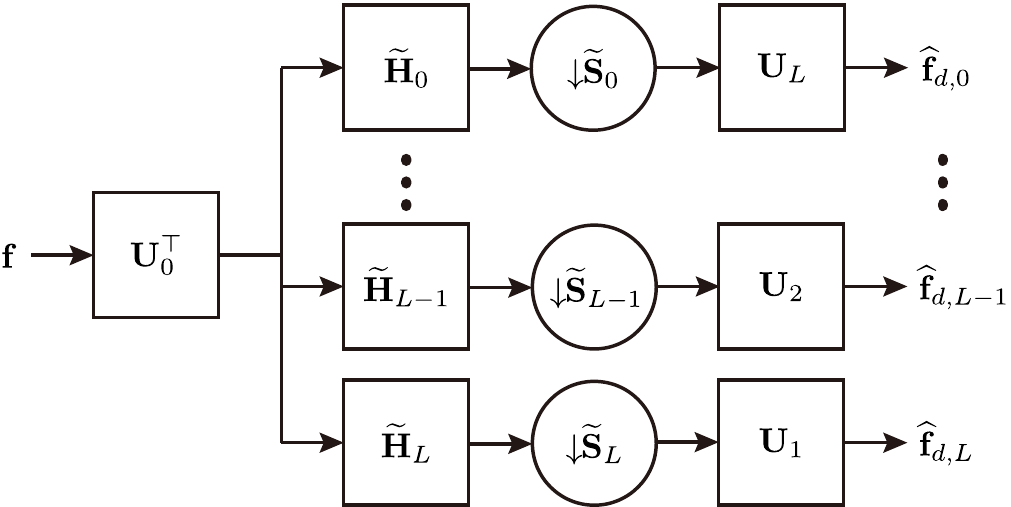}}
\caption{$L$-level octave-band analysis filter bank with spectral domain sampling. (a) The direct structure. For the output in the low-pass channel, i.e., the input for the transform at the next level, we do not need to perform the inverse GFT because the forward GFT in the next level cancels it.} (b) The equivalent structure after merging the filters and sampling operators. $\widetilde{\mathbf{H}}_k$ and $\widetilde{\mathbf{S}}_k$ are the $k$th merged filters and sampling matrices, respectively.
\label{fig:L-level}
\end{figure*}


\subsection{Octave-Band Decomposition}
The above-mentioned transform is one-level, but we can cascade it to realize an octave-band decomposition. Here, we consider the $L$-level octave-band analysis transform shown in Fig. \ref{fig:L-level} and merge the filters and sampling operations. Although there are many methods to obtain multiscale graphs as described in Section \ref{sec:related} (desired properties for graph reduction are summarized in \cite{Shuman2016}), it is worth noting that we can design a perfect reconstruction with any orthogonal $\mathbf{U}_i$'s according to Theorem \ref{thm:prcond}.

For simplicity, only the lowest frequency band on the analysis side is considered. Our discussion is easily generalized to the other bands and the synthesis transform.

Let $\widetilde{\mathbf{S}}^{(k)}_{d, 0}$ be the $k$th level downsampling matrix for the spectral domain sampling. After the $L$-level transform, the transform matrix in the lowest band can be written as follows.
\begin{equation}
\begin{split}
\mathbf{T}_{a}^{(L)} =\ & \mathbf{U}_{L} \left(\widetilde{\mathbf{S}}^{(L-1)}_{d, 0} H^{(L-1)}_{0}(\mathbf{\Lambda}_{L-1})\right)\\
&\times \left(\widetilde{\mathbf{S}}^{(L-2)}_{d, 0} H^{(L-2)}_{0}(\mathbf{\Lambda}_{L-2})\right) \cdots\left(\widetilde{\mathbf{S}}^{(0)}_{d, 0} H^{(0)}_{0}(\mathbf{\Lambda}_{0})\right)
\mathbf{U}^\top_0.
\end{split}
\end{equation}
Furthermore, for all $k$, $H^{(k)}_{0}(\mathbf{\Lambda}_{k})\widetilde{\mathbf{S}}_{d,0}^{(k-1)}H^{(k-1)}_{0}(\mathbf{\Lambda}_{k-1})$ can be rewritten as
\begin{equation}
\label{eq:SH}
\begin{split}
&H^{(k)}_{0}(\mathbf{\Lambda}_{k})\widetilde{\mathbf{S}}_{d,0}^{(k-1)}H^{(k-1)}_{0}(\mathbf{\Lambda}_{k-1})\\
& = \widetilde{\mathbf{S}}_{d,0}^{(k-1)}
\begin{bmatrix}
H^{(k)}_{0}(\mathbf{\Lambda}_{k}) & \\
 & \mathbf{J}_{\frac{N}{2^{k}}}H^{(k)}_{0}(\mathbf{\Lambda}_{k})\mathbf{J}_{\frac{N}{2^{k}}}
\end{bmatrix}H^{(k-1)}_{0}(\mathbf{\Lambda}_{k-1}),
\end{split}
\end{equation}
and the product of the second and last terms in the above equation is still diagonal. 

Consequently, the spectral domain sampling can be merged in the last part of the analysis transform. This means the filtering and sampling has to be performed only once even when using our framework for octave-band decomposition as shown in Fig. \ref{fig:L-level}(b).

\subsection{Polyphase Representation}
\subsubsection{General Form}
A polyphase representation is an efficient implementation of filter banks for classical signal processing \cite{Vaidya1993, Strang1996, Bellan1976}, and it has also been studied in graph signal processing \cite{Tay2017b, Tanaka2017a}. It is efficient because it allows us to move the downsampling and upsampling operations, respectively, to the very first and last parts of the transform via the \emph{noble identity} \cite{Vaidya1993, Strang1996}. Here, we show that such a structure is possible for our framework and describe a special case for bipartite graphs.

Let us define $\mathbf{\Lambda}_u = \text{diag}(\lambda_0,\ldots, \lambda_{\frac{N}{2}-1})$ and $\mathbf{\Lambda}'_l = \text{diag}(\lambda_{N-1}, \ldots, \lambda_{\frac{N}{2}})$. Note that the order of the eigenvalues in $\mathbf{\Lambda}'_l$ is flipped. The matrix of the analysis transform $\mathbf{H} \in \mathbb{R}^{N \times N}$ is represented as
\begin{equation}
\label{eqn:H}
\mathbf{H} = 
\begin{bmatrix}
\widetilde{\mathbf{S}}_{d,0} & \\
 & \widetilde{\mathbf{S}}_{d,1}
\end{bmatrix}
\begin{bmatrix}
H_0(\mathbf{\Lambda}) &\\
& H_1(\mathbf{\Lambda})
\end{bmatrix}
\begin{bmatrix}
\mathbf{U}_0^\top\\
\mathbf{U}_0^\top
\end{bmatrix}.
\end{equation}
By looking at \eqref{eq:SH} and doing some elementary calculations, \eqref{eqn:H} can be rewritten as follows.
\begin{equation}
\label{ }
\mathbf{H} = \begin{bmatrix}
H_{0}(\mathbf{\Lambda}_u) & H_{0}(\mathbf{\Lambda}'_l)\\
H_{1}(\mathbf{\Lambda}_u) & -H_{1}(\mathbf{\Lambda}'_l)
\end{bmatrix}
\begin{bmatrix}
\mathbf{I}_{N/2} & \\
 & \mathbf{J}_{N/2}
\end{bmatrix}
\mathbf{U}_0^\top.
\end{equation}
Hence, the polyphase matrix in the graph frequency domain $\mathbf{H}_{\text{poly}}$ is
\begin{equation}
\label{ }
\mathbf{H}_{\text{poly}} = \begin{bmatrix}
H_{0}(\mathbf{\Lambda}_u) & H_{0}(\mathbf{\Lambda}'_l)\\
H_{1}(\mathbf{\Lambda}_u) & -H_{1}(\mathbf{\Lambda}'_l)
\end{bmatrix}.
\end{equation}

Similarly, the synthesis transform matrix $\mathbf{G} \in \mathbb{R}^{N\times N}$ is represented as
\begin{equation}
\label{ }
\mathbf{G} = \begin{bmatrix}
\mathbf{U}_0 & \mathbf{U}_0
\end{bmatrix}
\begin{bmatrix}
G_0(\mathbf{\Lambda}) &\\
& G_1(\mathbf{\Lambda})
\end{bmatrix}
\begin{bmatrix}
\widetilde{\mathbf{S}}_{u,0} & \\
 & \widetilde{\mathbf{S}}_{u,1}
\end{bmatrix}.
\end{equation}
After a calculation similar to that for the analysis side, we obtain the following equivalent expression,
\begin{equation}
\label{ }
\mathbf{G} = \mathbf{U}_0 \text{diag}(\mathbf{I}_{N/2}, \mathbf{J}_{N/2})\mathbf{G}_{\text{poly}},
\end{equation}
where
\begin{equation}
\label{ }
\mathbf{G}_{\text{poly}} = \begin{bmatrix}
G_{0}(\mathbf{\Lambda}_u) & G_{1}(\mathbf{\Lambda}_u)\\
G_{0}(\mathbf{\Lambda}'_l) & -G_{1}(\mathbf{\Lambda}'_l)
\end{bmatrix}.
\end{equation}
The polyphase representation in the graph frequency domain is illustrated in Fig. \ref{fig:SH}.

As in the traditional polyphase structure, it is clear that the graph signal is perfectly recovered when the product of the analysis and synthesis polyphase matrices becomes the identity matrix, i.e., $\mathbf{G}_{\text{poly}}\mathbf{H}_{\text{poly}} = \mathbf{I}_N$.

\subsubsection{Bipartite Case}
Here, suppose the underlying graph is a bipartite one with $| \mathcal{L}| =|\mathcal{H}|$ and the variation operator is a symmetric normalized graph Laplacian. Without loss of generality, we can assume that the vertices of the first half correspond to $\mathcal{L}$ and those of the second half correspond to $\mathcal{H}$. In this case, the normalized graph Laplacian can be written as
\begin{equation}
\begin{split}
{\bm{\mathcal{ L}}}&=
\begin{bmatrix}
{\bm{\mathcal{ L}}}_{\mathcal{LL}} & {\bm{\mathcal{ L}}}_{\mathcal{L}\mathcal{H}}\\
{\bm{\mathcal{ L}}}_{\mathcal{H}\mathcal{L}} & {\bm{\mathcal{ L}}}_{\mathcal{H}\mathcal{H}}
\end{bmatrix}\\
&=\begin{bmatrix}
\mathbf{U}_{\mathcal{LL}} &\mathbf{U}_{\mathcal{LL}} \\
\mathbf{U}_{\mathcal{HL}} & -\mathbf{U}_{\mathcal{HL}}
\end{bmatrix}
\begin{bmatrix}
\mathbf{\Lambda}_\mathcal{L}& \mathbf{0}\\
\mathbf{0} & 2\mathbf{I}-\mathbf{\Lambda}_\mathcal{L}
\end{bmatrix}
\begin{bmatrix}
\mathbf{U}_{\mathcal{LL}}^\top &\mathbf{U}_{\mathcal{HL}}^\top \\
\mathbf{U}_{\mathcal{LL}}^\top & -\mathbf{U}_{\mathcal{H}\mathcal{L}}^\top
\end{bmatrix}\\
&=\begin{bmatrix}
\mathbf{I} & 2\mathbf{U}_{\mathcal{LL}}(\mathbf{\Lambda}_\mathcal{L}- \mathbf{I})\mathbf{U}_{\mathcal{HL}}^\top\\
 2\mathbf{U}_{\mathcal{HL}}(\mathbf{\Lambda}_\mathcal{L} - \mathbf{I})\mathbf{U}_{\mathcal{LL}}^\top&\mathbf{I}
\end{bmatrix}.
\end{split}\label{eqn:normalizedGL}
\end{equation}

We can rewrite the signal before it is transformed using $\mathbf{H}_{\text{poly}}$ as follows:
\begin{equation}
\label{ }
\begin{split}
\begin{bmatrix}
\mathbf{I}_{N/2} & \\
 & \mathbf{J}_{N/2}
\end{bmatrix} & 
\begin{bmatrix}
\mathbf{U}_{\mathcal{LL}}^\top &\mathbf{U}_{\mathcal{HL}}^\top \\
\mathbf{U}_{\mathcal{LL}}^\top & -\mathbf{U}_{\mathcal{H}\mathcal{L}}^\top
\end{bmatrix}
\begin{bmatrix}
\mathbf{f}_u\\
\mathbf{f}_l
\end{bmatrix}\\
& = \begin{bmatrix}
\mathbf{U}_{\mathcal{LL}}^\top \mathbf{f}_u + \mathbf{U}_{\mathcal{HL}}^\top \mathbf{f}_l\\
\mathbf{J} (\mathbf{U}_{\mathcal{LL}}^\top \mathbf{f}_u - \mathbf{U}_{\mathcal{HL}}^\top \mathbf{f}_l)
\end{bmatrix},
\end{split}
\end{equation}
where $\mathbf{f}_u$ and $\mathbf{f}_l$ are the upper and lower halves of $\mathbf{f}$, respectively. Note that we can compute $\mathbf{U}_{\mathcal{LL}}^\top \mathbf{f}_u$ and $\mathbf{U}_{\mathcal{HL}}^\top \mathbf{f}_l$ separately; the downsampling operation can be performed before the graph Fourier transform as in the classical case.

\begin{figure}[tp]
\begin{center}
   \includegraphics[width = 1.0\linewidth]{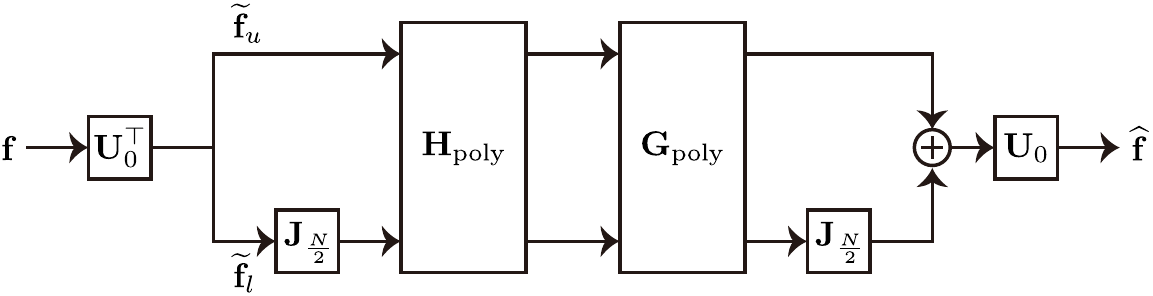}
  \caption{Polyphase representation of analysis filter bank with spectral domain sampling, where $\widetilde{\mathbf{f}}_u$ and $\widetilde{\mathbf{f}}_l$ are the upper and lower halves of $\widetilde{\mathbf{f}}$, respectively.}
  \label{fig:SH}
  \end{center}
\end{figure}

\begin{figure}[t]
\centering
\subfigure[][GraphSS-I]{\includegraphics[width = 0.32\linewidth]{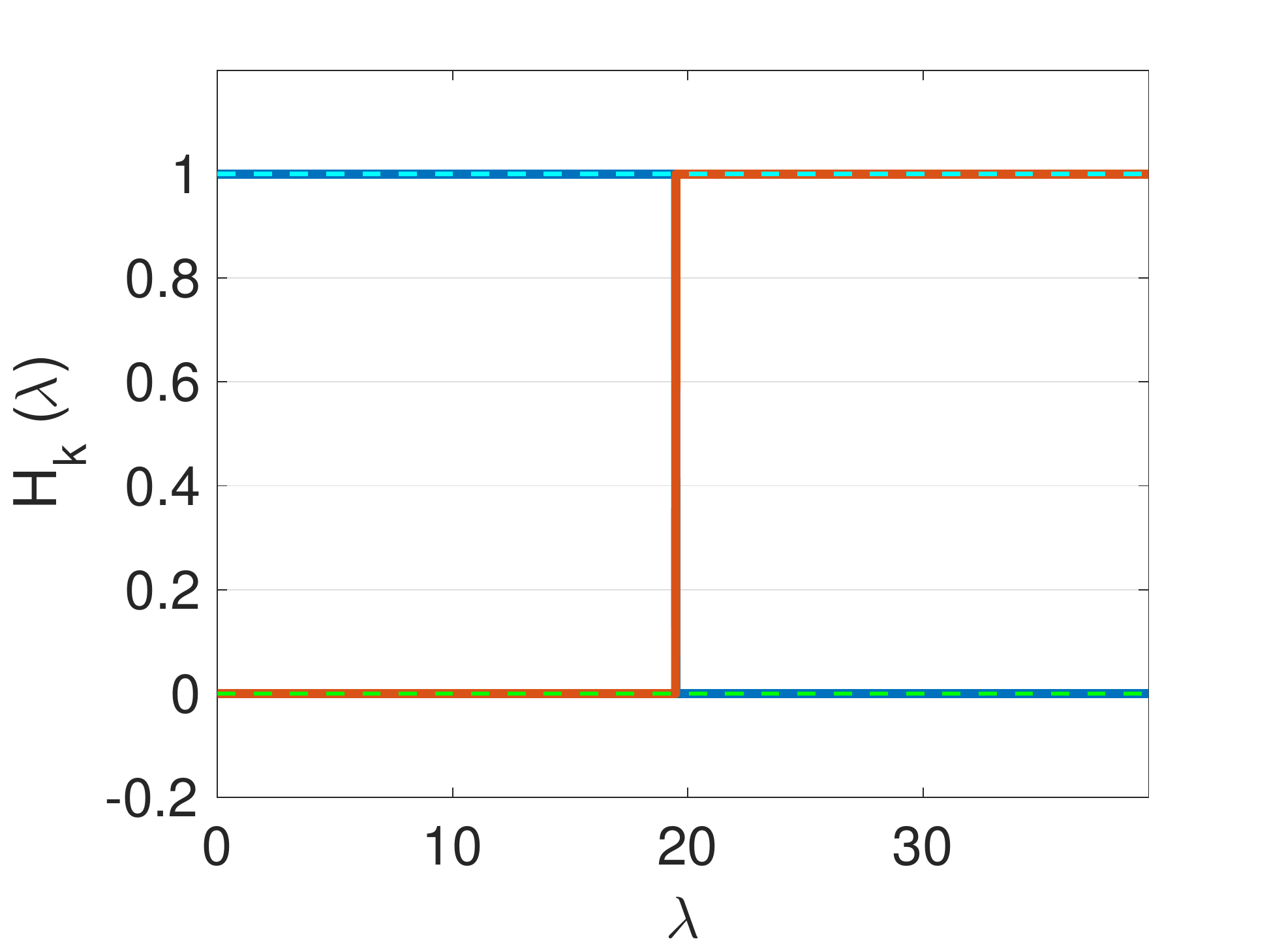}}\ 
\subfigure[][GraphSS-O]{\includegraphics[width = 0.32\linewidth]{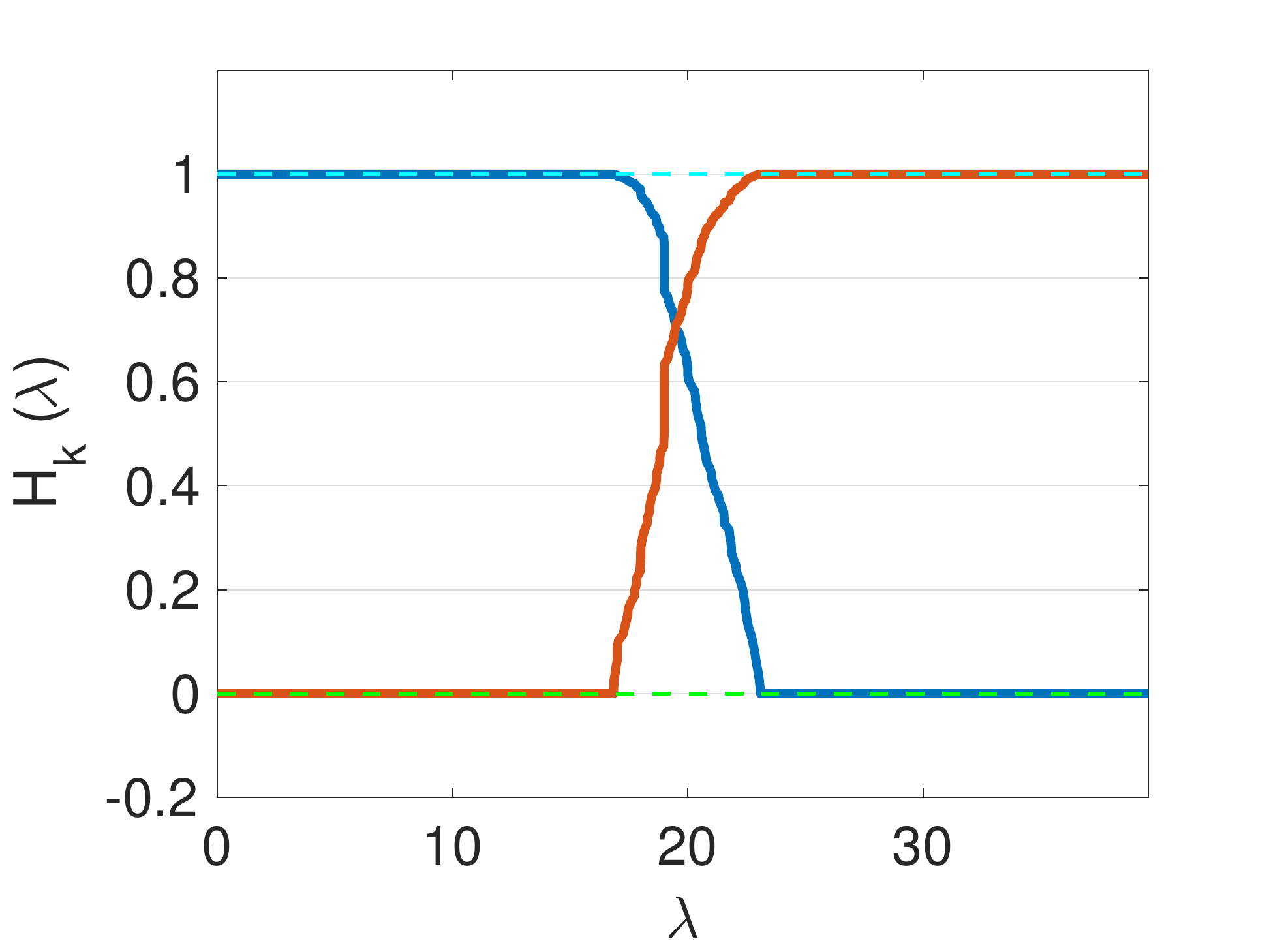}}\ 
\subfigure[][GraphSS-B]{\includegraphics[width = 0.32\linewidth]{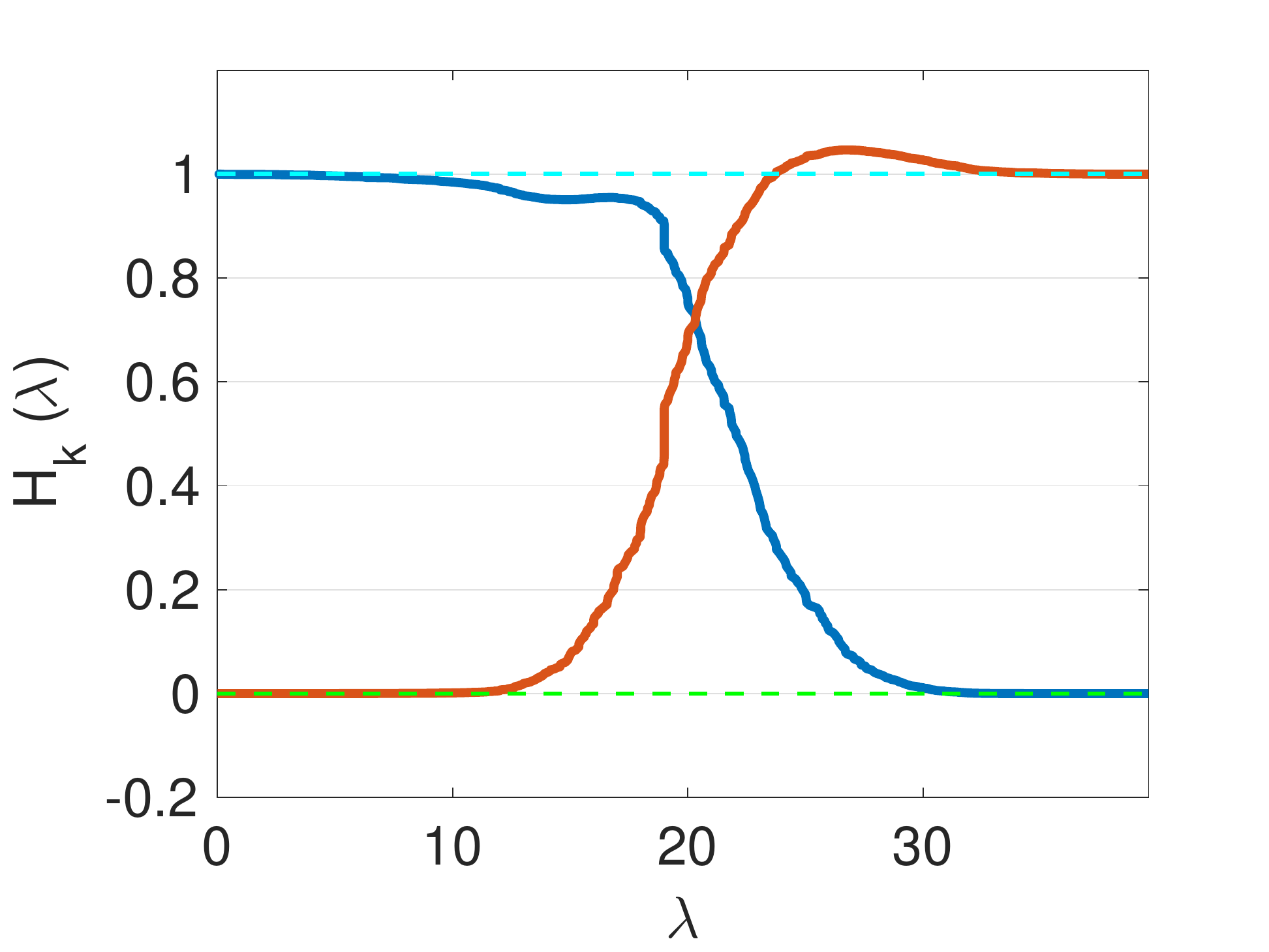}}
\caption{Proposed CS GFBs with spectral domain sampling. Light blue and yellow dashed lines represent \eqref{eq:pr1-1} and \eqref{eq:pr1-2}, respectively.}
\label{fig:CSSGWTs_proposed}
\end{figure}

\subsection{Design Methods}
\label{subsec:designmethods}
By using a method similar to \cite{Sakiya2016a}, both orthogonal and biorthogonal filters can be designed on the basis of those used in classical signal processing.

\subsubsection{Orthogonal Solution}
Similarly to the design of GraphQMF \cite{Narang2012}, the proposed orthogonal CS GFB uses one prototype filter $H_0(\li)$. The remaining $G_0(\lambda_i)$, $H_1(\lambda_i)$, and $G_1(\lambda_i)$ are calculated from $H_0(\li)$, i.e., $H_k(\lambda_i) = G_k(\lambda_i)$ and $H_1(\lambda_i)=H_0(\lambda_{N-i-1})$. $H_0(\lambda_i)$ has to satisfy the following condition to ensure perfect reconstruction:
\begin{equation}
H^{2}_{0}(\lambda_i) + H^{2}_{0}(\lambda_{N-i-1}) = c^2.
\end{equation}
We utilize the frequency responses of the time domain filters to design $H_0(\lambda_i)$. First, a real-valued function $H^{\text{freq}}(\omega)$, where $\omega \in [0,\pi]$, is obtained from the time-domain filter; then $H_0(\lambda_i)$ is calculated according to the eigenvalue distribution of the graph Laplacian. That is,
\begin{equation}
H_0(\lambda_i) = H^{\text{freq}}\left(\frac{\pi i}{N}\right).
\end{equation}

\subsubsection{Biorthogonal Solution}
Similar to graphBior \cite{Narang2013} and wavelets in the time domain, the high-pass filters used in the proposed biorthogonal CS GFB are defined from the low-pass filters as
\begin{equation}
H_1(\lambda_i) = G_0(\lambda_{N-i-1}), \quad G_1(\lambda_i) = H_0(\lambda_{N-i-1}).
\end{equation}
This leads to the following condition:
\begin{equation}
H_0(\lambda_i)G_0(\lambda_i)+H_0(\lambda_{N-i-1})G_0(\lambda_{N-i-1}) =c^2.
\end{equation}

\subsection{Complexity}
The proposed filter bank operates in the GFT domain, thus requiring the eigendecomposition of the variation operator in order to obtain the GFT basis. Typically, this requires $\mathcal{O}(N^3)$ complexity. However, it is important to note that the same graph is often used many times, i.e., many signals will be transformed with one underlying graph. In such a case, we only need to calculate the GFT basis once.

Calculating the GFT coefficients, i.e., $\widetilde{\mathbf{f}} = \mathbf{U}^\top\mathbf{f}$, needs a matrix-vector multiplication and its complexity is $\mathcal{O}(N^2)$. This is required for every input signal. Alternatively, ``fast GFT'' approaches such as \cite{LeMag2018, Lu2017} can be used to reduce the computation cost.

Since the filtering and sampling in the graph frequency domain have much less complexities ($\mathcal{O}(N)$), the entire complexity is dominated by the GFT. One exception is the bipartite case in Theorem \ref{th:bipartite} (in the next section); in this case, the vertex domain sampling also coincides with the spectral domain sampling, i.e., transforms with the vertex domain sampling inherit the properties of those with the spectral domain sampling. The study for more general case to avoid eigendecomposition is one of future works.

\section{Relationship between Vertex and Spectral Domain Sampling-Based Approaches}\label{sec:V}
This section examines the relationship between CS GFBs with vertex domain sampling and those with spectral domain sampling. 

\subsection{Reduced-Size Bipartite Graphs with Kron Reduction}
Kron reduction is a widely used method to reduce the size of the graph Laplacian, especially in multiscale transforms of graph signals \cite{Dorfle2013}. The following theorem describes the condition under which vertex domain sampling is identical to spectral domain sampling.

\begin{theo}
\label{th:bipartite}
If the original graph is bipartite with $|\mathcal{L}|=|\mathcal{H}| = N/2$ and the reduced graph is obtained by Kron reduction \cite{Dorfle2013}, the signal downsampled by (GD2) is equivalent to the one downsampled by (GD1) up to scaling factor when the symmetric normalized graph Laplacian is used as the variation operator whose eigenvalues are arranged in ascending order.
\end{theo}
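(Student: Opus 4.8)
The plan is to reduce the statement to the block structure of the bipartite graph Fourier transform together with a short computation showing that the spectral fold in (GD2) collapses onto the $\mathcal L$-coordinates. Assume without loss of generality that $\mathcal V_1=\mathcal L$, so that (GD1) returns $\mathbf f_{\mathcal L}$, the restriction of $\mathbf f$ to $\mathcal L$, and the Kron reduction is taken onto $\mathcal L$ (eliminating $\mathcal H$). Since $\mathcal B$ is bipartite with $|\mathcal L|=|\mathcal H|$ and the variation operator is the symmetric normalized Laplacian, its spectrum is symmetric about $1$, its eigenvectors occur in pairs related by a sign flip on $\mathcal H$, and the equal part sizes make this pairing complete. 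Hence, once the eigenvalues are sorted so that $\lambda_0\le\cdots\le\lambda_{N-1}$ with $\lambda_i=2-\lambda_{N-1-i}$, one may write
\[
\mathbf U_0=\begin{bmatrix}\mathbf U_{\mathcal{LL}}&\mathbf U_{\mathcal{LL}}\mathbf J_{N/2}\\ \mathbf U_{\mathcal{HL}}&-\mathbf U_{\mathcal{HL}}\mathbf J_{N/2}\end{bmatrix},
\]
where $\mathbf U_{\mathcal{LL}},\mathbf U_{\mathcal{HL}}$ are the $\mathcal L$- and $\mathcal H$-blocks of the first $N/2$ columns of $\mathbf U_0$, whose associated eigenvalues form $\mathbf\Lambda_{\mathcal L}:=\mathrm{diag}(\lambda_0,\ldots,\lambda_{N/2-1})$ with every $\lambda_j\le1$; this is just \eqref{eqn:normalizedGL} with the eigenvalues put in ascending order. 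Orthogonality of $\mathbf U_0$ then forces $\mathbf U_{\mathcal{LL}}^\top\mathbf U_{\mathcal{LL}}=\mathbf U_{\mathcal{HL}}^\top\mathbf U_{\mathcal{HL}}=\tfrac12\mathbf I_{N/2}$, which I will use twice below.

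Next I would carry out the (GD2) computation. With $\widetilde{\mathbf S}_{d}=\bigl[\mathbf I_{N/2}\ \ \mathbf J_{N/2}\bigr]$, the low-pass (GD2) downsampling matrix of \eqref{eq:samp}, the block form above and $\mathbf J_{N/2}^2=\mathbf I$ give $\widetilde{\mathbf S}_{d}\mathbf U_0^\top=\bigl[\,2\mathbf U_{\mathcal{LL}}^\top\ \ \mathbf 0\,\bigr]$: the flip supplied by $\widetilde{\mathbf S}_{d}$ cancels the flip sitting in the second block column of $\mathbf U_0$, so the two $\mathcal L$-blocks add while the two $\mathcal H$-blocks cancel. (This is precisely where the ascending-order hypothesis is used; for any other ordering the fold would scramble coordinates instead of projecting onto the $\mathcal L$-block.) Consequently the (GD2)-downsampled signal is $\mathbf f_d=\mathbf U_1\widetilde{\mathbf S}_{d}\mathbf U_0^\top\mathbf f=2\,\mathbf U_1\mathbf U_{\mathcal{LL}}^\top\mathbf f_{\mathcal L}$, and the theorem reduces to showing that $\mathbf U_1\mathbf U_{\mathcal{LL}}^\top$ is a scalar multiple of $\mathbf I_{N/2}$.

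For that I would compute the reduced operator explicitly. Reading Kron reduction of the normalized Laplacian as the Schur complement onto $\mathcal L$, and inserting $\bm{\mathcal L}_{0,\mathcal{LL}}=\bm{\mathcal L}_{0,\mathcal{HH}}=\mathbf I$ and $\bm{\mathcal L}_{0,\mathcal{LH}}=2\mathbf U_{\mathcal{LL}}(\mathbf\Lambda_{\mathcal L}-\mathbf I)\mathbf U_{\mathcal{HL}}^\top$ from \eqref{eqn:normalizedGL}, the relation $\mathbf U_{\mathcal{HL}}^\top\mathbf U_{\mathcal{HL}}=\tfrac12\mathbf I$ yields
\[
\begin{aligned}
\mathbf L_1&=\mathbf I-\bm{\mathcal L}_{0,\mathcal{LH}}\bm{\mathcal L}_{0,\mathcal{HH}}^{-1}\bm{\mathcal L}_{0,\mathcal{HL}}=\mathbf I-2\mathbf U_{\mathcal{LL}}(\mathbf\Lambda_{\mathcal L}-\mathbf I)^2\mathbf U_{\mathcal{LL}}^\top\\
&=(\sqrt2\,\mathbf U_{\mathcal{LL}})\bigl(\mathbf I-(\mathbf\Lambda_{\mathcal L}-\mathbf I)^2\bigr)(\sqrt2\,\mathbf U_{\mathcal{LL}})^\top,
\end{aligned}
\]
the last step using $\mathbf I=2\mathbf U_{\mathcal{LL}}\mathbf U_{\mathcal{LL}}^\top$. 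Since $\sqrt2\,\mathbf U_{\mathcal{LL}}$ is orthogonal, this is an eigendecomposition of $\mathbf L_1$ with eigenvalues $1-(\lambda_j-1)^2$. Because every $\lambda_j\le1$, the map $\lambda\mapsto1-(\lambda-1)^2$ is strictly increasing on the relevant range, so arranging the eigenvalues of $\mathbf L_1$ in ascending order keeps the columns of $\sqrt2\,\mathbf U_{\mathcal{LL}}$ in their original order; hence $\mathbf U_1=\sqrt2\,\mathbf U_{\mathcal{LL}}$ and $2\,\mathbf U_1\mathbf U_{\mathcal{LL}}^\top=2\sqrt2\,\mathbf U_{\mathcal{LL}}\mathbf U_{\mathcal{LL}}^\top=\sqrt2\,\mathbf I_{N/2}$. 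Substituting back gives $\mathbf f_d=\sqrt2\,\mathbf f_{\mathcal L}$, i.e. the signal downsampled by (GD2) equals the one downsampled by (GD1) up to the factor $\sqrt2$.

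I expect the main obstacle to be the bookkeeping of eigenvalue orderings: one has to verify that the eigenbasis of $\mathbf L_1$ inherited from the block structure of $\mathbf U_0$ coincides with its ascending-order eigenbasis (it does, up to the usual freedom on repeated eigenvalues — which is exactly where the ``ascending order'' and the equal-part-size assumptions are consumed), and one has to track the $\tfrac12$ normalization of $\mathbf U_{\mathcal{LL}},\mathbf U_{\mathcal{HL}}$ that pins down the scaling constant. A secondary point worth stating carefully is what ``Kron reduction'' means in the normalized setting; I would take it to be the Schur complement of $\bm{\mathcal L}_0$ itself, which is consistent with \eqref{eqn:normalizedGL} and is the reading under which the stated equality (up to scale) holds verbatim.
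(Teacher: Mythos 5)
Your proposal is correct and follows essentially the same route as the paper: the block eigendecomposition of the bipartite normalized Laplacian (the paper's \eqref{eqn:normalizedGL} reordered to ascending eigenvalues), the Schur-complement computation showing the Kron-reduced Laplacian has eigenvector matrix $\sqrt{2}\,\mathbf{U}_{\mathcal{LL}}$ and eigenvalues $2\mathbf{\Lambda}_{\mathcal L}-\mathbf{\Lambda}_{\mathcal L}^2$, and the cancellation of the $\mathbf J$ flips in $\widetilde{\mathbf S}_{d}\mathbf U_0^\top$ that collapses the composite map onto $\bigl[\mathbf I\ \ \mathbf 0\bigr]$ up to the constant $\sqrt{2}$. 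The one point you make explicit that the paper leaves implicit --- that $\lambda\mapsto 2\lambda-\lambda^2$ is increasing on $[0,1]$, so the ascending-order eigenbasis of the reduced Laplacian really is $\sqrt{2}\,\mathbf U_{\mathcal{LL}}$ in the same column order --- is a worthwhile small addition of rigor rather than a different argument.
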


\begin{proof}
Without loss of generality, we can assume the vertices in the first half correspond to $\mathcal{L}$ and in the second half to $\mathcal{H}$. According to \eqref{eqn:normalizedGL}, the symmetric normalized graph Laplacian, whose eigenvalues are arranged in ascending order, is represented as
\begin{equation}
{\bm{\mathcal{ L}}}=\mathbf{V}
\begin{bmatrix}
\mathbf{\Lambda}_\mathcal{L}& \mathbf{0}\\
\mathbf{0} & \mathbf{J}(2\mathbf{I}-\mathbf{\Lambda}_\mathcal{L})\mathbf{J}
\end{bmatrix}
\mathbf{V}^\top,\label{eqn:L_ascending}
\end{equation}
where
\begin{equation}
\mathbf{V} := 
\begin{bmatrix}
\mathbf{U}_{\mathcal{LL}} &\mathbf{U}_{\mathcal{LL}} \\
\mathbf{U}_{\mathcal{HL}} & -\mathbf{U}_{\mathcal{HL}}
\end{bmatrix}
\begin{bmatrix}
\mathbf{I} &\mathbf{0} \\
\mathbf{0} & \mathbf{J}
\end{bmatrix},
\end{equation}
where we have used the same notation as in \eqref{eqn:normalizedGL}.

The Kron reduction of \eqref{eqn:L_ascending} is represented as
\begin{equation}
\begin{split}
\bm{\mathcal{L}}_{\text{reduced}}&={\bm{\mathcal{ L}}}_{\mathcal{LL}}-{\bm{\mathcal{ L}}}_{\mathcal{LH}}{\bm{\mathcal{ L}}}_{\mathcal{H}\mathcal{H}}^{-1}{\bm{\mathcal{ L}}}_{\mathcal{H}\mathcal{L}}\\
&=\mathbf{I}-\mathbf{U}_{\mathcal{LL}}(2\mathbf{\Lambda}_\mathcal{L}-2\mathbf{I})\mathbf{U}_{\mathcal{HL}}^\top\mathbf{I}\mathbf{U}_{\mathcal{HL}}(2\mathbf{\Lambda}_\mathcal{L}-2\mathbf{I})\mathbf{U}_{\mathcal{LL}}^\top\\
&=\mathbf{I}-2\mathbf{U}_{\mathcal{LL}}(\mathbf{\Lambda}_\mathcal{L}-\mathbf{I})^2\mathbf{U}_{\mathcal{LL}}^\top\\
&= \sqrt{2}\mathbf{U}_{\mathcal{LL}}(-\mathbf{\Lambda}_\mathcal{L}^2 + 2\mathbf{\Lambda}_\mathcal{L})(\sqrt{2}\mathbf{U}_{\mathcal{LL}})^\top.\\
\end{split}
\end{equation}
Therefore, the eigenvector and eigenvalue matrices of the reduced graph are $\sqrt{2}\mathbf{U}_{\mathcal{LL}}$ and $-\mathbf{\Lambda}_\mathcal{L}^2 + 2\mathbf{\Lambda}_\mathcal{L}$, respectively\footnote{Note that $\bm{\mathcal{ L}}_{\text{reduced}}$ is not guaranteed to be positive semidefinite, i.e., its eigenvalues are not always positive in general.}.

Recall that the downsampling matrix in the vertex domain can be represented as
\begin{equation}
\mathbf{S}_d=\begin{bmatrix}
\mathbf{I}_{|\mathcal{L}|} &\mathbf{0}_{|\mathcal{H}|}
\end{bmatrix},
\end{equation}
and the downsampling matrix created by spectral domain sampling in this case is
\begin{equation}
\begin{split}
\mathbf{U}_{\mathcal{LL}}\begin{bmatrix}
\mathbf{I}_{|\mathcal{L}|} &\mathbf{J}_{|\mathcal{H}|}
\end{bmatrix}
\mathbf{V}^\top & = \mathbf{U}_{\mathcal{LL}}\begin{bmatrix}
\mathbf{I}_{|\mathcal{L}|} &\mathbf{I}_{|\mathcal{H}|}
\end{bmatrix}
\begin{bmatrix}
\mathbf{U}_{\mathcal{LL}}^\top &\mathbf{U}_{\mathcal{HL}}^\top \\
\mathbf{U}_{\mathcal{LL}}^\top & -\mathbf{U}_{\mathcal{HL}}^\top
\end{bmatrix}\\
& =\begin{bmatrix}
\mathbf{I}_{|\mathcal{L}|} &\mathbf{0}_{|\mathcal{H}|}
\end{bmatrix}.
\end{split}
\end{equation}
This completes the proof.
\end{proof}

\begin{figure*}[tp]
\centering
\subfigure[][]{\includegraphics[width = 0.25\linewidth]{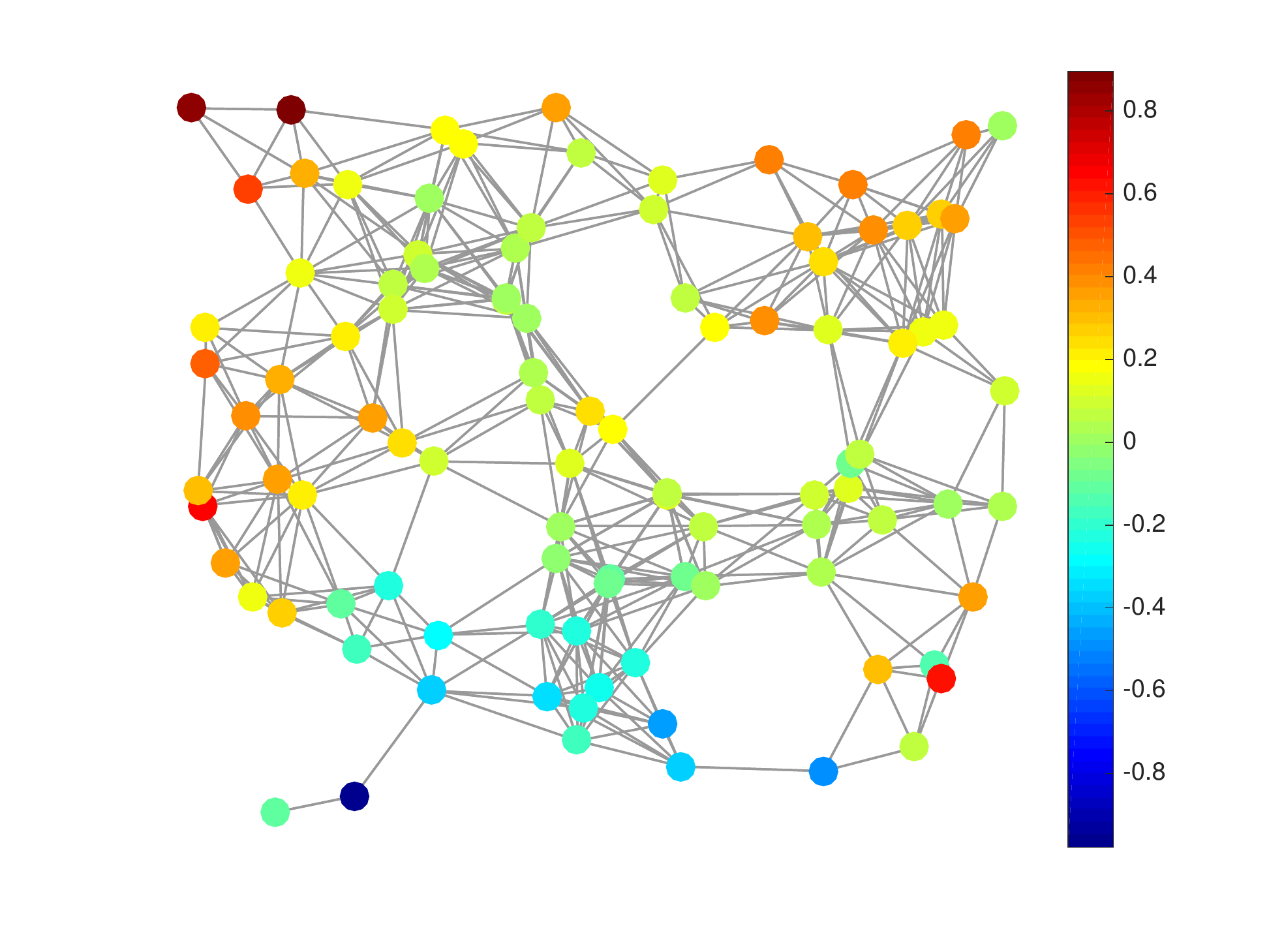}}
\subfigure[][]{\includegraphics[width = 0.25\linewidth]{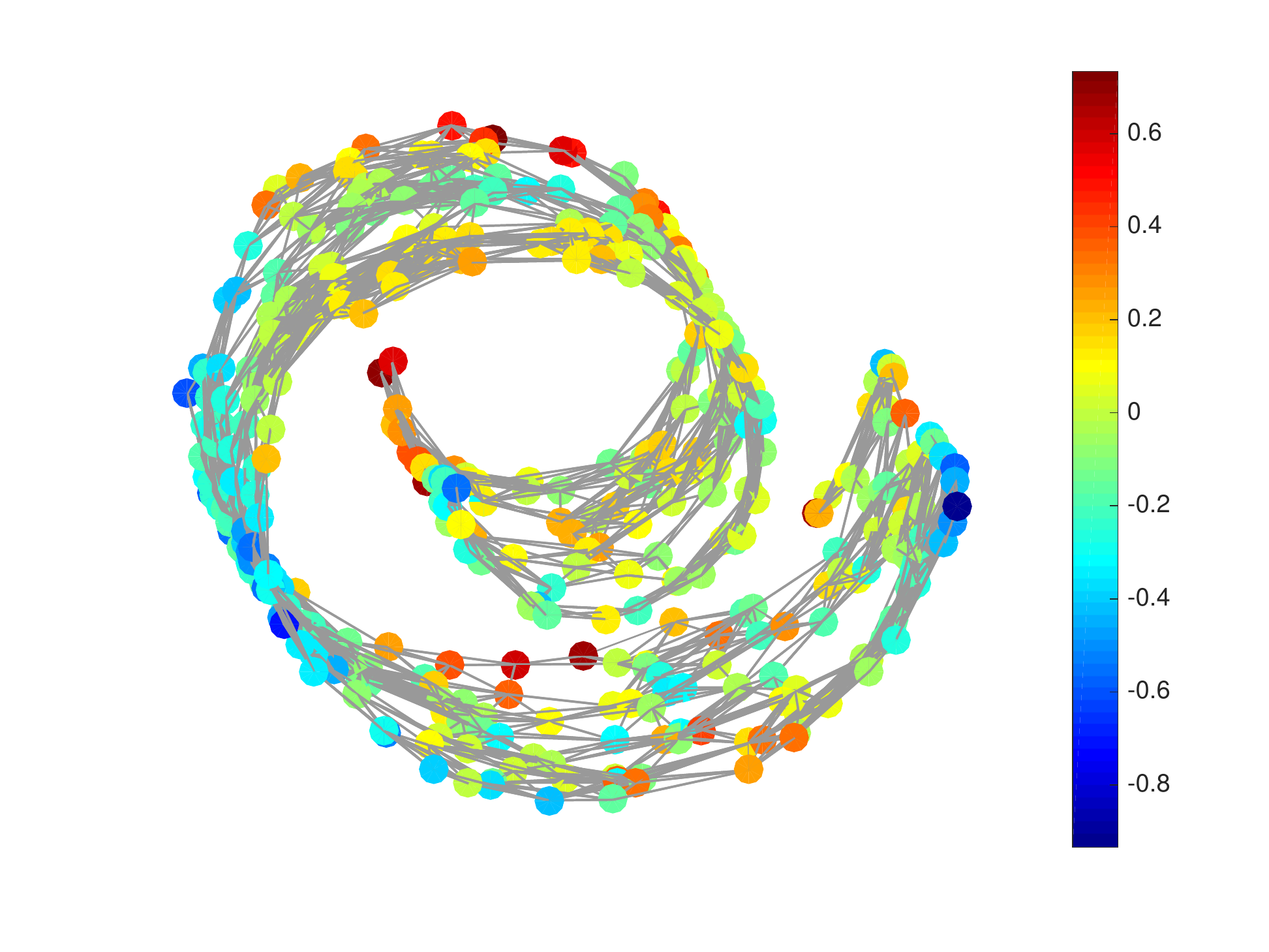}}
\subfigure[][]{\includegraphics[width = 0.25\linewidth]{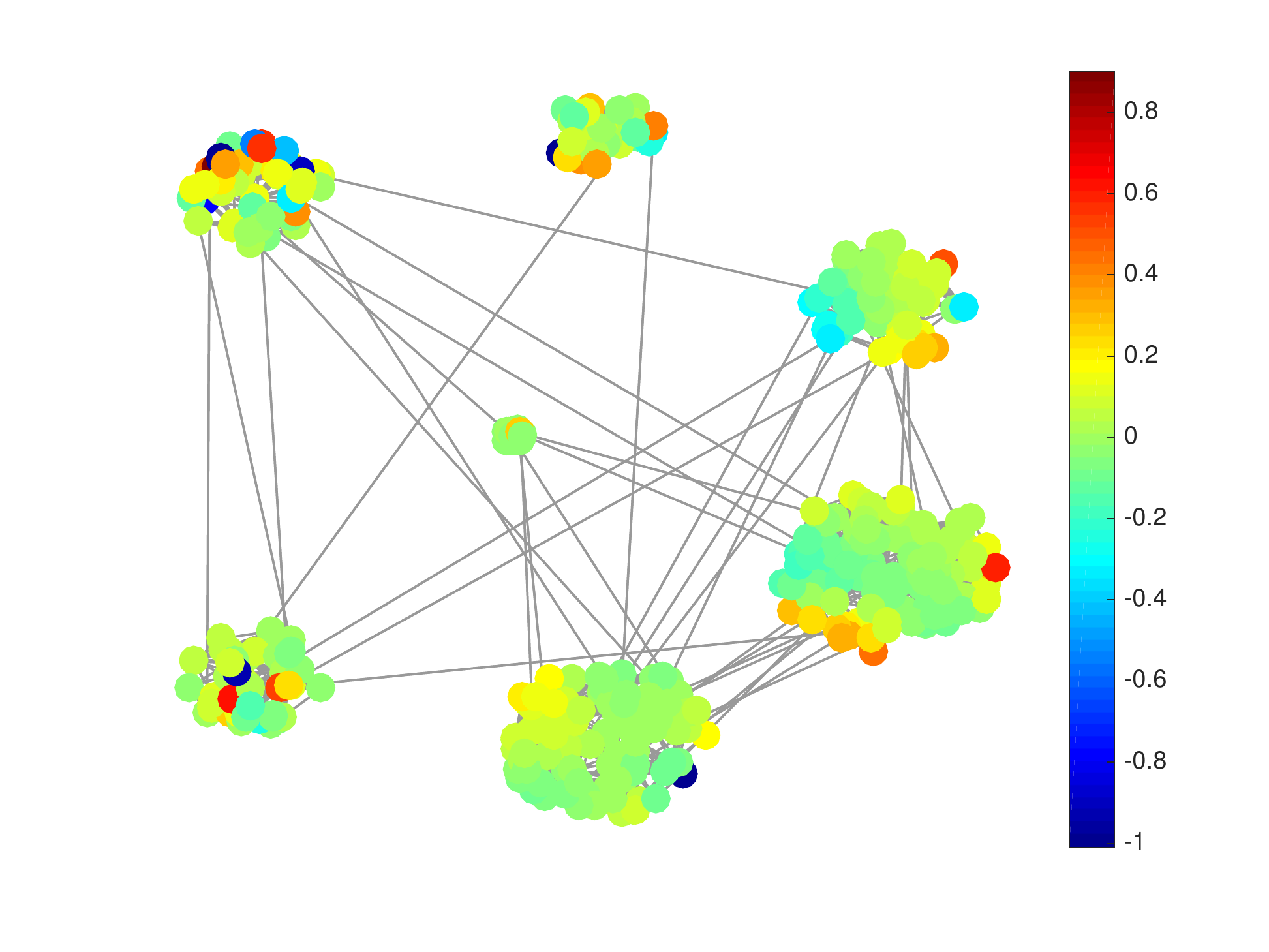}}\\
\subfigure[][]{\includegraphics[width = 0.25\linewidth]{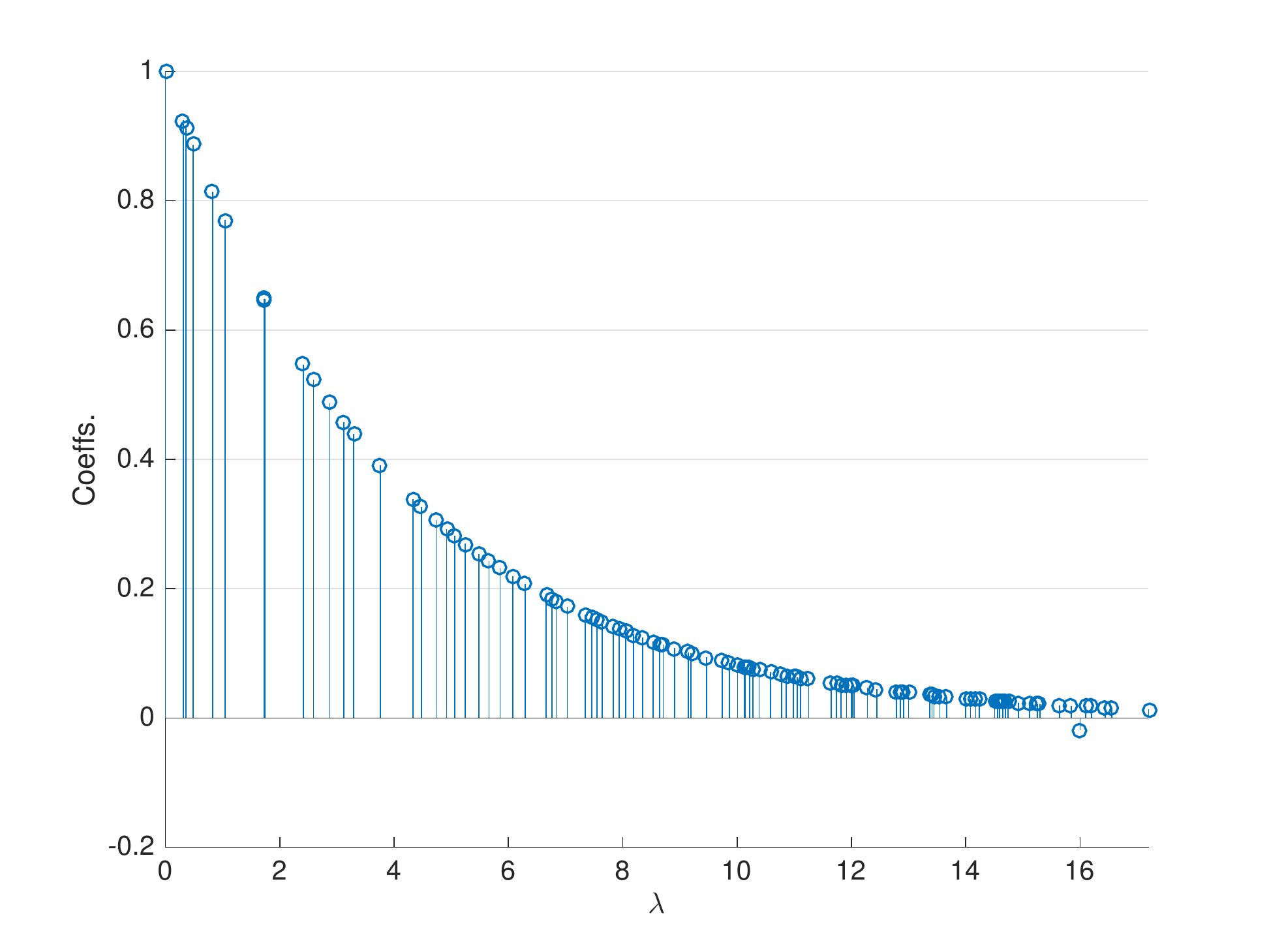}}
\subfigure[][]{\includegraphics[width = 0.25\linewidth]{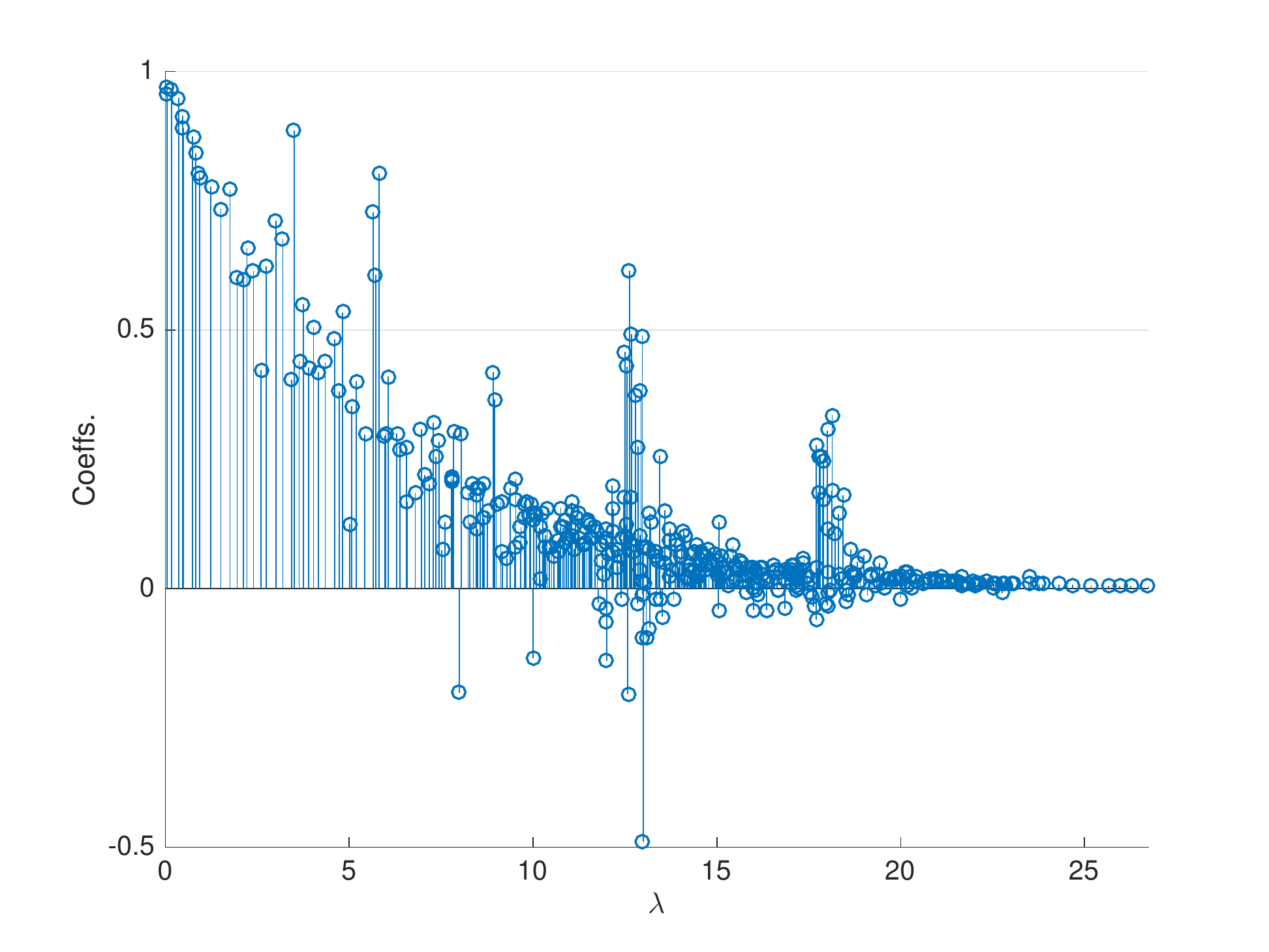}}
\subfigure[][]{\includegraphics[width = 0.25\linewidth]{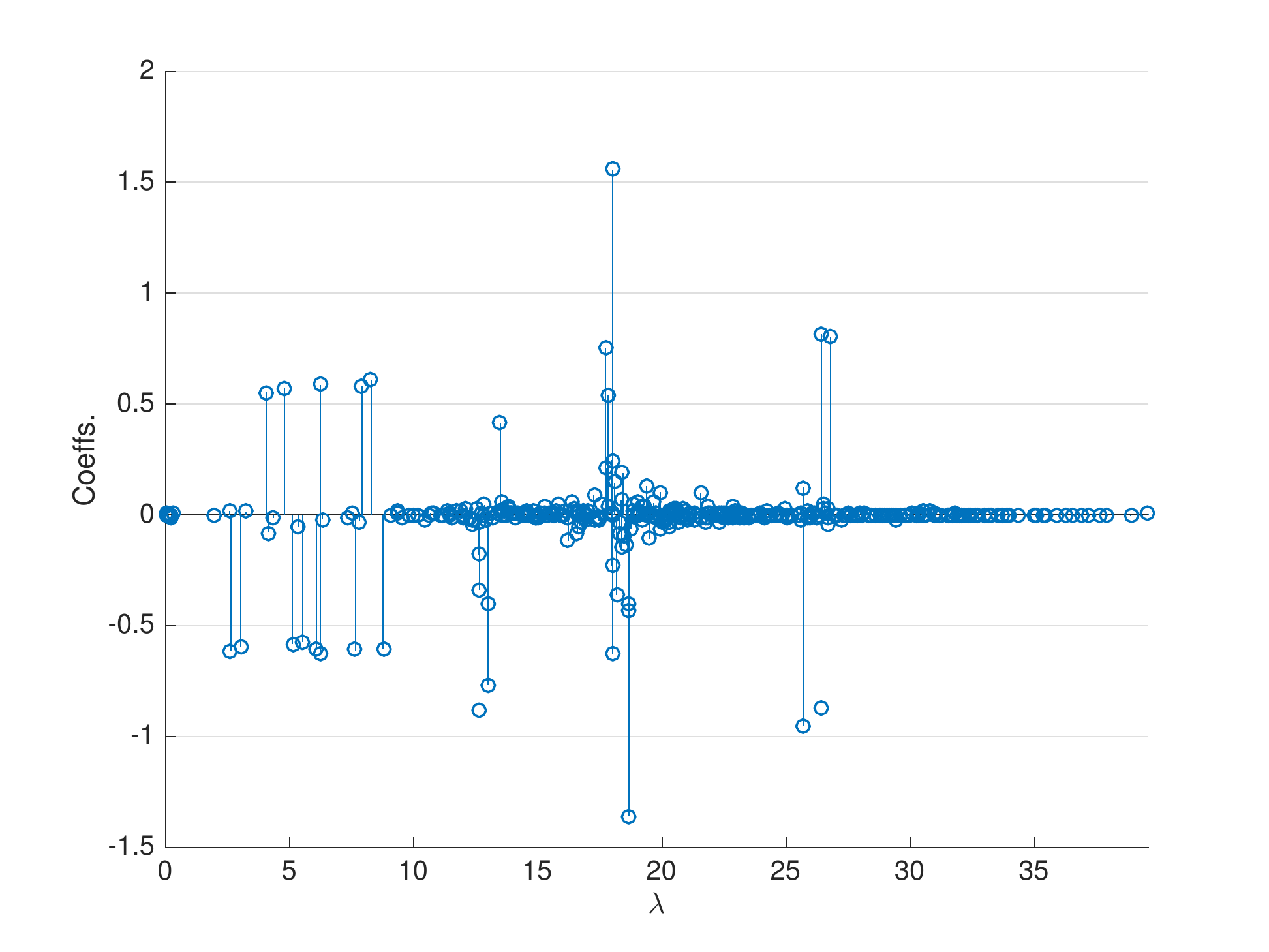}}
\caption{Original graph signals. Top row: Signals in the vertex domain. Bottom row: Signals in the graph frequency domain. From left to right: Community graph ($N=400$), sensor graph ($N=100$), and Swiss roll graph ($N=400$).
}
\label{fig:spectrum}
\end{figure*}

\subsection{Perfect Reconstruction Condition for Bipartite Graphs}
The following theorem reveals that the perfect reconstruction condition of the proposed method coincides with that of vertex domain sampling in a special case.
\begin{theo}
If the underlying graph is a bipartite graph and the symmetric normalized graph Laplacian is used as the variation operator, the perfect reconstruction condition for the two-channel CS GFB using the vertex domain sampling is identical to that using spectral domain sampling.
\end{theo}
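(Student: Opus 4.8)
The plan is to show directly that, under the stated hypotheses, the two systems of scalar equations --- \eqref{eq:pr2-1}--\eqref{eq:pr2-2} for the vertex-domain sampling and \eqref{eq:pr1-1}--\eqref{eq:pr1-2} for the spectral-domain sampling --- are one and the same, once the argument $2-\lambda$ appearing in the former is identified with a reindexed eigenvalue. The bridge is the well-known spectral symmetry of bipartite graphs, the same phenomenon already exploited in \eqref{eqn:normalizedGL}.

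First I would record the two bookkeeping facts. (i) Using the block structure \eqref{eqn:normalizedGL}, the symmetric normalized graph Laplacian of a bipartite graph $\mathcal{B}=(\mathcal{L},\mathcal{H},\mathcal{E})$ has a spectrum symmetric about $\lambda=1$: if $[\mathbf{x}^\top\ \mathbf{y}^\top]^\top$, partitioned along $\mathcal{L}$ and $\mathcal{H}$, is an eigenvector for $\lambda$, then $[\mathbf{x}^\top\ -\mathbf{y}^\top]^\top$ is an eigenvector for $2-\lambda$, and this is a multiplicity-preserving bijection of the spectrum onto itself. Hence, once the eigenvalues are listed in ascending order $\lambda_0\le\cdots\le\lambda_{N-1}$, we have $\lambda_{N-i-1}=2-\lambda_i$ for every $i$ (choosing, in the eigenspace of $\lambda=1$ if present, a basis compatible with the involution). (ii) The vertex-domain PR requirement $\mathbf{T}_v=c^2\mathbf{I}_N$ of \eqref{eq:t} is equivalent to \eqref{eq:pr2-1}--\eqref{eq:pr2-2} holding for every $\lambda$ in the graph spectrum, not for all real $\lambda$: the filters $\mathbf{H}_k=\mathbf{U}H_k(\mathbf{\Lambda})\mathbf{U}^\top$ and $\mathbf{G}_k$ depend on their spectral kernels only through the samples at eigenvalues, and the ``spectral folding'' diagonal sign operator $\mathbf{S}_{u,0}\mathbf{S}_{d,0}-\mathbf{S}_{u,1}\mathbf{S}_{d,1}$, i.e.\ the $+1/-1$ indicator of $\mathcal{L}/\mathcal{H}$, maps $\mathbf{u}_i$ to $\pm\mathbf{u}_{i'}$ with $\lambda_{i'}=2-\lambda_i$ by (i); diagonalizing $\mathbf{T}_v$ in the GFT basis then yields precisely the two equations \eqref{eq:pr2-1}--\eqref{eq:pr2-2} at each eigenvalue (up to the overall constant).

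Finally I would substitute $2-\lambda_i=\lambda_{N-i-1}$ from fact (i) into \eqref{eq:pr2-1}--\eqref{eq:pr2-2} read at $\lambda=\lambda_i$: the first becomes $G_0(\lambda_i)H_0(\lambda_i)+G_1(\lambda_i)H_1(\lambda_i)=c^2$, which is \eqref{eq:pr1-1}, and the second becomes $G_0(\lambda_i)H_0(\lambda_{N-i-1})-G_1(\lambda_i)H_1(\lambda_{N-i-1})=0$, which is \eqref{eq:pr1-2}; conversely the spectral-domain equations return the vertex-domain ones on the spectrum. Since each step is an equivalence and $i\mapsto N-i-1$ ranges over the same index set, the two PR conditions coincide, which proves the theorem. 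The main obstacle I anticipate is not algebraic but the careful justification of the ordering statement in fact (i) --- in particular handling repeated eigenvalues and the $\lambda=1$ eigenspace so that $i\mapsto N-i-1$ is genuinely the folding involution --- together with making precise that the vertex-domain condition must be read on the spectrum and that its $2-\lambda$ originates from exactly this folding bijection; once these are pinned down, the identification of the two systems is immediate.
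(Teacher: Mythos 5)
Your proposal is correct and follows essentially the same route as the paper: the key observation in both is that for a bipartite graph the symmetric normalized Laplacian spectrum is symmetric about $\lambda=1$ with $\lambda_{\max}=2$, so $\lambda_{N-i-1}=2-\lambda_i$ in ascending order, and substituting this identifies \eqref{eq:pr2-1}--\eqref{eq:pr2-2} with \eqref{eq:pr1-1}--\eqref{eq:pr1-2}. Your additional care about reading the vertex-domain condition on the spectrum and about repeated eigenvalues is a reasonable elaboration of details the paper leaves implicit, but it does not change the argument.
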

\begin{proof}
The eigenvalue distribution of the normalized graph Laplacian of a bipartite graph is symmetric with respect to $\lambda = 1$, and the maximum eigenvalue is $2$.
Therefore, $\lambda_{N-1-i} = 2 - \lambda_i$, which implies that the perfect reconstruction condition \eqref{eq:pr1-1} and \eqref{eq:pr1-2} is identical to \eqref{eq:pr2-1} and \eqref{eq:pr2-2}.
\end{proof}

\section{Experiments}\label{sec:VI}
Nonlinear approximations and denoising for synthetic graph signals were selected as target applications of the numerical performance comparisons. The bases of the comparisons were existing GWTs/GFBs. Hereafter, we abbreviate the proposed filter bank as \emph{GraphSS}, where SS refers to spectral sampling. We used GSPBOX \cite{Perrau2014} for graph construction and visualizations.

\subsection{Setup}
The prototype filters used for GraphSSs were:
\begin{itemize}
   \item The ideal filters (denoted as GraphSS-I). The set of ideal filters clearly satisfies \eqref{eq:pr1-1} and \eqref{eq:pr1-2}.
  \item  The orthogonal filter set designed with the Meyer kernel as $H^{\text{freq}}(\omega)$ (denoted as GraphSS-O).
  \item The biorthogonal filter set based on 9/7-CDF filters (denoted as GraphSS-B).
\end{itemize}
Their spectral characteristics are shown in Figs. \ref{fig:CSSGWTs_proposed}(a)--(c). As described above, the proposed transforms can be applied to both combinatorial and symmetric normalized graph Laplacians, so we decided to examine both cases. In what follows, the combinatorial version is specified by (C), e.g., GraphSS-O(C), whereas the normalized one is specified by (N).

The compared methods were:
\begin{itemize}
  \item GraphQMF \cite{Narang2012}
  \item GraphBior \cite{Narang2013}
  \item GraphFC \cite{Sakiya2016a}
  \item Diffusion wavelet (abbreviated as DiffWav) \cite{Coifma2006}
  \item SubGFB \cite{Trembl2016}
  \item Graph Laplacian pyramid (abbreviated as GLP) \cite{Shuman2016}
\end{itemize}
We used MATLAB codes provided by the authors. Note that the existing GraphQMF, GraphBior, and GraphFC require bipartition of the underlying graph. We used the coloring-based bipartition \cite{Narang2013,Narang2012,Harary1977} as suggested by the authors. All transforms except SubGFB decomposed graph signals into two-level octave bands. SubGFB needs to decompose the original graph into several subgraphs for a multi-level transform. We used a one-level cascade suggested by the authors' code. Although DiffWav and GLP are not CS transforms, they are used for a comparison purpose.

Three synthetic graph signals shown in Fig. \ref{fig:spectrum} were used in the experiments. They have different characteristics; The first graph signal shown in Fig. \ref{fig:spectrum}(a) is smooth in the graph frequency domain, where $\widetilde{f}[i] = \exp (-\lambda_i/4)$, and its spectrum is shown in Fig. \ref{fig:spectrum}(d). The second one is the sum of the spectral localized signal and the exponential one, as shown in Figs. \ref{fig:spectrum}(b) and (e). The third signal shown in Fig. \ref{fig:spectrum}(c) is localized both in the vertex and spectral domains (see Figs. \ref{fig:spectrum}(f)). We designed the signal by using the method in \cite{Shuman2015}. Specifically, $\mathbf{f} = \sum^{4}_{j = 1} \mathbf{f}_j / \| \mathbf{f}_j \|_{\infty}$, where
\begin{equation}
f_j[i] = \mathbbm{1}_{\{\text{vertex $i$ is in cluster $j$}\}}\sum^{N-1}_{\ell =0}u_{\ell}[i] \mathbbm{1}_{\{ \underline{\tau}_j \le \lambda_{\ell} \le \overline{\tau}_j \}}.
\end{equation}
We took the sequence $\{ \underline{\tau}_j, \overline{\tau}_j \}_{j = 1,\ldots , 4}$ to be $[\lambda_{9}, \lambda_{29}], [\lambda_{59}, \lambda_{79}], [\lambda_{149}, \lambda_{169}], [\lambda_{299}, \lambda_{319}]$.


\begin{figure*}[tp]
\centering
\subfigure[][Random sensor]{\includegraphics[width = 0.32\linewidth]{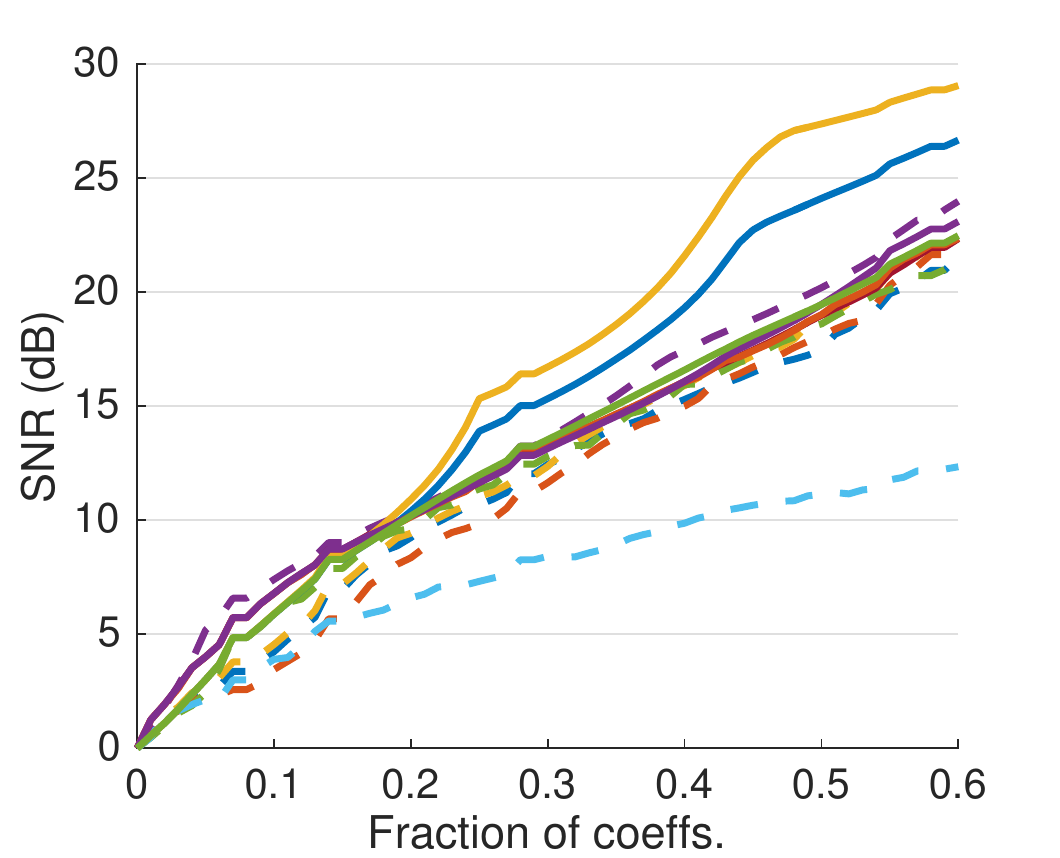}}
\subfigure[][Swiss roll]{\includegraphics[width = 0.32\linewidth]{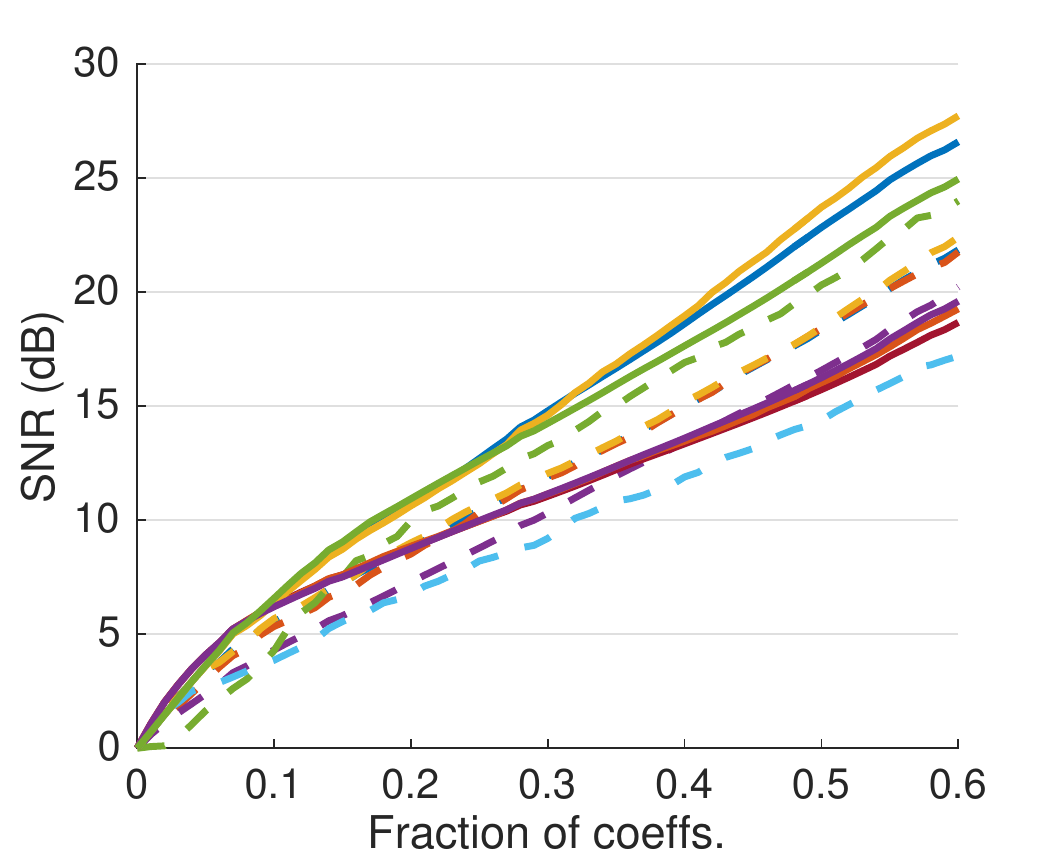}}
\subfigure[][Community]{\includegraphics[width = 0.32\linewidth]{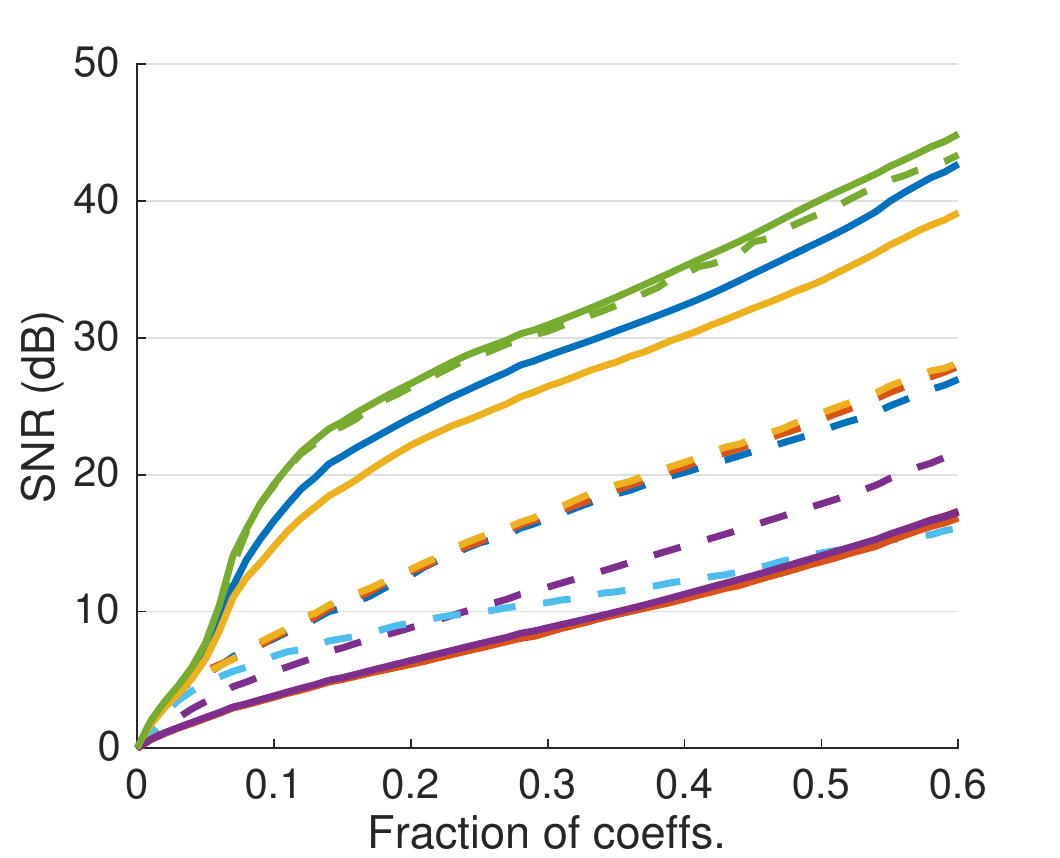}}\\
\subfigure[][Legend]{\includegraphics[width = 0.55\linewidth]{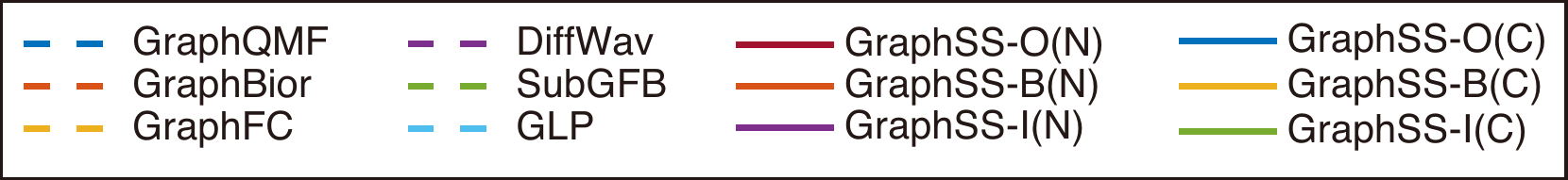}}
\caption{Results of nonlinear approximation.
When the signals contain spectrally-smooth components like (a) and (b), GraphSSs with the combinatorial graph Laplacian significantly outperform the existing methods. For the signal localized in the vertex domain (Fig. \ref{fig:spectrum}(c)), GraphSSs and SubGFB have comparable performances.
}
\label{fig:nla}
\end{figure*}


\subsection{Nonlinear Approximation}
In nonlinear approximation, we keep the fraction of the transformed coefficients having high absolute values and set the remaining coefficients to zero. Figs. \ref{fig:nla}(a)--(c) show the results. GraphSSs with the combinatorial Laplacian gave better SNRs than the conventional methods did and the ones with the normalized graph Laplacian.

When the spectrum was smooth, GraphSSs significantly outperformed the existing methods, as shown in Fig. \ref{fig:nla}(a). For the signals containing vertex-localized components, the gap between the SNRs was smaller, but the proposed method still had better reconstruction quality. Interestingly, GraphSS-I was not always the best among the proposed transforms (see Figs. \ref{fig:nla}(a) and (b)).

\begin{table*}[tp]
\caption{Denoising Results: SNR (dB). Average of $100$ Runs. The highest SNRs are shown in bold and the second highest ones are underlined.}
\label{table:snr}
\centering
\begin{tabular}{c|r|r|r|r|r|r|r|r|r}
\hline
Methods / Graphs & \multicolumn{3}{c|}{Random sensor} & \multicolumn{3}{c|}{Swiss roll} & \multicolumn{3}{c}{Community} \\\hline\hline
$\sigma$& $1/16$& $1/8$&$1/4$ &  $1/16$& $1/8$&$1/4$ &  $1/16$& $1/8$&$1/4$ \\\hline
GraphQMF&11.61&7.80&3.81&9.96&5.73&2.67&10.43&5.89&1.23\\
GraphBior&11.70&7.82&3.70&10.03&5.73&2.60&10.33&5.93&1.20\\
GraphFC&11.69&7.85&3.82&9.88&5.65&2.70&10.50&5.91&0.96\\
DiffWav&4.18&3.26&2.25&1.60&1.00&0.19&1.77&1.08&0.08\\
SubGFB&11.23&6.82&2.42&10.18&5.97&1.71&\textbf{14.71}&\textbf{9.83}&\textbf{2.65}\\
GLP&8.36&5.69&3.73&9.73&5.51&1.71&7.67&4.85&1.69\\\hdashline
GraphSS-O(N)&11.22&7.75&4.51&9.26&5.91&3.16&7.30&2.77&-0.42\\
GraphSS-O(C)&\underline{12.87}&\underline{10.32}&\underline{6.28}&\underline{11.98}&\underline{7.78}&\textbf{3.76}&12.39&7.58&1.81\\
GraphSS-B(N)&11.20&7.75&4.52&9.29&5.89&3.06&7.25&2.82&-0.48\\
GraphSS-B(C)&\textbf{13.93}&\textbf{10.78}&\textbf{6.32}&\textbf{12.08}&\textbf{7.87}&\underline{3.75}&12.06&7.33&1.69\\
GraphSS-I(N)&11.22&7.68&4.46&9.43&5.90&3.10&7.33&2.81&-0.43\\
GraphSS-I(C)&11.74&9.32&5.76&11.56&7.68&3.65&\underline{13.05}&\underline{8.02}&\underline{2.03}\\\hline\hline
Noisy&13.33&7.34&1.39&11.85&5.83&-0.15&10.10&4.09&-2.00\\\hline
\end{tabular}
\end{table*}%

\begin{figure*}[t]
\centering
\subfigure[][Original]{\includegraphics[width = 0.3\linewidth]{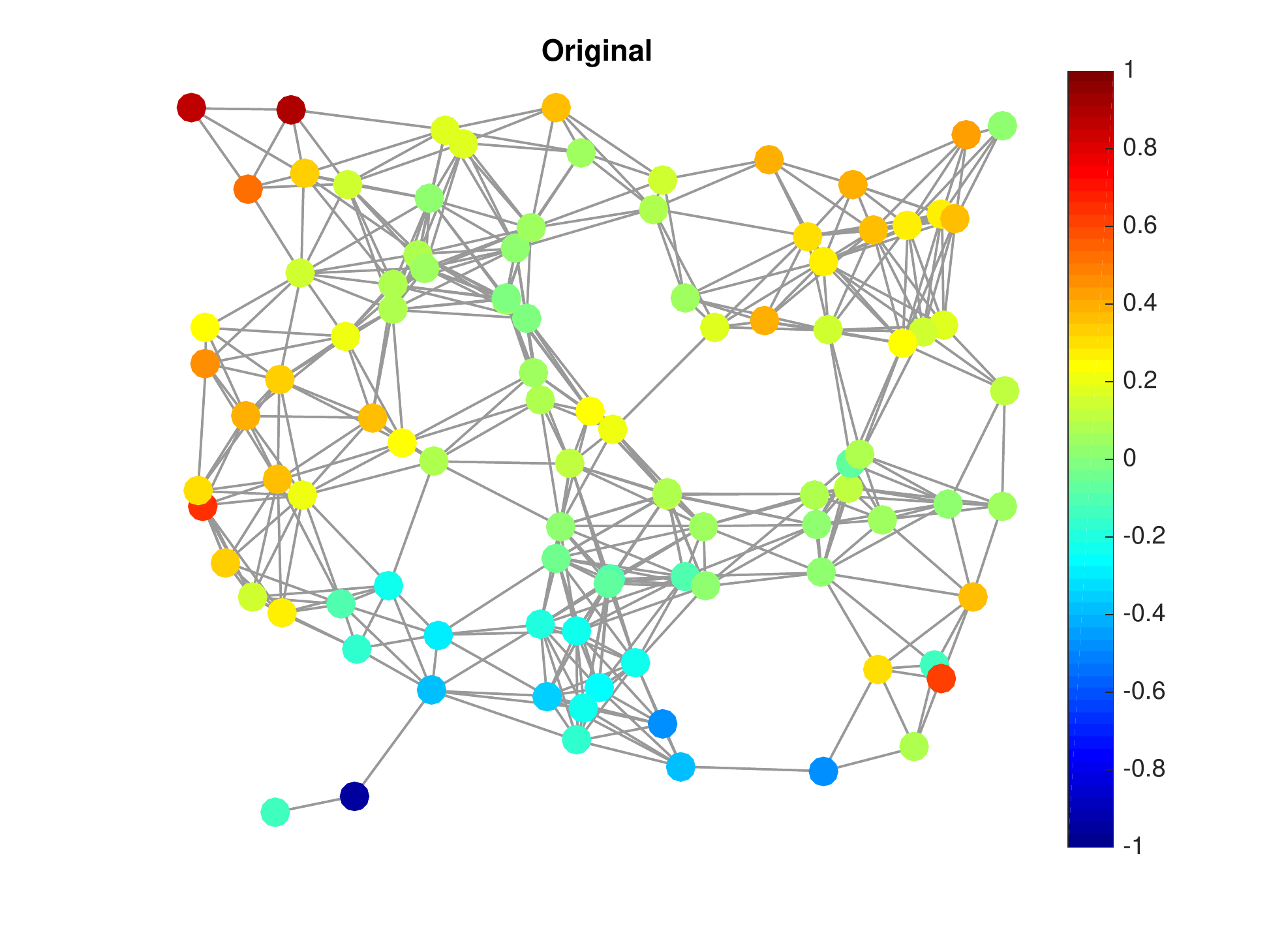}}
\subfigure[][Noisy (SNR: 1.42 dB)]{\includegraphics[width = 0.3\linewidth]{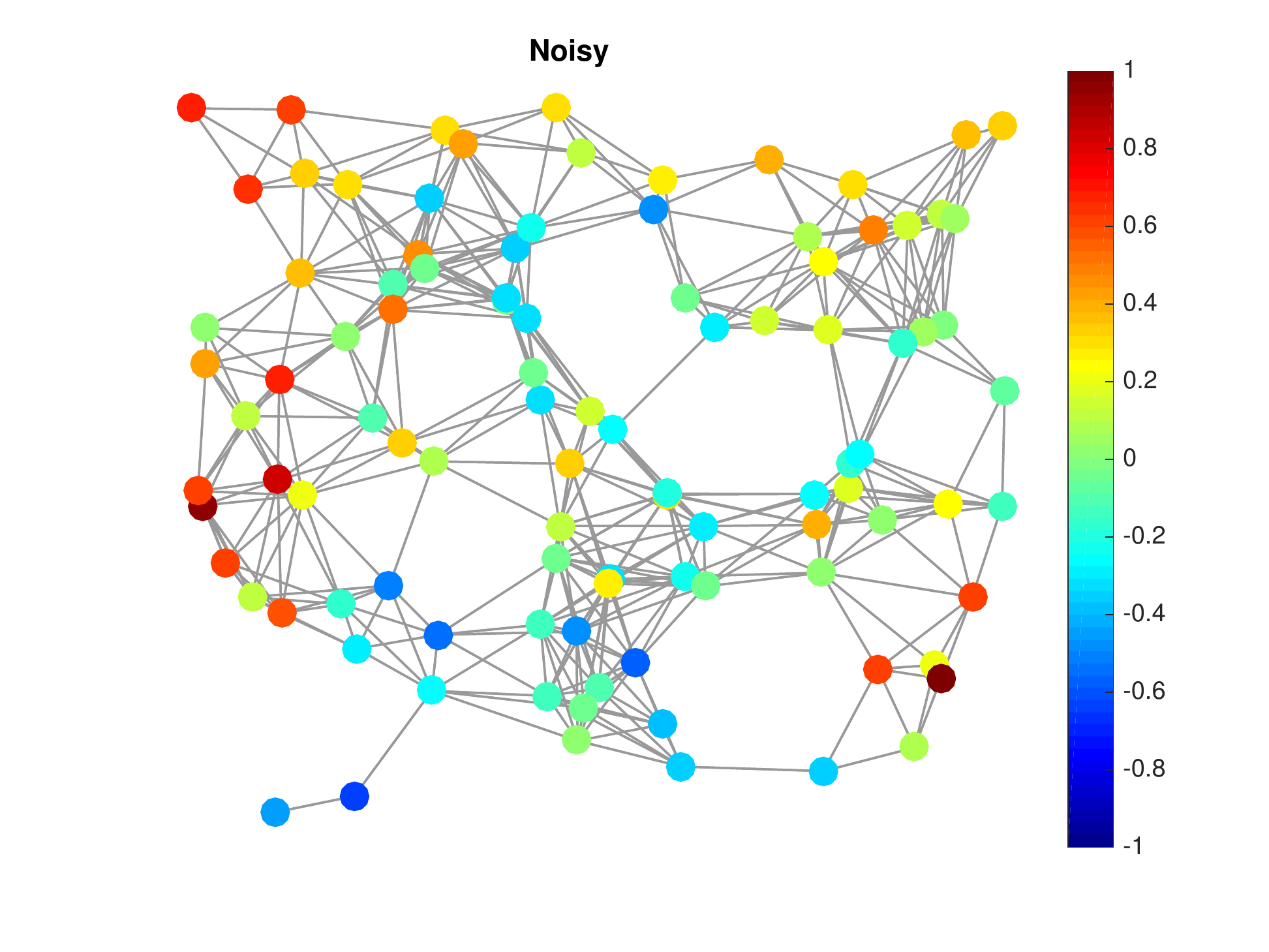}}
\subfigure[][GraphQMF (SNR: 3.72 dB)]{\includegraphics[width = 0.3\linewidth]{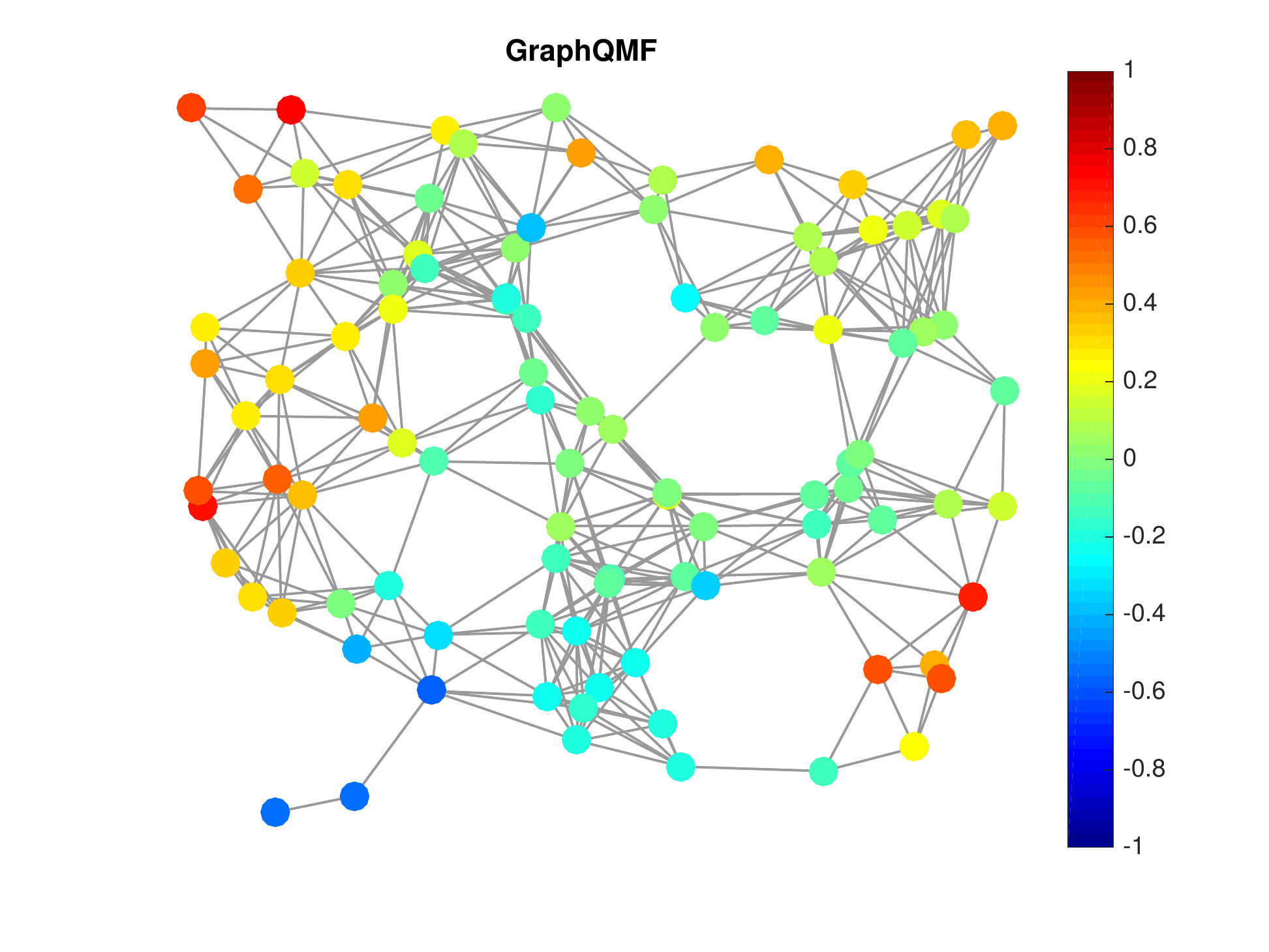}}\\
\subfigure[][GraphBior (SNR: 3.66 dB)]{\includegraphics[width = 0.3\linewidth]{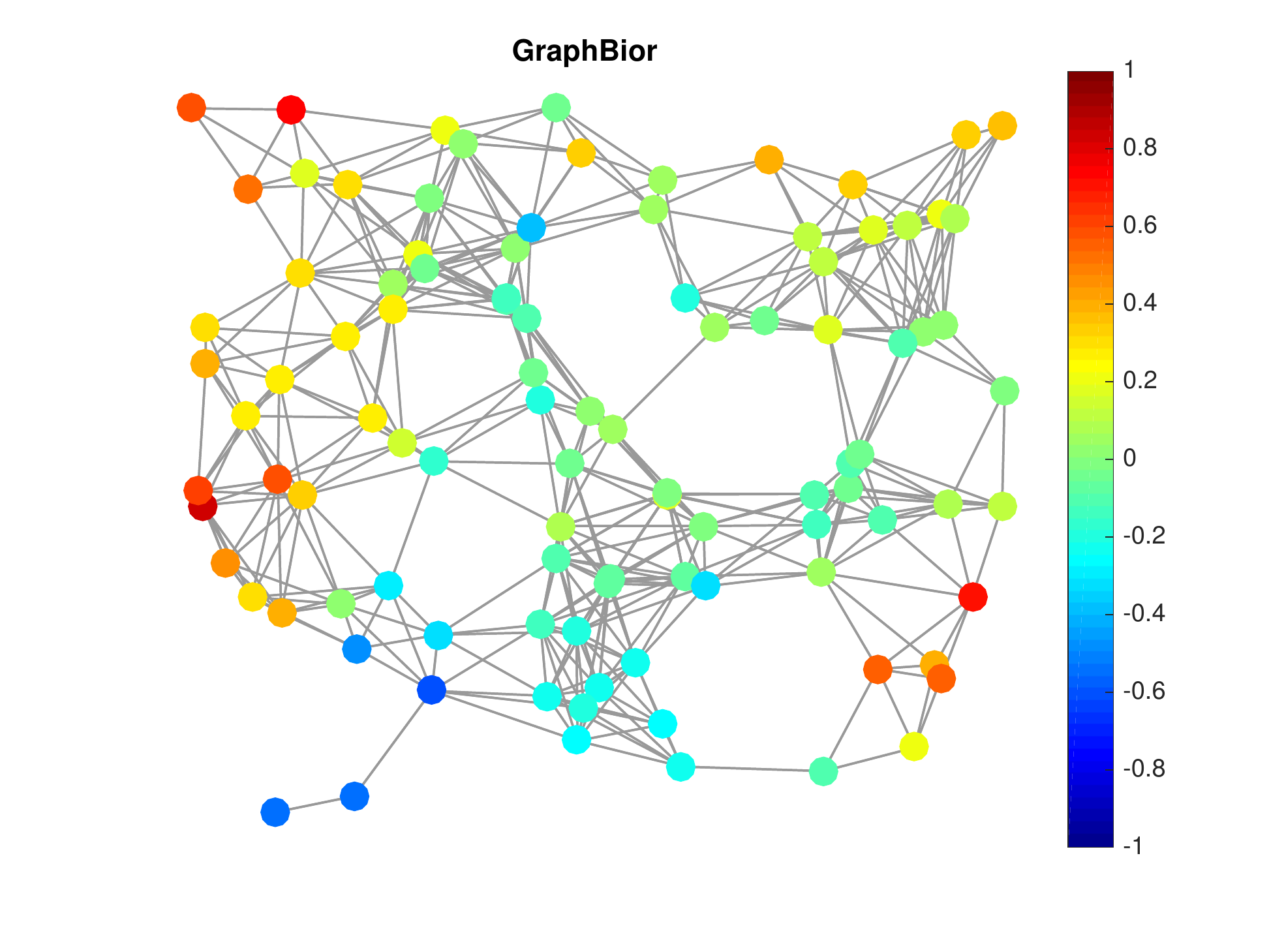}}
\subfigure[][GraphFC (SNR: 3.63 dB)]{\includegraphics[width = 0.3\linewidth]{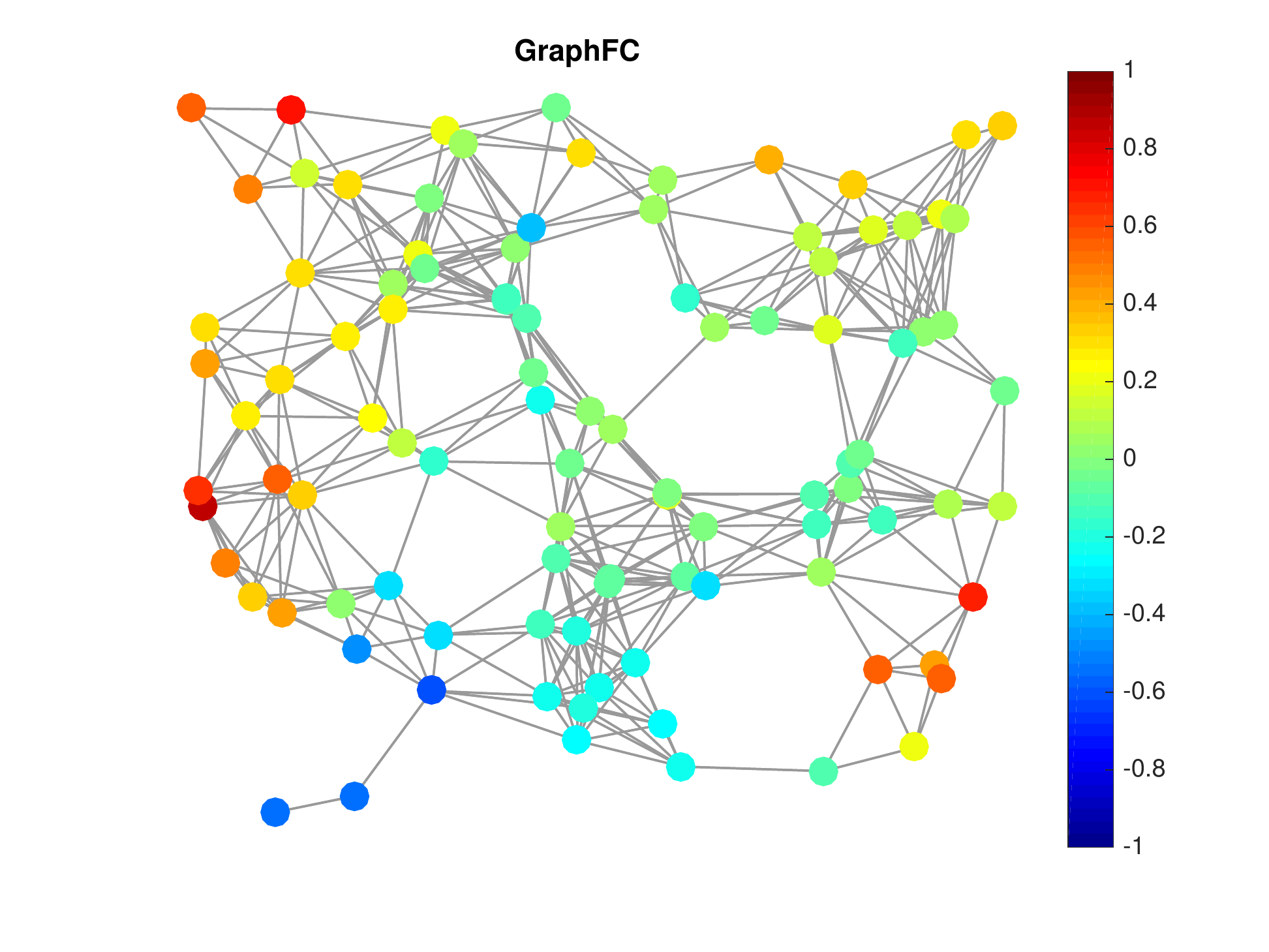}}
\subfigure[][DiffWav (SNR: 1.40 dB)]{\includegraphics[width = 0.3\linewidth]{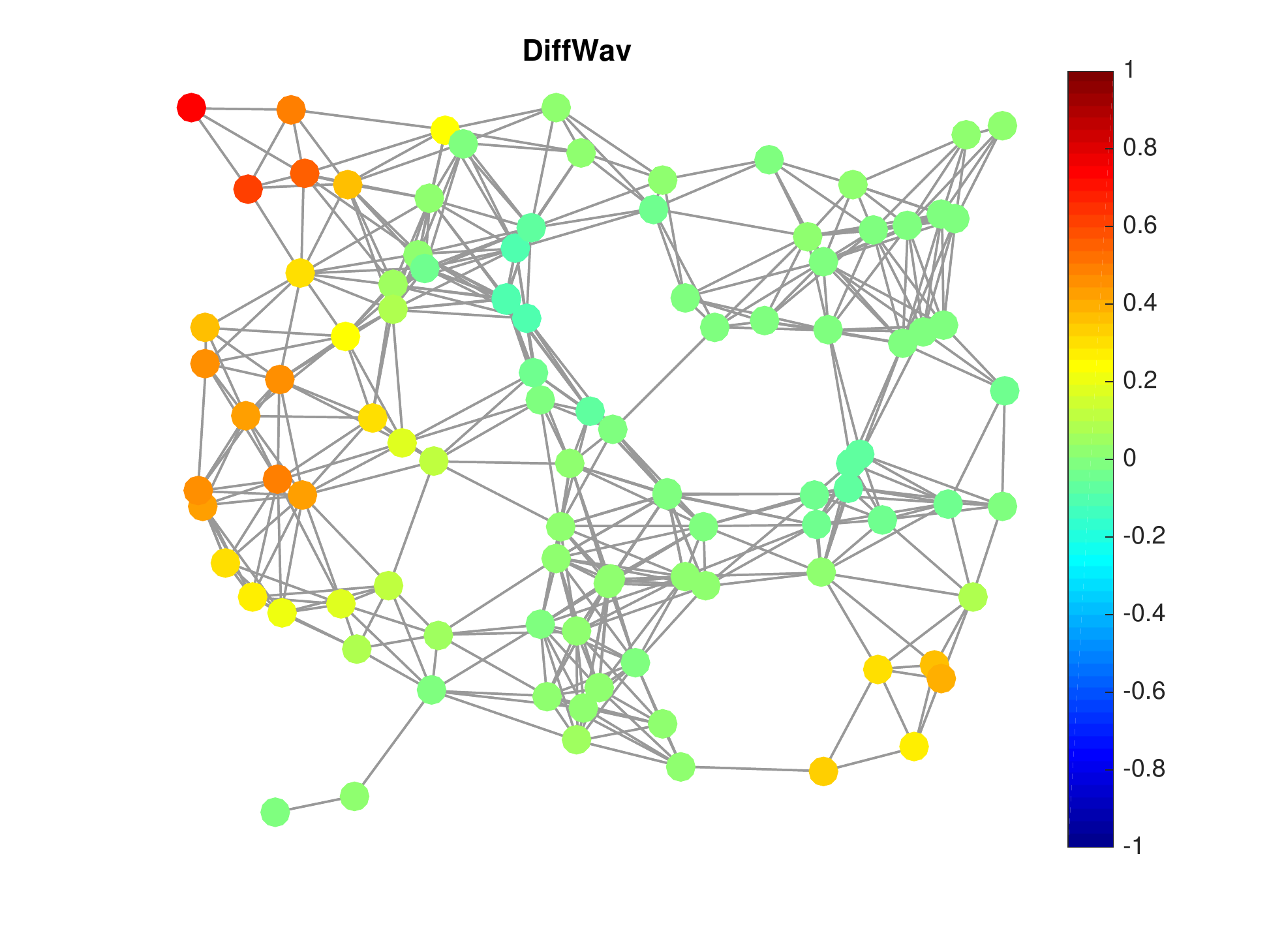}}\\
\subfigure[][SubGFB (SNR: 1.88 dB)]{\includegraphics[width = 0.3\linewidth]{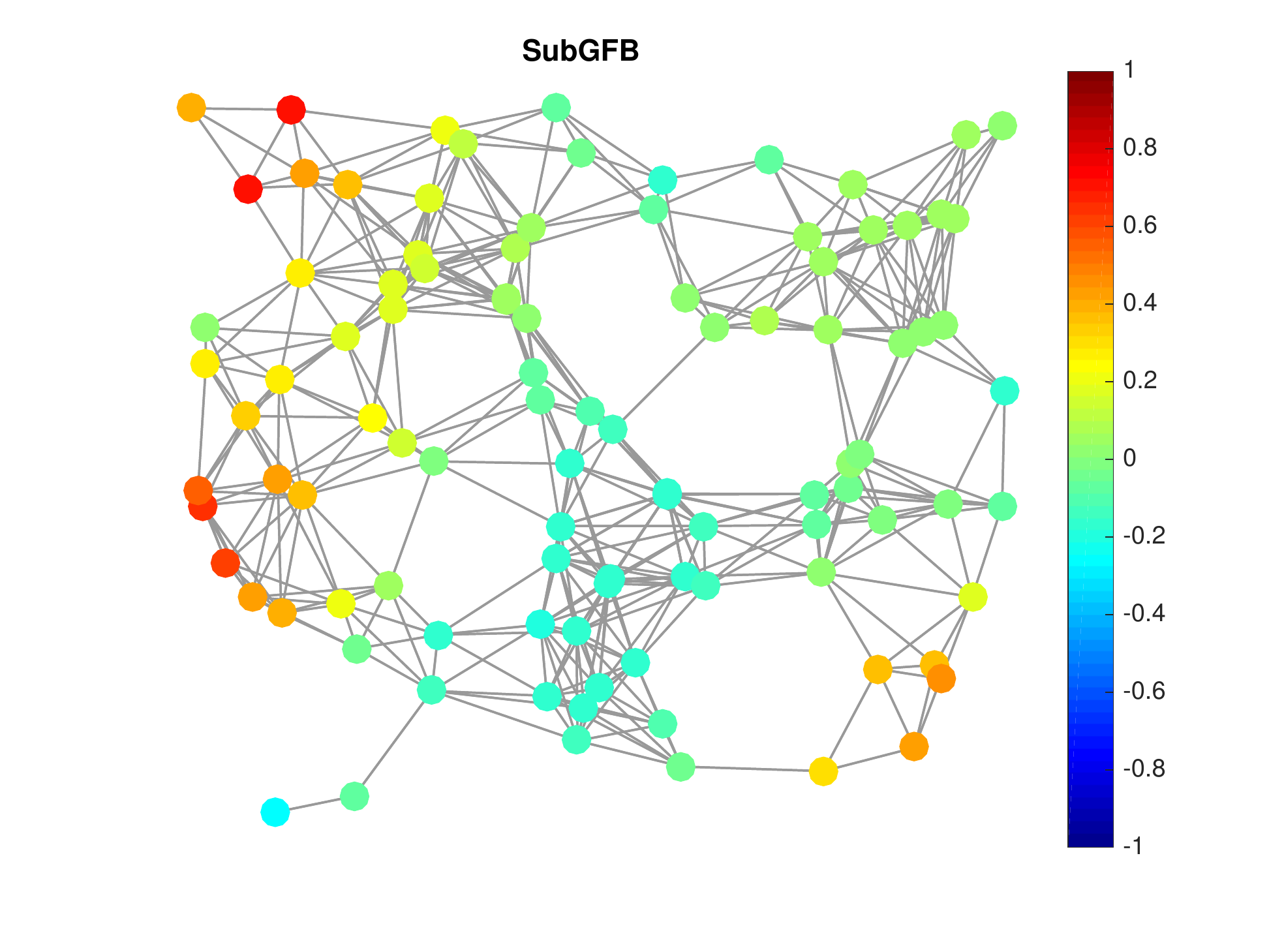}}
\subfigure[][GLP (SNR: 3.45 dB)]{\includegraphics[width = 0.3\linewidth]{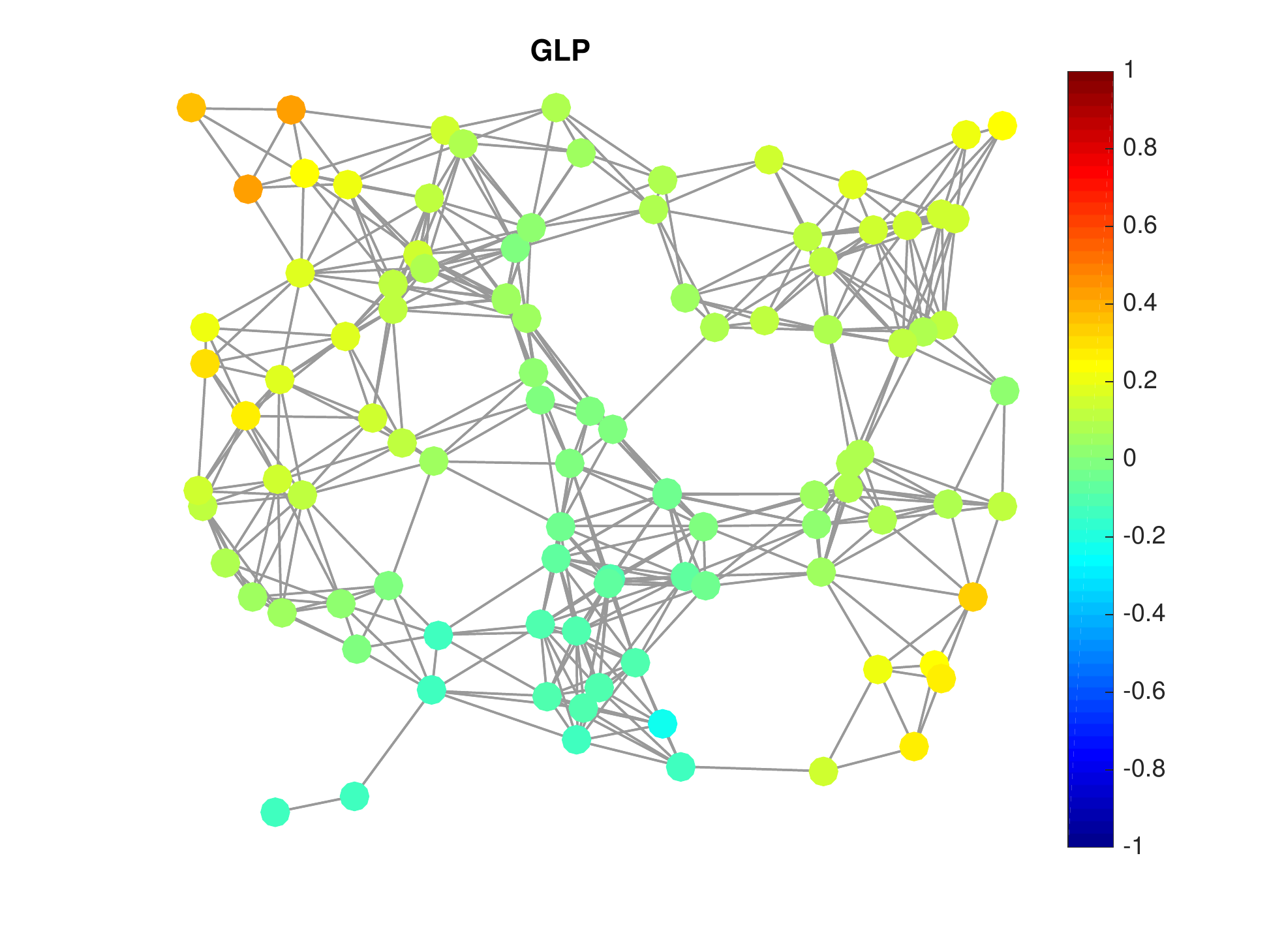}}
\subfigure[][GraphSS-B(C) (SNR: 5.08 dB)]{\includegraphics[width = 0.3\linewidth]{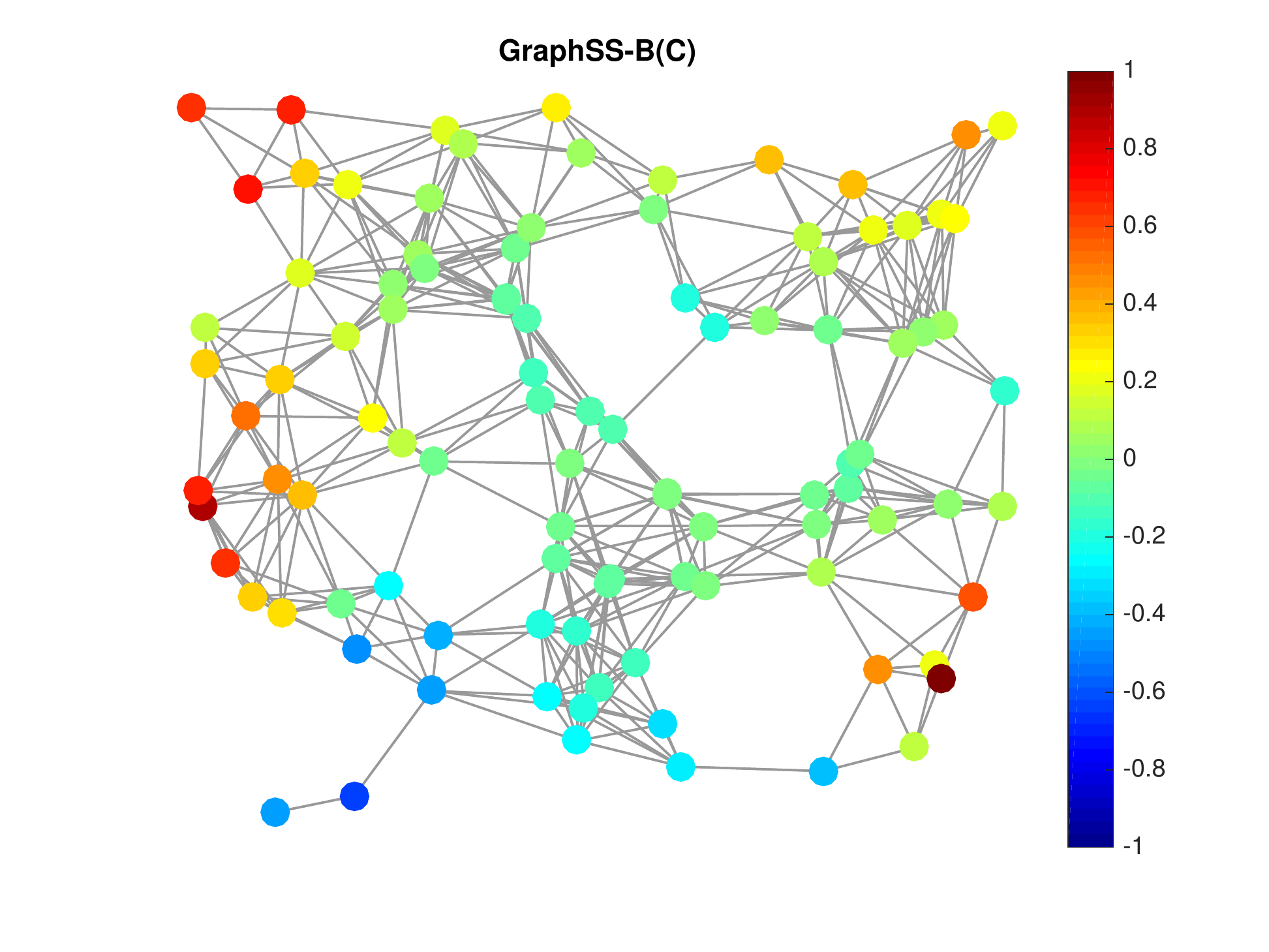}}
\caption{Denoising results.
Some signal values have large errors after denoising using graph filter banks with vertex domain sampling.  In contrast, errors when using the proposed method are small for almost all vertices.
}
\label{fig:denoising}
\end{figure*}

\subsection{Denoising}
Hard thresholding with a threshold of $3\sigma$ was performed on each subband except the lowest to remove additive white Gaussian noise with variance $\sigma^2$. The SNRs after denoising are compared in Table \ref{table:snr}. The proposed methods with the combinatorial graph Laplacian outperformed the conventional methods in most cases, except SubGFB for the vertex-localized signals. Because SubGFB is a Haar-like transform in the vertex domain, it works well for this case. In contrast, GraphSSs present better performance than SubGFB (and other transforms) for signals with spectrally-smooth components.
In addition, due to the small number of scales, DiffWav would have the worst performance in this case.
Similar to the nonlinear approximation, GraphSSs with the combinatorial graph Laplacian are better than those with the normalized one.

The denoised signals for the sensor graph are shown in Fig.  \ref{fig:denoising} where $\sigma = 1/4$. As in the numerical comparison, GraphSS-B(C) presented cleaner signals than others.

\subsection{Discussion: Ideal vs. Non-Ideal Filters}\label{sec:idealvsnonideal}
An interesting and somewhat counter-intuitive result of the experiments was that, in some cases, non-ideal filters (GraphSS-O and GraphSS-B) outperformed the ideal filter (GraphSS-I). To understand why this happened, we should note that our approach involves spectral folding of frequencies based on their indices, i.e., their ordering in the frequency domain. Thus, each ideal filter removes exactly half the frequencies, but this may correspond to very different ranges of variation, as will be seen next. Also, non-ideal filters are more localized in the vertex domain. We discuss these two advantages of non-ideal filters in what follows.

\subsubsection{Passband Widths}
In the case of bipartite graphs whose variation operator is the symmetric normalized Laplacian, the frequencies are naturally symmetric around $\lambda = 1$, which is not only the center of symmetry for the frequencies but also the middle point of the range of frequency variation.

Instead, in our setting, our spectral folding imposes symmetry. For an arbitrary graph, the exact distribution of frequencies does not exhibit symmetry and there could be more low (high) variation eigenvectors than low (high) variation ones. Thus, the ``ideal'' low-pass filter, i.e., the one preserving the first $N/2$ frequencies, can represent very different variation ranges for different graphs, and thus, it is not guaranteed to be always the best choice of low-pass filter.

As an example, let us assume that the frequency range $[0,\lambda_{\max}/2]$ contains more than half the frequencies (eigenvalues). In this case, an ideal low-pass filter passing through the frequency range  $[0,\lambda_{N/2}]$ would eliminate some of the frequencies in that range, since its cutoff frequency will be at $\lambda_{N/2} < \lambda_{\max}/2$. Non-ideal filters, in contrast, can use a more natural range of frequencies in the decomposition because of the overlapping of the frequency responses for the low-pass and high-pass filters. This allows the low-pass channel to include more of the ``natural" low frequencies, i.e., those having lower variation.

To validate the above discussion numerically, the low-pass filtered signals obtained from GraphSS were compared with those given by the ideal low-pass filter based on the frequency values, i.e.,
\begin{equation}
H_{\text{value}}(\bm{\Lambda}) = \text{diag}(\underbrace{1, \ldots, 1}_{\#(\lambda_{\max}/2)}, 0, \ldots, 0),
\end{equation}
where $\#(\lambda_{\max}/2)$ represents the number of eigenvalues smaller than $\lambda_{\max}/2$. In the following, the signals that were low-pass filtered by $H_0(\bm{\Lambda})$ of GraphSS-X ($\text{X} \in \{\text{I}, \text{O}, \text{B}\}$) are specified as $\widetilde{\mathbf{f}}_{0, \text{X}}$, while those filtered with $H_{\text{value}}(\bm{\Lambda})$ are represented as $\widetilde{\mathbf{f}}_{\text{value}}$.

Signals on two sensor networks with $N = 100$ were used in the experiments. These networks had different sensor distributions, as shown in Fig. \ref{fig:filt_comp}(a) and (b). That is, the vertices of the graph shown in Fig. \ref{fig:filt_comp}(a) are randomly distributed, whereas those in Fig. \ref{fig:filt_comp}(b) have a concentrated region at the bottom left. Hence, the eigenvalue distributions of their combinatorial graph Laplacian are different. Specifically, the graph shown in Fig. \ref{fig:filt_comp}(a) has $\lambda_{N/2}=7.89$ and $\lambda_{\max}/2=7.50$, while that in Fig. \ref{fig:filt_comp}(b) has $\lambda_{N/2}=5.73$ and $\lambda_{\max}/2=15.40$, i.e., $\#(\lambda_{\max}/2) > N/2$.

Despite the difference between the eigenvalue distributions, the spectra of both signals are defined similarly on the basis of the frequency value:
\begin{equation}
\widetilde{f}[i] = \exp(\lambda_i/4) + \epsilon,
\end{equation}
where $\epsilon$ is zero-mean i.i.d. Gaussian noise with standard deviation $\sigma = 0.05$.

The original and filtered spectra are shown in Figs. \ref{fig:filt_comp}(c) and (d). The squared differences between $\widetilde{\mathbf{f}}_{\text{value}}$ and $\widetilde{\mathbf{f}}_{0, \text{X}}$, i.e., $E_{\text{X}}[i] := (\widetilde{f}_{\text{value}}[i] - \widetilde{f}_{0, \text{X}}[i])^2$, are also shown in Figs. \ref{fig:filt_comp}(e) and (f) for a clear visualization.

For the signal on the regularly distributed sensor network, all GraphSSs present similar results in their passband, but the non-ideal filters have sidelobes in the transition bands. In contrast, $\widetilde{\mathbf{f}}_{0, \text{I}}$ is far from $\widetilde{\mathbf{f}}_{\text{value}}$ for the graphs with irregularly distributed sensors, whereas the non-ideal filters have smaller maximum errors than those of GraphSS-I in that case. The errors are numerically compared in Table \ref{tb:comp_pw}. For the signal of Fig. \ref{fig:filt_comp}(a), all low-pass filters of GraphSS show comparable errors with respect to $\widetilde{\mathbf{f}}_{\text{value}}$, whereas GraphSS-I has a larger error than those of GraphSS-O and GraphSS-O for the signal of Fig.~\ref{fig:filt_comp}(b).

\begin{table}[tp]
\centering
\caption{Differences between $\widetilde{\mathbf{f}}_{\text{value}}$ and $\widetilde{\mathbf{f}}_{0, \text{X}}$ (Average of 100 Independent Runs)}
\label{tb:comp_pw}
\begin{tabular}{c||c|c} \hline
Graph signal & Fig. \ref{fig:filt_comp}(a) & Fig. \ref{fig:filt_comp}(b) \\\hline
$||\widetilde{\mathbf{f}}_{\text{value}} - \widetilde{\mathbf{f}}_{0, \text{I}}||_2$ & 0.42 & 0.54\\
$||\widetilde{\mathbf{f}}_{\text{value}} - \widetilde{\mathbf{f}}_{0, \text{O}}||_2$ & 0.42 & 0.27\\
$||\widetilde{\mathbf{f}}_{\text{value}} - \widetilde{\mathbf{f}}_{0, \text{B}}||_2$ & 0.49 & 0.23\\
\hline
\end{tabular}
\end{table}

\begin{figure}[t]
\centering
\subfigure[][Signal on sensor network graph]{\includegraphics[width = .48 \linewidth]{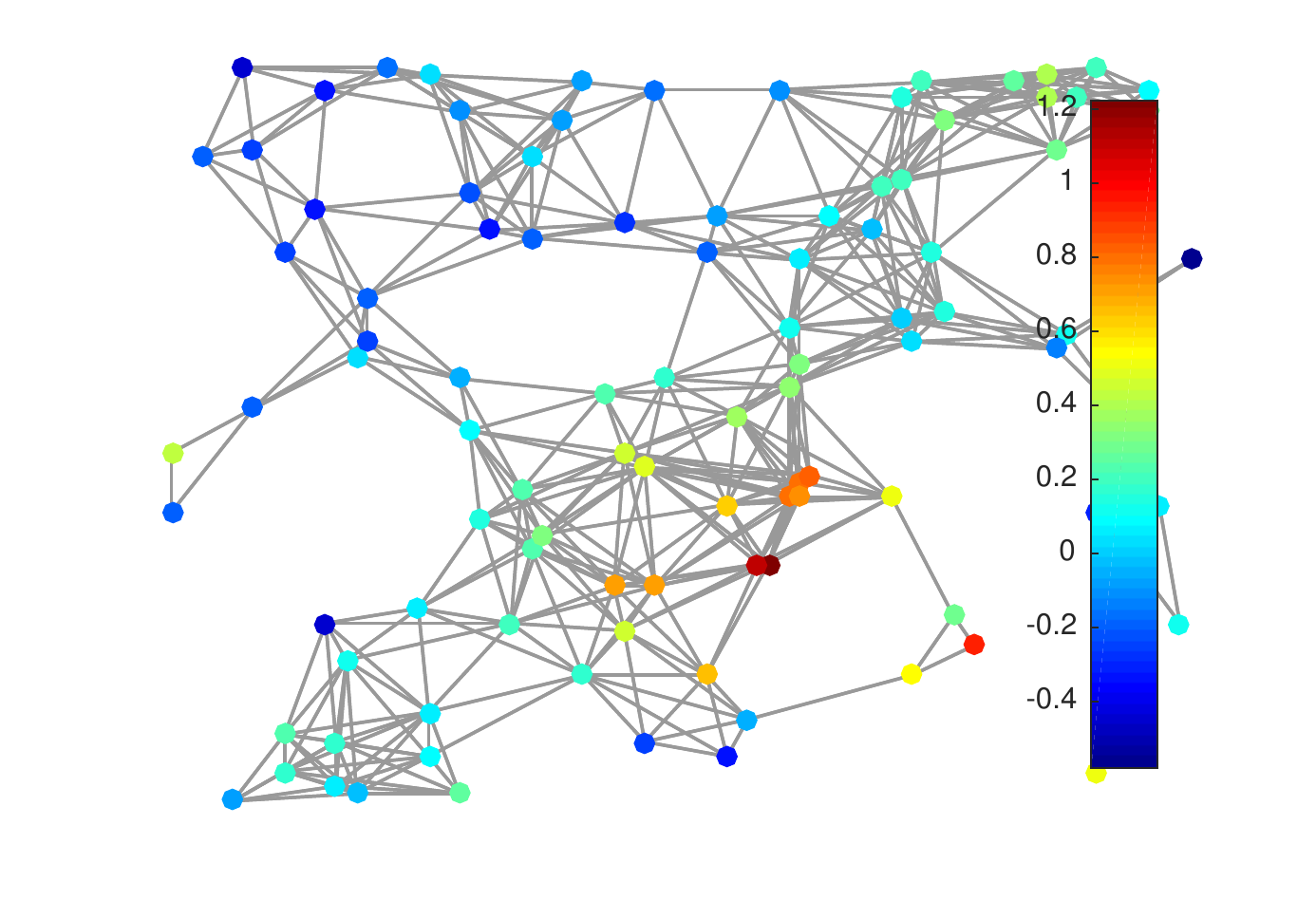}}\ 
\subfigure[][Signal on sensor network graph with concentrated region]{\includegraphics[width = .48 \linewidth]{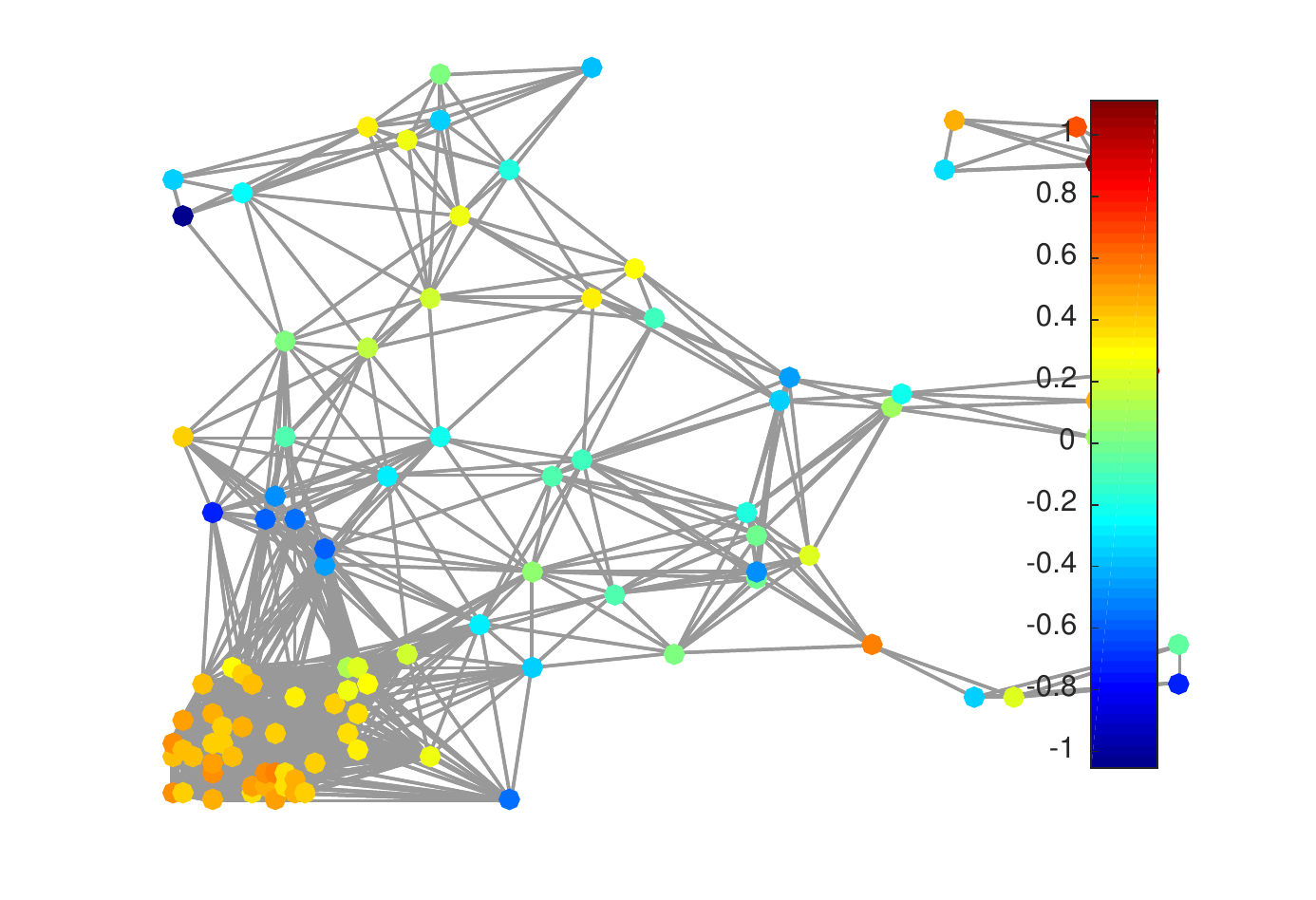}}\\
\subfigure[][Original and filtered spectra of (a)]{\includegraphics[width = .48 \linewidth]{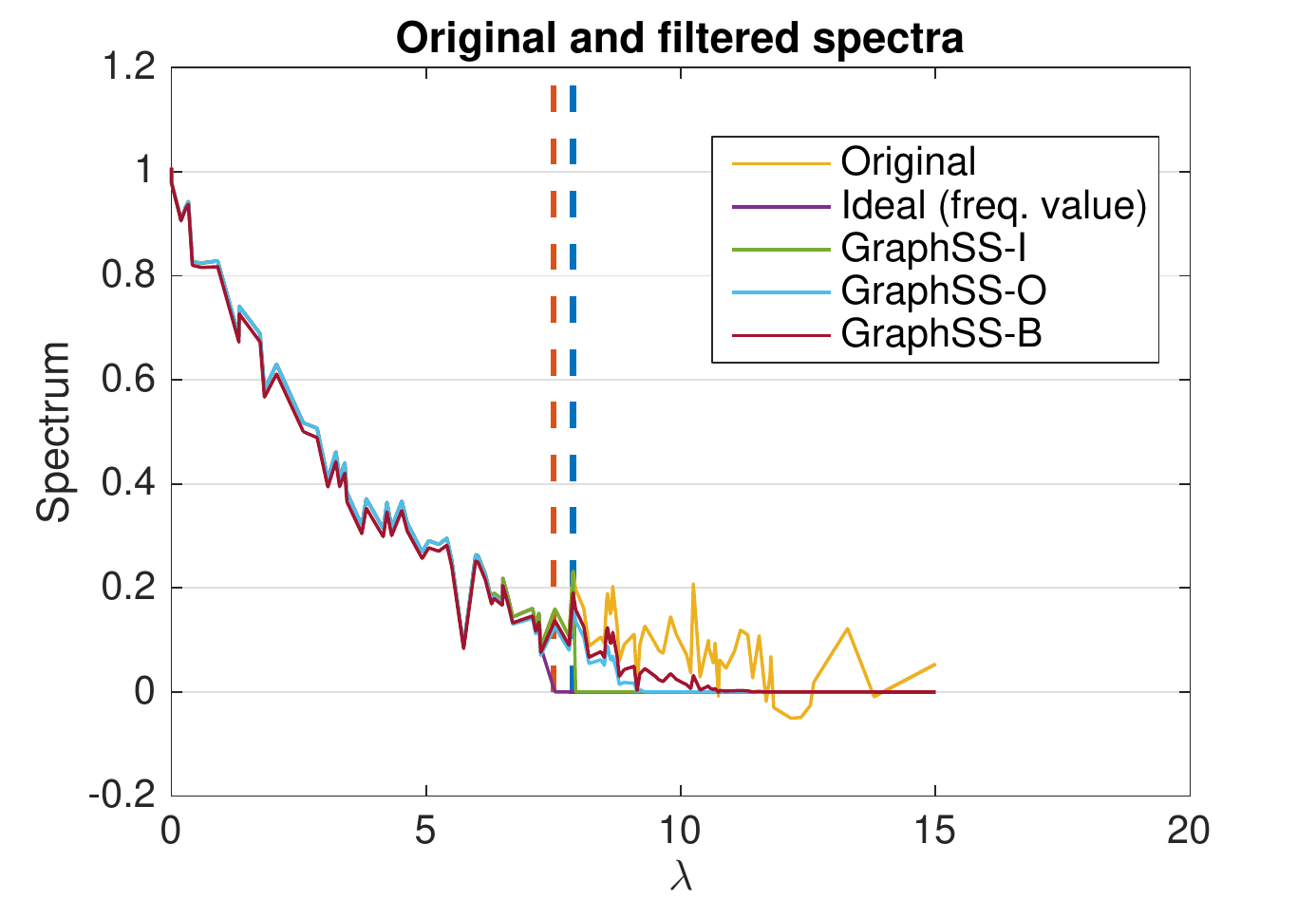}}\ 
\subfigure[][Original and filtered spectra of (b)]{\includegraphics[width = .48 \linewidth]{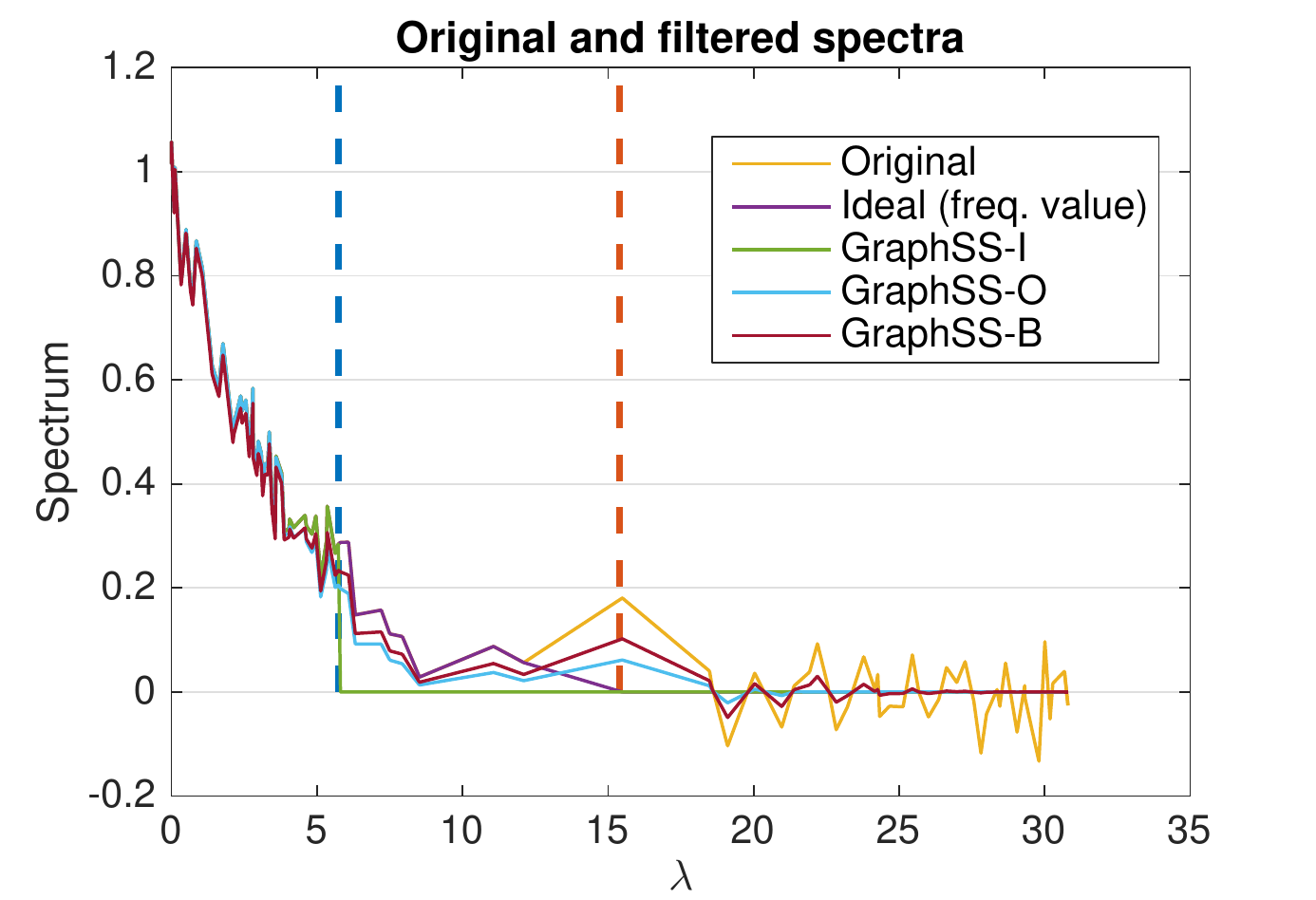}}\\
\subfigure[][Difference between $\widetilde{\mathbf{f}}_{\text{value}}$ and $\widetilde{\mathbf{f}}_{0, \text{X}}$ of (c)]{\includegraphics[width = .48 \linewidth]{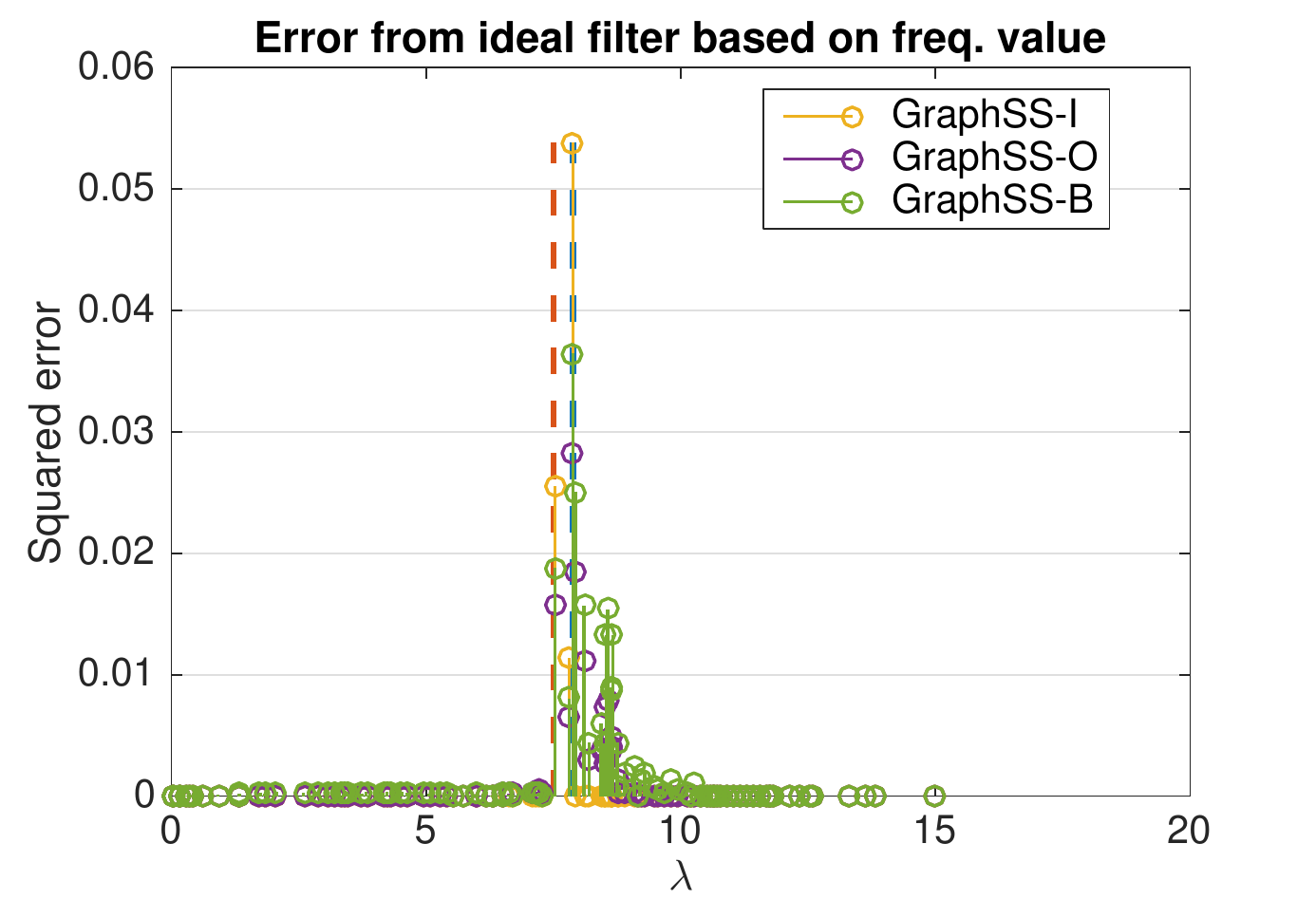}}\ 
\subfigure[][Difference between $\widetilde{\mathbf{f}}_{\text{value}}$ and $\widetilde{\mathbf{f}}_{0, \text{X}}$ of (d)]{\includegraphics[width = .48 \linewidth]{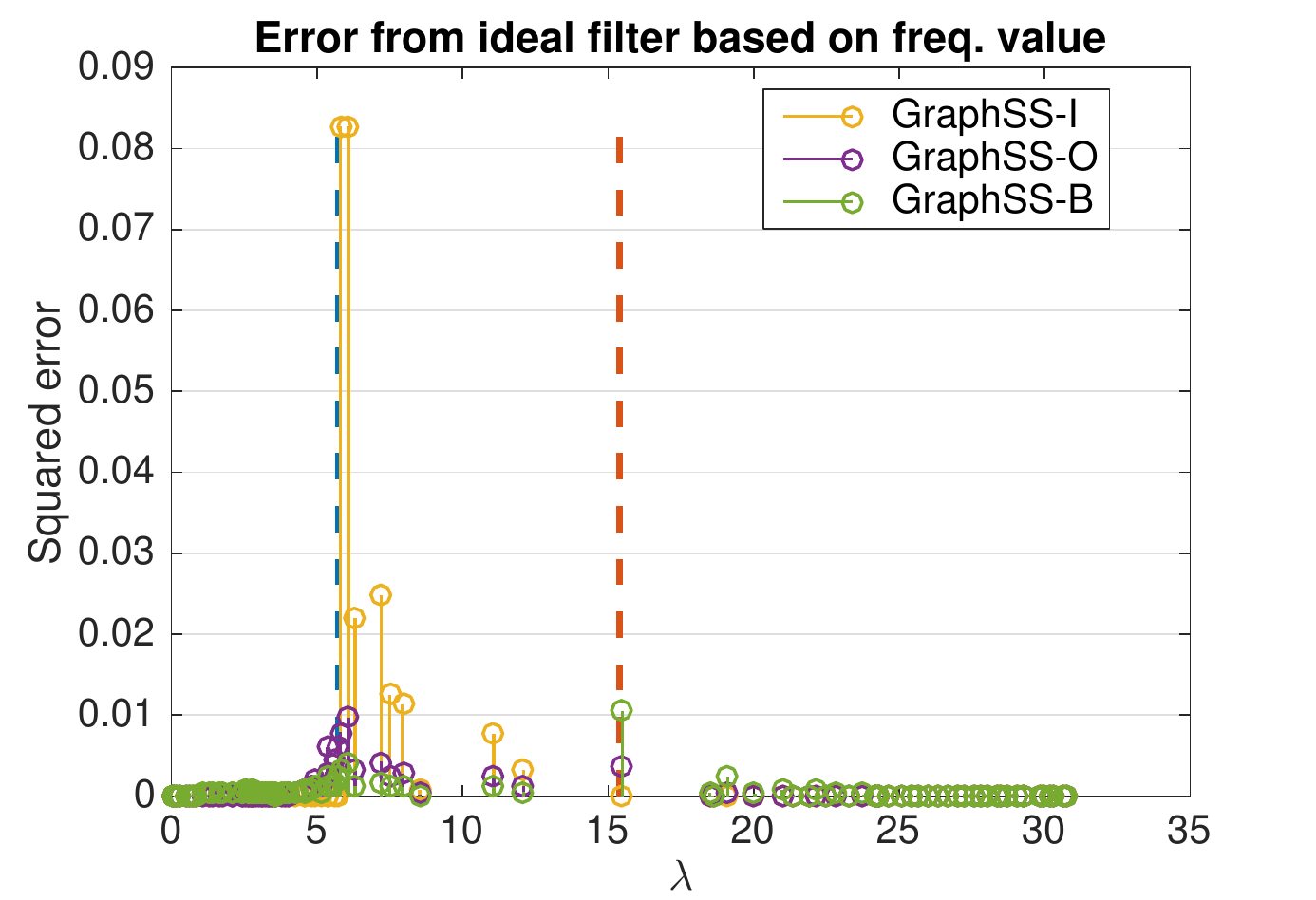}}
\caption{Comparison of filters for GraphSS. The squared error represents $E_{\text{X}}[i] = (\widetilde{f}_{\text{value}}[i] - \widetilde{f}_{0, \text{X}}[i])^2$. Blue and red dashed lines represent $\lambda_{N/2}$ and $\lambda_{\max}/2$, respectively.}
\label{fig:filt_comp}
\end{figure}

\begin{figure}[t]
\centering
\subfigure[][GraphSS-O, $L=1$]{\includegraphics[width = .4 \linewidth]{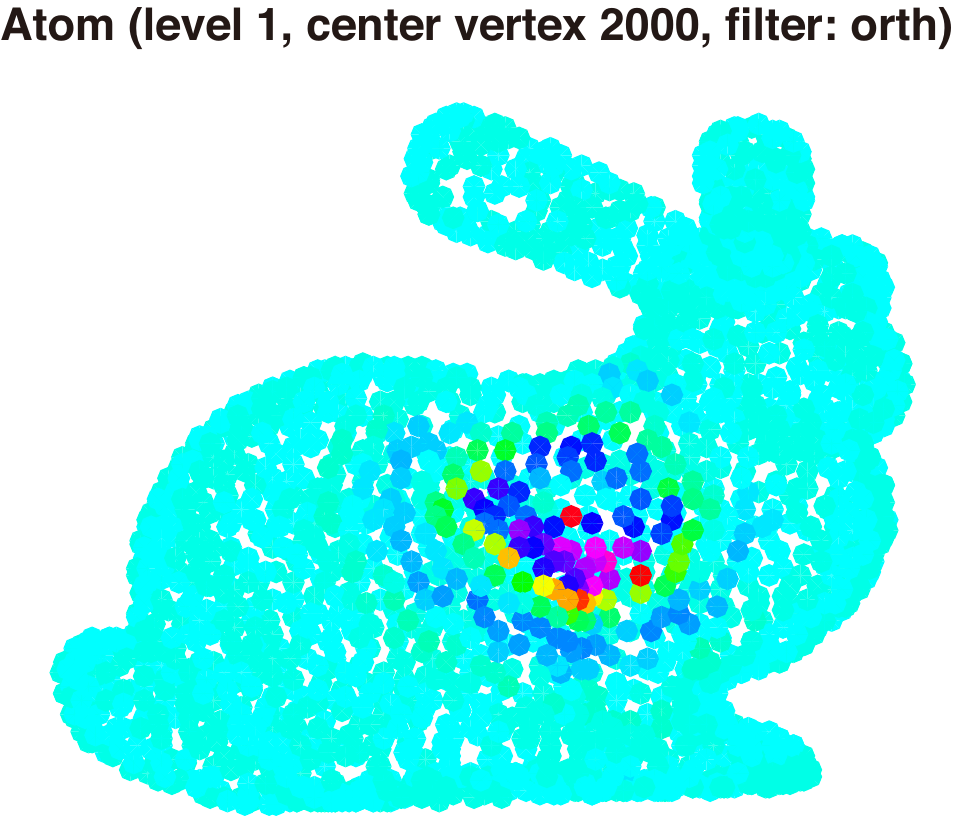}}\quad
\subfigure[][GraphSS-O, $L=3$]{\includegraphics[width = .4 \linewidth]{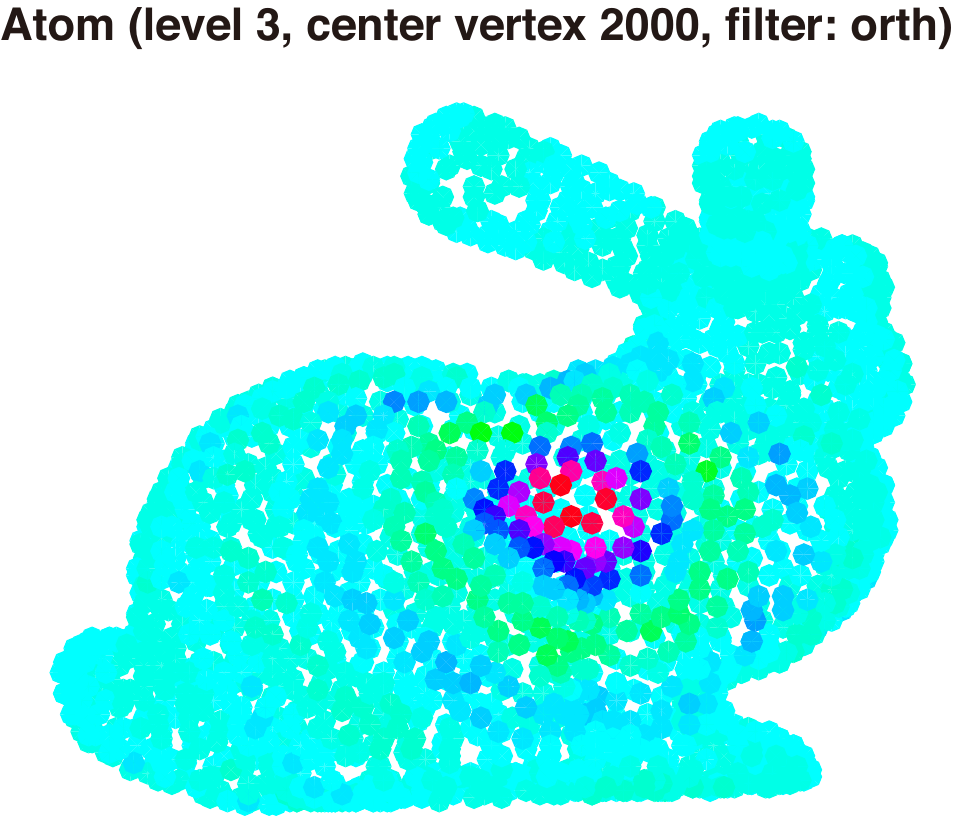}}\\
\subfigure[][GraphSS-B, $L=1$]{\includegraphics[width = .4 \linewidth]{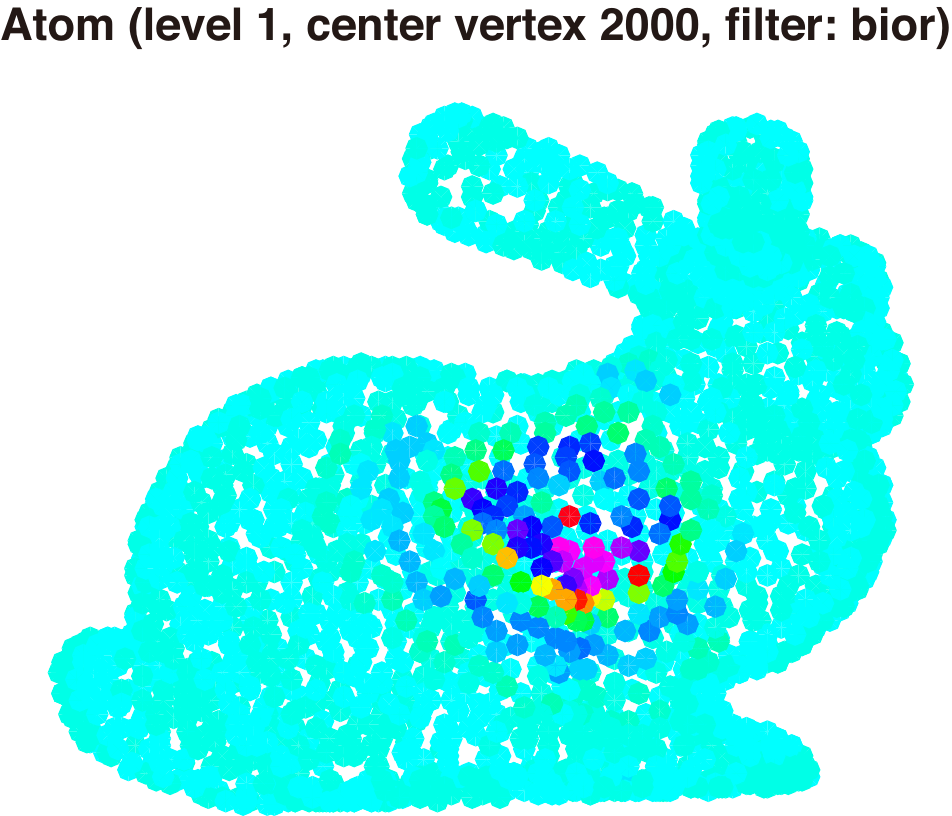}}\quad
\subfigure[][GraphSS-B, $L=3$]{\includegraphics[width = .4 \linewidth]{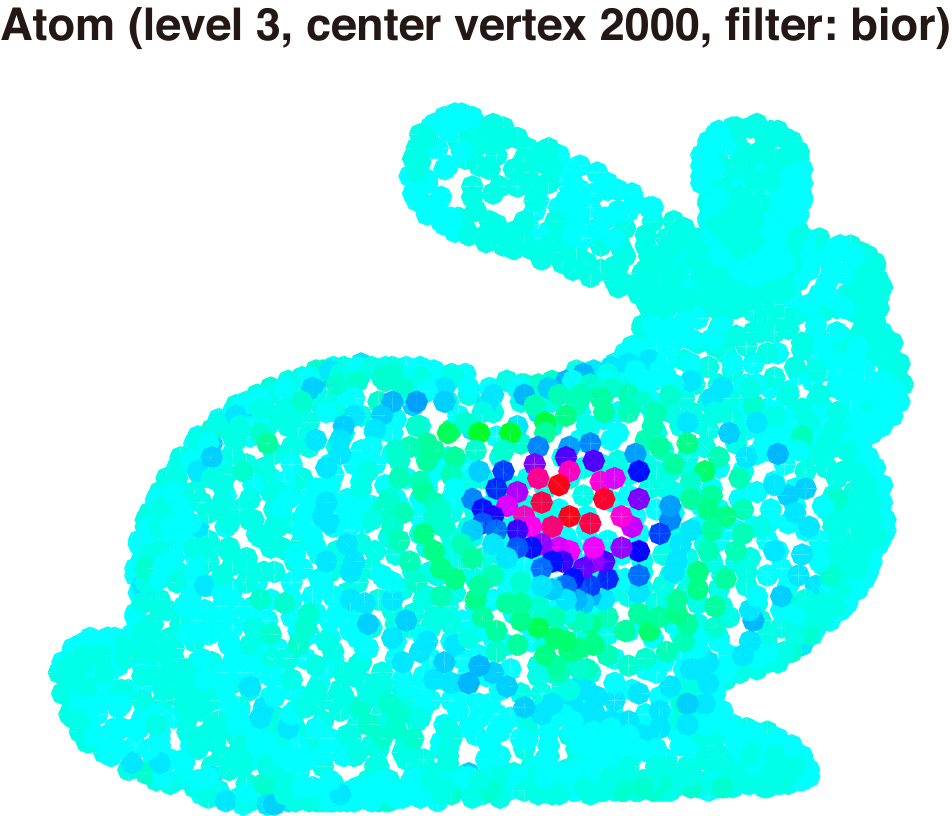}}\\
\subfigure[][GraphSS-I, $L=1$]{\includegraphics[width = .4 \linewidth]{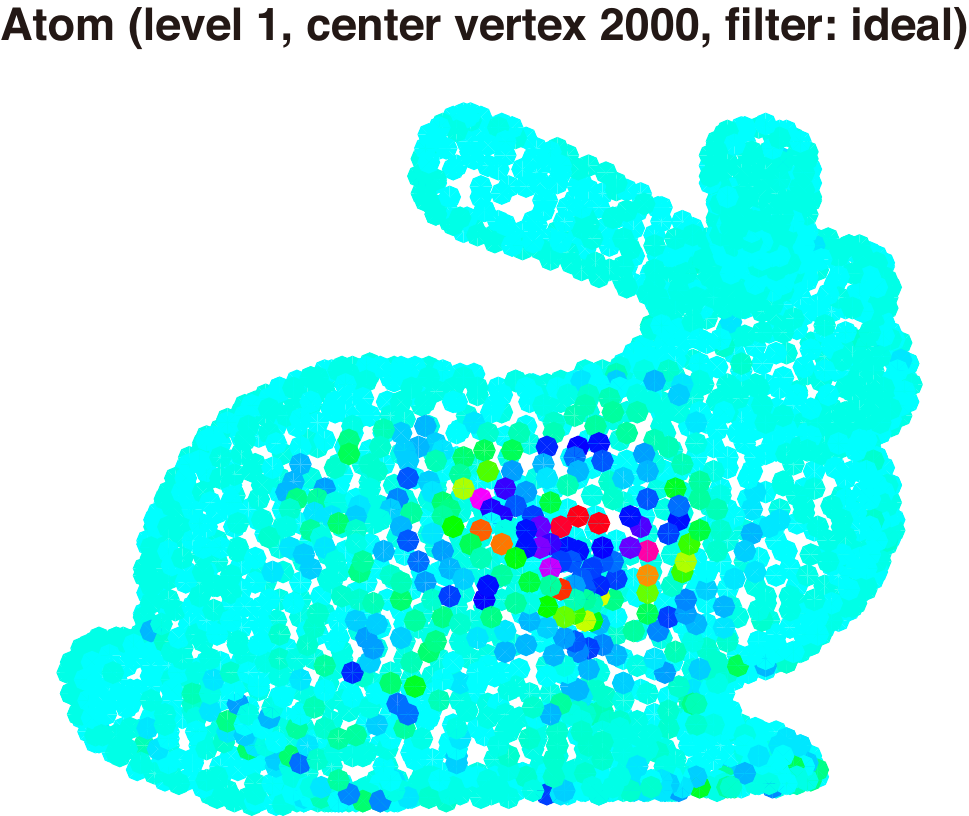}}\quad
\subfigure[][GraphSS-I, $L=3$]{\includegraphics[width = .4 \linewidth]{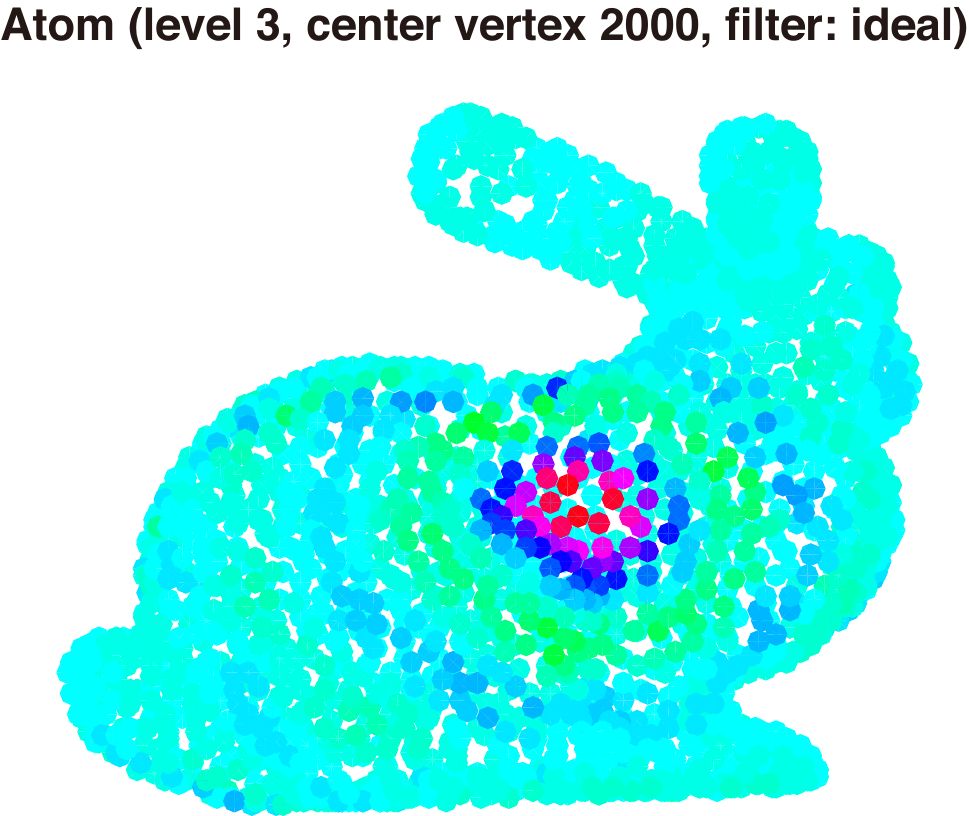}}
\caption{Atoms of low-pass filters for GraphSS on Bunny graph. The center vertex is at $k = 2000$ ($N = 2503$). Left: Atoms for the one-level transforms. Right: Atoms for the three-level transforms. From top to bottom: GraphSS-O, GraphSS-B, GraphSS-I.}
\label{fig:atoms}
\end{figure}

\subsubsection{Filter Localization}
Additionally, we compared the filter localizations in the vertex domain. As described in in Section \ref{subsec:designmethods}, the filters for GraphSS are designed on the basis of frequency values. In general, the filters are global operators in the vertex domain, but their spreads differ depending on the prototype filters.

Fig. \ref{fig:atoms} shows examples of the low-pass filter atoms, in two different scales, at the center vertex $k = 2000$ for the Bunny graph. In the one-level transforms, the atom of GraphSS-I spreads out widely around the center vertex, while those of GraphSS-O and GraphSS-B are more localized. After the three-level transform, all filters have similar spreads in the vertex domain, however, GraphSS-I still has a slightly larger spread than the other two.

The one-level GraphSS-I seems to have a larger spread than the three-level one; This could be due to localizations of the eigenvectors. Different from classical signal processing, the eigenvectors of the variation operator in graph signal processing are sometimes highly localized in the vertex domain \cite{Perrau2018}. In this case, the eigenvectors corresponding to higher graph frequencies (for the one-level transform) would contain large oscillations at vertices apart from the center vertex.

There may be different design methods to accomplish the (approximate) vertex localization for GraphSS; such an investigation would be an interesting topic of study.

\section{Conclusions}\label{sec:VII}
We proposed a new structure of CS GFBs with spectral domain sampling and clarified the perfect reconstruction condition. The structure is a symmetric one like wavelets and filter banks in classical signal processing, but unlike existing GFBs, it enables perfect reconstruction for any graph Laplacians. We also showed the theoretical relationship between the conventional graph wavelets with vertex domain sampling and the proposed ones. In experiments on nonlinear approximation and denoising, our CS GFBs outperformed several other methods. Future work will include devising a fast computation method, $M$-channel filter bank design, and dictionary learning with spectral domain sampling.

\section*{Acknowledgments}
MATLAB code examples are available at http://tanaka.msp-lab.org/software.



\end{document}